\documentclass{article}

\usepackage{iclr2027_conference,times}

\usepackage[utf8]{inputenc}
\usepackage[T1]{fontenc}
\usepackage{graphicx}
\usepackage{hyperref}
\usepackage{url}
\usepackage{booktabs}
\usepackage{amsfonts}
\usepackage{nicefrac}
\usepackage{microtype}

\IfFileExists{headers/config/showoverfull.config}{
	\overfullrule=1cm
}{
}

\usepackage{marginnote}

\usepackage[backgroundcolor=none,linecolor=red,textsize=footnotesize]{todonotes}

\usepackage{etoolbox}

\newbool{includeappendix}
\setbool{includeappendix}{true} %
\IfFileExists{headers/config/noappendix.config}{
	\setbool{includeappendix}{false}
}{}

\newif\ifincludeappendixx
\ifbool{includeappendix}{
	\includeappendixxtrue
}{
	\includeappendixxfalse
}

\usepackage{xr} %
\usepackage{filecontents}

\ifbool{includeappendix}{}{

	\externaldocument{appendix-labels}
}

\usepackage{xspace}

\newcommand{\eg}{e.g., }
\newcommand{\ie}{i.e., }

\newcommand{\nth}[1]{\ensuremath{#1^\text{th}}}

\usepackage{acro} %

\usepackage{listings}

\usepackage{textcomp}

\usepackage{xcolor}

\usepackage[scaled=0.8]{beramono}

\definecolor{ckeyword}{HTML}{7F0055}
\definecolor{ccomment}{HTML}{3F7F5F}
\definecolor{cstring}{HTML}{2A0099}

\lstdefinestyle{numbers}{
	numbers=left,
	framexleftmargin=20pt,
	numberstyle=\tiny,
	firstnumber=auto,
	numbersep=1em,
	xleftmargin=2em
}

\lstdefinestyle{layout}{
	frame=none,
	captionpos=b,
}

\lstdefinestyle{comment-style}{
	morecomment=[l]//,
	morecomment=[s]{/*}{*/},
	commentstyle={\color{ccomment}\itshape},
}

\lstdefinestyle{string-style}{
	morestring=[b]",%
	morestring=[b]',%
	stringstyle={\color{cstring}},
	showstringspaces=false,%
}

\lstdefinestyle{keyword-style}{
	keywordstyle={\ttfamily\bfseries},
	morekeywords={
		function,
		constructor,
		int,
		bool,
		return,
		returns,
		uint
	},
	morekeywords = [2]{},
	keywordstyle = [2]{\text},
	sensitive=true,
}

\lstdefinestyle{input-encoding}{
	inputencoding=utf8,
	extendedchars=true,
	literate=
	{ℝ}{$\reals$}1%
	{→}{$\rightarrow$}1%
	{α}{$\alpha$}1%
	{β}{$\beta$}1%
	{λ}{$\lambda$}1%
	{θ}{$\theta$}1%
	{ϕ}{$\phi$}1%
}

\lstdefinestyle{escaping}{
	moredelim={**[is][\color{blue}]{\%}{\%}},
	escapechar=|,
	mathescape=true
}

\lstdefinestyle{default-style}{
	basicstyle=\fontencoding{T1}\ttfamily\footnotesize,
	style=numbers,
	style=layout,
	style=comment-style,
	style=string-style,
	style=keyword-style,
	style=input-encoding,
	style=escaping,
	tabsize=2,
	upquote=true
}

\lstdefinelanguage{BASIC}{
	language=C++,
	style=default-style
}[keywords,comments,strings]%

\usepackage{algorithm}
\usepackage{algpseudocode}
\usepackage{longtable}
\usepackage{colortbl}
\usepackage{wrapfig}
\usepackage{needspace}
\makeatletter
\newcommand\wrapfill{\par
  \ifx\parshape\WF@fudgeparshape
    \nobreak\vskip-\baselineskip\vskip\c@WF@wrappedlines\baselineskip\allowbreak\WFclear
  \fi}

\makeatother
\usepackage{capt-of}

\usepackage{tikz}

\usetikzlibrary{arrows}
\usetikzlibrary{automata}
\usetikzlibrary{calc}
\usetikzlibrary{backgrounds}
\usetikzlibrary{decorations.markings}
\usetikzlibrary{decorations.pathmorphing}
\usetikzlibrary{decorations.pathreplacing}
\usetikzlibrary{fit}
\usetikzlibrary{patterns}
\usetikzlibrary{positioning}
\usetikzlibrary{shadows}
\usetikzlibrary{shapes}
\usetikzlibrary{shapes.geometric}

\usetikzlibrary{arrows.meta,svg.path}
\usepackage{fontawesome5} %
\usepackage{dsfont} %
\usepackage{amsmath} %
\newcommand{\indicator}{\mathds{1}}
\newcommand{\astra}{\textsc{GPT-6 Astra}\xspace}
\newcommand{\opus}{\textsc{Opus 5}\xspace}
\newcommand{\gemini}{\textsc{Gemini-3.8 Flash}\xspace}
\usepackage{amsthm}
\newtheorem{proposition}{Proposition}
\newtheorem{lemma}{Lemma}

\usepackage[most]{tcolorbox}
\usepackage{enumitem}
\definecolor{PromptAccent}{HTML}{1F77B4}
\newtcolorbox{promptbox}[1][]{%
  enhanced,
  breakable,
  colback=white,
  colframe=black!16,
  boxrule=0.55pt,
  arc=1.6mm,
  outer arc=1.6mm,
  left=10pt,right=10pt,
  top=8pt,bottom=8pt,
  borderline west={2.2pt}{0pt}{PromptAccent},
  drop shadow={black!12},
  fonttitle=\sffamily\bfseries\footnotesize,
  coltitle=PromptAccent!90!black,
  colbacktitle=PromptAccent!10,
  title={#1},
  attach boxed title to top left={xshift=10pt,yshift*=-2mm},
  boxed title style={
    enhanced,
    arc=1.2mm,
    boxrule=0.35pt,
    colframe=PromptAccent!35,
    colback=PromptAccent!10,
    left=6pt,right=6pt,top=2pt,bottom=2pt
  },
  before skip=10pt,
  after skip=10pt
}

\renewcommand{\S}{Sec.~}

\usepackage[capitalize,noabbrev]{cleveref}

\crefformat{section}{\S#2#1#3}

\crefrangeformat{section}{\S#3#1#4\crefrangeconjunction\S#5#2#6}

\crefmultiformat{section}{\S#2#1#3}{\crefpairconjunction\S#2#1#3}{\crefmiddleconjunction\S#2#1#3}{\creflastconjunction\S#2#1#3}

\newcommand{\crefrangeconjunction}{--}

\crefname{listing}{Lst.}{listings}
\crefname{line}{Lin.}{Lin.}
\crefname{appendix}{App.}{App.}

\newcommand{\appref}[1]{%
	\ifbool{includeappendix}{\cref{#1}}{the appendix}%
}
\newcommand{\Appref}[1]{%
	\ifbool{includeappendix}{\cref{#1}}{The appendix}%
}

\newcommand{\OurTitle}{AutoMark: Enabling Autoresearch to\\Discover Better LLM Watermarks}

\title{\OurTitle{}}

\author{Thibaud Gloaguen, Robin Staab, Martin Vechev \\
ETH Zurich \\
\texttt{thibaud.gloaguen@inf.ethz.ch}}

\iclrfinalcopy

\begin{document}

\maketitle

\begin{abstract}
  With LLM watermarking being deployed commercially and now required by regulations, improving its reliability and effectiveness has become crucial.
Yet, recent progress in the field of LLM watermarking has increasingly been driven by improving details of existing methods, an effort fundamentally limited by the pace of human researchers.
In this work, we enable for the first time the autonomous discovery of new distortion-free state-of-the-art watermarking schemes.
To enable this, we (i) establish strict criteria to ensure that watermarks are reliable (e.g., they do not have an unexpectedly high false positive rate), (ii) propose rigorous statistical tests to automatically evaluate whether a watermarking scheme satisfies our criteria, and (iii) design an evaluation suite to rank watermarks along three key dimensions: detectability, quality, and robustness.
By running our framework with 3 frontier models (GPT-6 Astra, Opus 5, Gemini-3.8 Flash), we discover over 50 different watermarking schemes, including several that outperform prior works along all key dimensions.
We complement this by a manual study of the discovered schemes, distilling the key ideas into smaller components, and individually studying the impact of each component across dimensions (detectability, quality, robustness) to better understand how the proposed schemes operate.
Importantly, we find that the agents, on top of improving existing ideas, also discover fundamentally new ideas (e.g., aligning watermark scores with random per-request direction).
Overall, our work establishes the first steps of fully autonomous watermarking research, enabling the discovery of more reliable and effective watermarks.
Our code is available \href{https://github.com/eth-sri/automark}{here}, and a blogpost to visualize our results \href{https://www.sri.inf.ethz.ch/blog/automark}{here}.

\end{abstract}

\suppressfloats[t]
\begin{figure}[t]
    \centering
    \vspace{-0.5em}
    \resizebox{\linewidth}{!}{\input{figures/accept/figure1_tikz}}
    \vspace{-1em}
    \caption{Overview of our work. In an isolated sandbox, an agent iterates on a watermarking scheme. Submissions are first verified and, upon passing, evaluated and added to a leaderboard. We then manually decompose the discovered schemes into 5 components and ablate them for insights. Lastly, we evaluate two selected schemes on a held-out test set and find that they outperform prior methods.}
    \label{fig:automark-overview}
    \vspace{-0.5em}
\end{figure}

\section{Introduction}
\label{sec:introduction}

Large Language Model (LLM) watermarking, which embeds a signal invisible to humans in model outputs, has emerged as the standard for tracing AI-generated text: regulators mandate it~\citep{euiact}, and major providers deploy it~\citep{synthid, anthropic}.
This makes research on new watermarking schemes crucial as they have to be reliable enough to prevent false accusations and robust enough to prevent evasion across real-world deployments.
Recent progress has been steady but limited with improvements mostly driven by refining existing designs, \eg with entropy weighting~\citep{sweet}, or key switching~\citep{textseal}, and the pace of iteration capped by that of human researchers.
Autonomous research agents are thus a natural and promising direction for further accelerating LLM watermarking research.

\paragraph{Autoresearch}
Autoresearch systems let agents propose, implement, and evaluate ideas in a loop~\citep{funsearch, alphaevolve}, and have already led to scientific breakthroughs~\citep{openai2026euler}.
Fields with readily verifiable outputs, such as mathematics with Lean-generated proofs, are especially well suited to this approach.
Watermarking can also benefit from autoresearch, but its evaluation requires more care: useful properties such as detectability, robustness, and quality trade off, so improvement on a single metric need not represent a better watermark.
Also, agents may exploit gaps in the evaluation harness (\ie reward hacking), \eg artificially inflating detectability/robustness by increasing the rate of falsely flagged human texts.
A separate concern is that agents may produce results faster than humans can understand them, especially when their explanations are poor, making the resulting solutions difficult to trust.
In this work, we show that autoresearch, when carefully executed with an adequate harness and methodology, discovers watermarks that are trustworthy, explainable, and Pareto-superior to existing watermarking schemes.

\paragraph{This work: autoresearch for LLM watermarks}
To enable autoresearch for watermarking, we mathematically formalize what makes a watermark valid~(\cref{sec:method:definition_validity}) using two criteria: soundness in expectation over the private key, and, given a fixed text distribution, the probability of sampling a key that leads to false flags should be small.
Given these criteria, we design our autoresearch harness illustrated in~\cref{fig:automark-overview}.
An agent is fully isolated in an environment and discovers new watermarking schemes~(left).
It can submit its scheme to a \emph{private} verification pipeline~(middle) that (i) tests whether the proposed scheme violates our criteria (using tailored tests we designed in \cref{sec:method:verification}), and, upon passing, (ii) evaluates the schemes along three key axes (detectability, robustness, and quality).

Running our harness on three state-of-the-art agents, \astra, \opus, and \gemini, we generated over 50 schemes, with some outperforming prior work on all our evaluated axes.
To understand the key insights of these schemes, we manually classify them all into five atomic components~(\cref{fig:automark-overview}, right) and ablate their individual effects~(\cref{sec:eval:directions,app:components}).
We also select two schemes, ConcordMark and RotationFirst, that we evaluate on a held-out evaluation set~(\cref{sec:eval:best}).
We find that ConcordMark achieve state-of-the-art robustness and is Pareto-superior to all evaluated prior work, whereas RotationFirst has performance similar to prior work at much higher text quality.

\paragraph{Main contributions}
Our main contributions are:
\begin{itemize}
    \item We formalize the criteria for reliable watermarking, and design statistical tests that automatically detect violations (\cref{sec:method:definition_validity,sec:method:verification}).
    \item We build an autoresearch harness for LLM watermarking around this verification (\cref{sec:method:framework}).
    \item We show that agents running our harness push the Pareto frontier of LLM watermarks, with two new schemes  that outperform prior work on a held-out evaluation (\cref{sec:eval:trajectories,sec:eval:best}).
    \item We distill the agents' schemes into 39 new components, ablate each of them, and identify fundamentally new watermarking ideas (\cref{sec:eval:directions}).
\end{itemize}

\section{Related Work}
\label{sec:related}

\paragraph{LLM watermarking}
LLM watermarks modify the sampling procedure of an LLM to embed a (key-dependent) signal, that a later statistical test can detect.
Red-Green watermarks~\citep{kgw} achieve this by distorting the next-token distribution through a pseudo-random green token bias.
In contrast, recent schemes, including all in this work, are distortion-free, \ie they preserve the next-token distribution in expectation over the key.
AAR~\citep{aar} uses the Gumbel-max trick, KTH~\citep{kth} inverse-transform sampling, DiPMark and MCMark~\citep{dipmark,mcmark} reweight the distribution based on score-rankings, and SynthID~\citep{synthid}, deployed in production, uses tournament sampling.
Most recently, TextSeal~\citep{textseal} extends Gumbel-max sampling with dual-key generation and entropy-weighted detection.

\paragraph{Watermark properties}
Beyond detectability, benchmarks and toolkits~\citep{markmywords,waterbench,markllm} evaluate watermarks along several, often competing, properties.
\emph{Robustness} measures whether the watermark remains detectable after the text is edited, \eg by word substitutions, translation, or paraphrasing~\citep{kgw2,kth}.
\emph{Quality} measures the impact of the watermark on the generated text, \eg on perplexity or downstream tasks~\citep{waterbench}.
Distortion-free schemes preserve the distribution of each generation, but, as a repeated context always yields the same scores, they reduce output diversity across generations~\citep{unified_wm_framework} which noticeably affects quality.
\emph{Security} measures whether an adversary can learn the watermark rules to spoof or scrub it~\citep{ws}, and is outside the scope of this work.
Lastly, \emph{reliability} requires the false positive rate of the detector to be controlled, \eg \citet{threebricks} show that the z-tests used in early works underestimate the false positive rate, and propose exact tests with de-duplication of repeated (context, token) pairs.
Yet, in prior works, these guarantees hold only in expectation over the private key: a specific deployed key can still exhibit a high false positive rate.

\paragraph{Autoresearch}
Program search systems~\citep{funsearch,alphaevolve,shinkaevolve} let LLMs iteratively propose programs scored by an automatic evaluator, leading to new results in verifiable domains such as algorithm design and mathematics~\citep{openai2026euler}.
Other works automate the entire research loop, from generating ideas to writing papers~\citep{ai_scientist}, discovering new training objectives~\citep{discopop}, or improving LLM training~\citep{autoresearch}, and benchmarks measure the research capabilities of agents~\citep{mlebench,rebench}.
Yet, all require a strict evaluator as otherwise agents may exploit loopholes, \eg by modifying tests instead of solving the task~\citep{impossiblebench}.
To our knowledge, we are the first to propose an automated pipeline to verify and evaluate watermarks, enabling autoresearch for LLM watermarking.

\section{Method}
\label{sec:method}

In this section, we introduce the necessary background on LLM watermarking~(\cref{sec:method:preliminaries}), explain our design for automatically verifying watermarking schemes~(\cref{sec:method:verification}), and finally detail our autoresearch framework~(\cref{sec:method:framework}).

\subsection{Preliminaries on LLM Watermarks}
\label{sec:method:preliminaries}

In this part, we describe the four key components of most watermarking algorithms.

\paragraph{Mathematical description}
Let $\Sigma$ be the finite vocabulary, and let $\omega \in \Sigma^*$ be a finite sequence of tokens.
At each step $t$ of autoregressive generation, the LLM returns $p_t \in \Delta(\Sigma)$, the conditional probability distribution of the next token given $\omega_{<t}$.
Without a watermark, the next token is sampled according to $p_t$.
With a watermark, we assume the existence of a sequence $(U_{a_t})$ of pseudo-random vectors in $[0,1]^{|\Sigma|}$ whose entries $U_{a_t,v}$ are i.i.d. uniform variables seeded by the context hash $a_t$, and a watermark function $f: [0,1]^{|\Sigma|}\times \Delta(\Sigma) \rightarrow \Delta(\Sigma)$ that transforms the next-token probability distribution $p_t$ into a watermarked probability distribution $f(\tilde{U}_{a_t},p_t)$.
We refer to $f$ as the \emph{logits transformation} and $(U_{a_t})$ as the \emph{watermark score}, where $(\tilde{U}_{a_t})$ are i.i.d. uniform variables derived from $(U_{a_t})$ and in most cases $\tilde{U}_{a_t}=U_{a_t}$.
We give concrete examples in \cref{app:setup:baselines}.
The next token $\omega_t$, sampled according to $f(\tilde{U}_{a_t},p_t)$, will be correlated with the sequence $(U_{a_t})$.

For detection, given any text $\omega \in \Sigma^*$, we recover the corresponding scores $U_{a_t,\omega_t}$ at retained positions $t \in D$, with $N \mathrel{:=} |D|$, after deduplicating repeated (seed, token) pairs.
By assumption, under the null hypothesis (\ie the text was generated without knowledge of the watermark), the retained scores $U_{a_t,\omega_t}$ are i.i.d. uniform random variables, whereas when the watermark is used, they are not.
We can therefore build a statistic and a corresponding statistical test to detect the watermark, \ie a \emph{detector}.

\paragraph{Additional requirements}
Furthermore, we require the watermark to be distortion-free,
\begin{equation}
	\label{eq:distortion_free}
	\forall p \in \Delta(\Sigma), \mathbb{E}_{\tilde{U}}[f(\tilde{U},p)] = p.
\end{equation}
This property ensures that a watermark has minimal impact on quality: with perfect randomness, the quality of the watermarked model is indistinguishable from its base model.
Yet, in practice, as we explain below we can still observe mild quality degradation within distortion-free watermarks.

\paragraph{Practical considerations}
In practice, the sequence $(U_{a_t})$ needs to be computed from the sequence of tokens $\omega$ so that it can be recovered at detection time.
A typical way to derive $(U_{a_t})$ is to use a hash of the preceding tokens and a private watermark key $\xi$ to obtain a seed $a_t$ for a pseudorandom number generator~(\cref{app:hash_function}).
The resulting scores $U_{a_t,v}$ remain approximately i.i.d. across distinct (seed, token) pairs and can be derived directly from the text given the private key.
We refer to this as \emph{context seeding}.
Seeding creates some practical problems: in case of a repeated (seed, token) pair, the watermark scores are no longer i.i.d., making detection unsound under the null.
For a single request, we use a cache mechanism: if a (seed, token) pair is repeated we no longer apply the watermark.
For repeated requests however, we cannot apply this cache as in the limit, the watermark is no longer applied.
This lack of i.i.d. has also direct implications for the distortion-free property~(\cref{eq:distortion_free}) which in practice is therefore also violated across a larger corpus with repeated tokens.
Hence, even for theoretically distortion-free schemes, we have to measure their impact on quality.

\subsection{What Defines a Valid Watermarking Scheme}
\label{sec:method:definition_validity}

To automatically verify watermarking scheme validity at scale, we first need to formally define what makes a watermarking scheme valid.
In particular, we identify two key criteria: \emph{soundness over the key} and \emph{soundness given a key}.
Though our definitions build upon prior work, we find in \cref{app:soundness_key} that several existing schemes fail our more stringent validity checks.
In \cref{app:invalid}, we further motivate our criteria by observing the failure modes of schemes designed by autoresearch without these safeguards.

\paragraph{Soundness over the key}
Soundness over the key is the property that, in expectation over the watermarking key $\xi$, the detector is a valid statistical test.
Formally we require,
\begin{equation}
	\label{eq:soundness_over_key}
	\forall \omega \in \Sigma^*, \forall \alpha \in [0,1], \quad \Pr_{\xi}[P_{\xi}(\omega) \le \alpha] \le \alpha,
\end{equation}
where $P_{\xi}(\omega)$ is the p-value returned by the detector.
Most prior watermarking schemes satisfy this property, which distinguishes statistical watermarks from zero-shot LLM-text detectors. %
However, this property alone does not guarantee that the watermarking scheme is reliable.

For instance, consider a deliberately naive watermarking scheme: using the private key $\xi$, we randomly partition the vocabulary into a hundred classes and declare a text watermarked if its last token lies in the first class.
Averaged over the key, this occurs with probability at most $1/100$, so the scheme satisfies \cref{eq:soundness_over_key}.
Yet its FPR can be high for a specific key: if, for some $\xi$, \textrm{"."} lies in the first class, most human texts will be classified as watermarked.
Thus, while \cref{eq:soundness_over_key} bounds the FPR in expectation over the private key, it does not ensure a well-behaved FPR for each fixed key.

\paragraph{Soundness given a key}
We therefore introduce a novel second requirement for watermarking schemes: soundness given a key.
Ideally, we want to require that for a human text distribution and a p-value threshold $\alpha$, the false positive rate is bounded by $\alpha$.
This is arguably impractical as it would require precisely specifying the human text distribution, which, e.g., noticeably differs between languages.
Instead, we move the randomness to the key: we require that, for \emph{every} distribution $\mathcal{H}_0$ over texts, the probability of sampling a key whose FPR on $\mathcal{H}_0$ exceeds $\alpha$ is at most $\beta$.
\begin{equation}
	\label{eq:soundness_given_key}
	\forall \mathcal{H}_0 \in \Delta(\Sigma^*),\ \forall \alpha \in [0,1], \quad
	\Pr_{\xi}\big[ \Pr_{\Omega \sim \mathcal{H}_0} [P_{\xi}(\Omega) \le \alpha] > \alpha \big] \le \beta.
\end{equation}
For $\beta = 5\%$, if a watermarking scheme satisfies \cref{eq:soundness_given_key}, then for any fixed human text distribution, the probability of sampling a key for which the detector is not well calibrated (on that distribution) is at most 5\%.
Interestingly, as we show in \cref{app:proofs}, a scheme satisfying \cref{eq:soundness_given_key} can be derived from a scheme satisfying \cref{eq:soundness_over_key} at the cost of its power:
The same scheme with its p-value multiplied by $1/\beta$ also satisfies \cref{eq:soundness_given_key} (while still satisfying \cref{eq:soundness_over_key}).

\subsection{Automatic Verification}
\label{sec:method:verification}

Now we describe how we automatically test whether a proposed watermarking scheme violates our criteria~(\cref{sec:method:definition_validity}).
To this end, we design three statistical tests to test whether a scheme is distortion-free~(\cref{eq:distortion_free}), sound over the key~(\cref{eq:soundness_over_key}), and sound given a key~(\cref{eq:soundness_given_key}).
\begin{enumerate}[leftmargin=*]
	\item \textbf{Distortion-freeness}~(\cref{alg:distortion_free_test}): for given unwatermarked next-token distributions, we average the watermarked distributions over independently sampled private keys and test whether it significantly differs from the original.
	      We check both individual probability shifts and smaller shifts spread across tokens, using independent samples to identify and test a potential bias direction.
	      If either test detects a significant deviation after multiple testing corrections, the scheme is rejected.
	\item \textbf{Soundness over the key}~(\cref{alg:soundness_over_key_test}): for a given set of non-watermarked texts, the test verifies for independently sampled private keys whether the empirical false positive rate is abnormally high.
	      If it is, it means that the scheme likely does not satisfy~\cref{eq:soundness_over_key} and thus is rejected.
	\item \textbf{Soundness given a key}~(\cref{alg:soundness_given_key_test}): for independently sampled private keys, we sample a separate set of non-watermarked texts (human or LLM-generated) per key and screen for an abnormally high empirical false positive rate.
	      We then test whether too many keys screen positive, accounting for both the allowed fraction $\beta$ of bad keys and the probability that a sound key screens positive.
	      If they do, the scheme likely violates~\cref{eq:soundness_given_key} and is rejected.
\end{enumerate}

\subsection{A Framework for Watermark Autoresearch}
\label{sec:method:framework}

Next, we describe our harness designed around three components: a dedicated vLLM watermarking library, a coding agent environment, and the automatic verification \& evaluation pipeline~(\cref{sec:method:verification}).

\paragraph{Research harness}
We illustrate the autoresearch loop in \cref{fig:automark-overview}.
Our watermarking library is designed around four primary components: the context seeding, the watermark score, the logits transformation, and the detector, following the formalization described in~\cref{sec:method:preliminaries}.
A watermark is an implementation of all 4 components.
We defer additional design choices to~\cref{app:implementation_speed}.

The research environment is initialized with the watermarking library and the baseline evaluations.
The agent is tasked with proposing a watermarking algorithm that outperforms prior schemes in several key aspects: raw detectability and robustness to several attacks (\eg word substitution, word deletion, paraphrasing, and backtranslation) while remaining distortion-free and statistically sound.
We detail the initial prompt in \cref{app:prompt:agent}.
We also give the agent a public configuration specifying how to evaluate and verify the proposed scheme.
Once the agent completes its task, its proposed scheme is automatically evaluated (with a private configuration), and the evaluation results are added to the environment alongside the baselines.
Then, a new agent takes over with access to all prior results and code modifications.
The loop continues indefinitely.
Inside the environment, the agent has full access to the Internet (\eg it can download third-party libraries or read papers), a dedicated GPU, and API keys for third-party AI providers (primarily for experimenting with paraphrasing attacks).

To stimulate the agent even more, we run a two-phase setup.
The first agent is told it has 4 available submissions to the private evaluation before being shut down: this allows it to set up its environment, and adjust in case of unexpected failures.
Then, after the initial 4 submissions, subsequent agents are told they only have 1 submission left.
We found in~\cref{app:harness:environment} that this combined setup led to more creative solutions from the agents and less hyperparameter tuning.

\section{Evaluation}
\label{sec:eval}

In this section, we analyze the results of running our autoresearch framework across state-of-the-art agents.
In~\cref{sec:eval:trajectories} we analyze the trajectories of the agents during the autoresearch loop, while in~\cref{sec:eval:directions} we break down the schemes proposed by the agents into individual categories and assess which metrics they impact.
Lastly, in~\cref{sec:eval:best}, we select two watermarking algorithms from the autoresearch runs that outperform prior work on our evaluated metrics.

\paragraph{Experimental setup}
We compare against the most popular prior and deployed schemes: AAR, SynthID, and TextSeal.
Our private grader measures detectability and robustness as the TPR@1\%FPR on $1000$ \textsc{Llama-3.1-8B} $200$ to $300$ tokens long replies to \textsc{ELI5}~\citep{eli5} prompts, on clean text and under four attacks: word deletion, synonym substitution, back-translation, and paraphrasing.
For quality, we measure the distance to the unwatermarked model in perplexity, and in diversity (Self-BLEU) across $100$ replies to the same $10$ prompts, and, as these metrics are highly correlated, sometimes average them into a single quality metric (see \cref{app:full_results} for un-aggregated results).
Each metric is the worst over $5$ watermark keys to mitigate across-key variance.
We defer details to~\cref{app:setup}.
Visualizations of our evaluation results are available in our \href{https://www.sri.inf.ethz.ch/blog/automark}{blogpost}.

\subsection{Autoresearch Trajectories}
\label{sec:eval:trajectories}

We first focus on how the metrics evolved over the number of submissions to the evaluation harness.
\begin{figure}[t]
    \centering
    \vspace{-0.5em}
    \includegraphics[width=\linewidth]{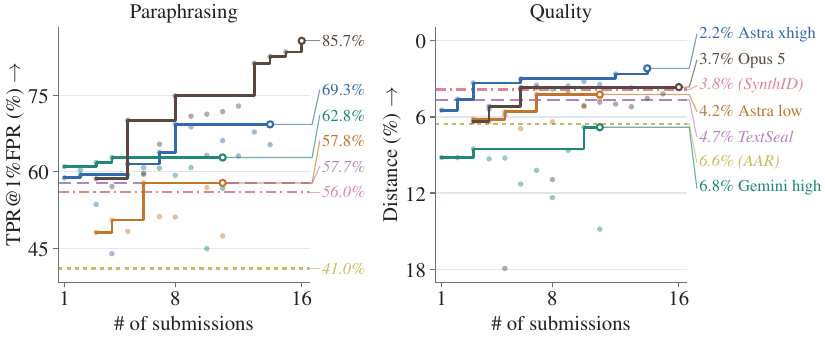}
    \vspace{-0.5em}
    \caption{Submission history for each agent along two evaluated axes: robustness to paraphrasing and quality. Each dot corresponds to a submission and the solid lines are the best value so far. The baselines are in italic, and a parenthesis means the baseline \emph{fails} our verification.}
    \label{fig:research-trajectories}
\end{figure}

\paragraph{Research agents and settings}
We use four different research agents: three frontier models \astra (xhigh), \astra (low), \opus, and a state-of-the-art efficient model \gemini (high).
We let each agent run until its submissionbudget was exhausted.
Using standard API prices, the research-agent costs are \$357.08 for \astra (xhigh), \$123.29 for \astra (low), \$435.96 for \opus, and \$13.81 for \gemini (high).

\paragraph{Main results}
\cref{fig:research-trajectories} shows the evolution of robustness to paraphrasing and of the watermark quality for different agents, with the baselines for comparison, for a total of $52$ watermark schemes being evaluated.
We find that all tested agents manage to improve their submissions over time, outperforming baselines in most cases.
For two baselines (AAR and SynthID), the evaluated configurations do not pass our soundness given a key check: as we show in~\cref{app:soundness_key} this means that their robustness is slightly inflated.
Moreover, only the submissions from \astra (both low and xhigh) are provably sound given a key.
In contrast, submissions from \opus, \gemini and the TextSeal baseline configuration satisfy the property only empirically (\ie they do not fail our check).
This means that the p-values of \astra are overly conservative, and, as we find in~\cref{app:soundness_key}, calibrating them to simply pass our empirical check increases the paraphrasing TPR of their best submission by $13$--$15$ percentage points, reaching robustness similar to that of \opus.

\paragraph{Research dynamics}
The research processes of \astra and \opus were similar, proposing a range of novel and creative watermarking ideas~(\cref{app:components}).
Their first submissions mostly consisted of tweaks to the TextSeal baseline and attempts to correlate both the private verification and evaluation results with their local evaluation set-up.
Both \astra agents then decided to submit only schemes that are provably sound given a key by multiplying their p-values by $1/\beta$~(\cref{sec:method:definition_validity}) whereas \opus opted for empirical calibration on its public verification evaluation, which in practice turned out to correlate well with the private verification.
Lastly, \gemini was significantly less creative. All its submissions consisted of hyperparameter tuning of the TextSeal watermark to improve the metrics and did not result in any new ideas.

\definecolor{summaryScalar}{HTML}{E3EFF8}
\definecolor{summaryGeometry}{HTML}{FBEAD7}
\definecolor{summarySkeleton}{HTML}{E4F1E5}
\definecolor{summarySkeletonEdge}{HTML}{7DB783}
\providecommand{\componentchipfont}{\fontsize{5.8pt}{6.8pt}\selectfont}
\begin{figure}[t]
\centering
\begin{tikzpicture}[
    stage/.style={shape=signal, signal to=east, signal pointer angle=130, fill=black!80, text=white,
        font=\scriptsize\bfseries, minimum height=0.33cm, inner xsep=1pt, inner ysep=0.5pt, text height=1.6ex, text depth=0.4ex},
    chip/.style={draw=black!30, font=\componentchipfont, inner xsep=1.5pt,
        minimum height=0.3cm, inner ysep=0.8pt, text height=1.5ex, text depth=0.35ex, rounded corners=1.2pt},
    shared/.style={chip, fill=white},
    fresh/.style={chip, fill=summaryScalar, draw=summaryScalar!60!black!50},
    fresk/.style={fresh, path picture={\fill[summarySkeletonEdge] (path picture bounding box.north west) rectangle ([xshift=1.3pt]path picture bounding box.south west);}},
    geom/.style={chip, fill=summaryGeometry, draw=summaryGeometry!60!black!50},
    skel/.style={chip, fill=summarySkeleton, draw=summarySkeleton!60!black!50},
]
\def\componentcolsep{2.5}   %
\def\componentstagesep{4}   %
\def\componentheadsep{13}   %
\def\componentrowsep{11}
\def\componentpad{3.1}      %
\gdef\componentstagex{0}
\newlength{\componentlabelwidth}

\newcommand{\stagecol}[4]{%
    \gdef\componentwa{0}\gdef\componentwb{0}
    \foreach \sty/\txt [count=\j] in {#4} {
        \pgfinterruptpicture\global\setbox1=\hbox{\componentchipfont\txt}\endpgfinterruptpicture
        \componentlabelwidth=\wd1
        \pgfmathsetmacro{\labelwidth}{\componentlabelwidth + \componentpad}
        \ifnum\j>#3
            \pgfmathsetmacro{\colwidth}{max(\componentwb, \labelwidth)}\xdef\componentwb{\colwidth}
        \else
            \pgfmathsetmacro{\colwidth}{max(\componentwa, \labelwidth)}\xdef\componentwa{\colwidth}
        \fi
    }
    \pgfmathsetmacro{\stagewidth}{\componentwb > 0 ? \componentwa + \componentcolsep + \componentwb : \componentwa}
    \node[stage, #2, minimum width=\stagewidth pt, anchor=north west] at (\componentstagex pt, 0) {#1};
    \foreach \sty/\txt [count=\j] in {#4} {
        \pgfmathsetmacro{\chipx}{\j > #3 ? \componentstagex + \componentwa + \componentcolsep : \componentstagex}
        \pgfmathsetmacro{\chiptextwidth}{(\j > #3 ? \componentwb : \componentwa) - 3}
        \pgfmathsetmacro{\chipy}{-\componentheadsep - (\j > #3 ? \j - #3 - 1 : \j - 1) * \componentrowsep}
        \node[\sty, anchor=north west] at (\chipx pt, \chipy pt) {\makebox[\chiptextwidth pt][l]{\txt}};
    }
    \pgfmathsetmacro{\nextx}{\componentstagex + \stagewidth + \componentstagesep}
    \xdef\componentstagex{\nextx}
}
\stagecol{Unit}{signal from=nowhere}{2}{shared/{\textit{Token}$^\ast$}, shared/Lexical group}
\stagecol{Context seeding}{signal from=west}{5}{shared/{\textit{Fixed-length}$^\ast$}, shared/Unordered,
    shared/Adaptive length, shared/Global + local, shared/Global first use,
    shared/Anchored, shared/Suffix cascade, shared/Tally \& ladder,
    skel/Last important, skel/Interleaved}
\stagecol{Score}{signal from=west}{5}{shared/Prelude, fresk/Gaussian copula, fresk/Tail indicator,
    fresk/Private sampling, fresk/Channels, fresk/Channel variants,
    geom/Gaussian direction, geom/Spherical, geom/Wheel}
\stagecol{Logits transform}{signal from=west}{5}{shared/{\textit{Gumbel race}$^\ast$}, shared/BandRace, shared/PoolRace,
    shared/MomentBridge, shared/QuantileRace, shared/TransportRace, shared/QuotaTransport,
    shared/ResidualRace, skel/Content gate, skel/Interleaved gate}
\stagecol{Detection}{signal from=west, signal to=nowhere}{5}{fresk/{Gamma/Gaussian}, fresk/{Clustering/excess},
    fresk/Quantile, fresk/Weighted Gamma, fresk/Position evidence,
    geom/Gaussian $\chi^2$, geom/Resultant + caps,
    geom/Anchor distance, geom/Predictive Gamma,
    geom/Circle search, geom/Quantile + wheel}
\end{tikzpicture}
\vspace{-0.2in}
\caption{Atomic components extracted from the agents' watermarking schemes. Colors indicate the family (\colorbox{summaryScalar}{Fresh race}, \colorbox{summaryGeometry}{Private geometry}, \colorbox{summarySkeleton}{Skeleton}) where components of the same family are interchangeable, and white components are compatible with every family. Skeleton schemes reuse the fresh-race scores and detectors (green edge). Components marked $^\ast$ come from prior work.}
\label{fig:component-pipeline}
\end{figure}

\subsection{Breakdown of the Agent Research Directions}
\label{sec:eval:directions}

Next, we manually go through the schemes proposed by all agents, decompose them into five components (the four components from \cref{sec:method:preliminaries} and the \emph{watermark unit}, a generalization introduced by the agents), and evaluate the impact of each component on watermark properties.
We summarize the decomposition in \cref{fig:component-pipeline}, and provide a detailed explanation of each component in \cref{app:components}.

\paragraph{Watermark unit}
Most watermarks operate at the token level: each token is pseudo-randomly scored and the watermark modifies the logits according to both the score and the model probability distribution (\cref{sec:method:preliminaries}).
We find that the agents proposed to generalize this idea: let $\mathcal{G}: \Sigma \rightarrow \mathbb{N}$ be a mapping from the token space to a set of equivalence classes.
In any watermark algorithm, one can replace a token $v$ with its class $\mathcal{G}(v)$ and, given the next-token probability distribution $p_t \in \Delta(\Sigma)$, define the probability of each class $g \in \mathcal{G}(\Sigma)$ as $\Pr[g] = \sum_{v \in \Sigma} p_t(v) \indicator\{\mathcal{G}(v) = g\}$.
If the watermark decides to sample a class $g$, the actual token is then sampled according to the residual probability
\begin{equation} \label{eq:watermark-unit-sampling}
	\forall v \in \Sigma, \Pr[v \mid g] = \indicator\{\mathcal{G}(v) = g\}\frac{p_t(v)}{\sum_{v'\in\Sigma} p_t(v') \indicator\{\mathcal{G}(v') = g\}}.
\end{equation}
For instance, using a \texttt{Lexical group} as $\mathcal{G}$ naturally increases the scheme robustness but slighly lowers quality: if tokens from the same lexical group are swapped, the context remains unchanged.

\begin{wrapfigure}{r}{0.42\textwidth}
    \centering
    \includegraphics[width=\linewidth]{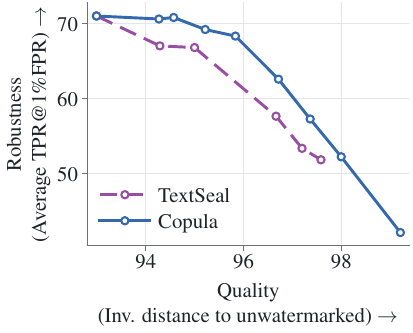}
    \vspace{-2em}
    \caption{Robustness-quality for TextSeal dual key routing and \texttt{Gaussian copula}. Robustness is the average of all 4 attacks' TPR@1\%FPR.}
    \label{fig:textseal-copula}
    \vspace{-0.5em}
\end{wrapfigure}

\paragraph{Context seeding}
For context seeding, agents explore the trade-off alluded to in~\citet{kgw2}: increasing the context size lowers the robustness but improves the quality.
Also, for small contexts, because of deduplication during detection (to guarantee soundness over the key~(\cref{sec:method:preliminaries})), lowering the context size may also hurt detectability.
Historically, prior works have converged on hashing the tuple of the candidate token and the $k$ preceding tokens (\texttt{Fixed-length} in \cref{fig:component-pipeline}).
Agents iterated upon this idea along three main axes.
The first idea is that tokens should not be treated equally: less frequent tokens should be dynamically assigned a lower $k$.
This increases robustness without impacting quality significantly (because the distortion-free guarantee ensures that quality degrades only when contexts are repeated across texts).
The second idea is to dynamically increase the context size: the first time we use a tuple as a context, it is of size $1$. The next time we are about to use the same tuple, we increase the context size until the context has not been seen yet.
This keeps the robustness benefits of short contexts without lowering the detectability.
The third axis is more creative: assign tokens to categories, and when sampling a token from a given category, use a fixed-window context over the previous tokens from that category only.
This increases robustness: edits in one category do not change the context for other categories.

\begin{table}[t]
\centering
\vspace{-1em}
\definecolor{selectedBest}{HTML}{E2F0D9}
\caption{Evaluation of selected agent schemes. Each cell reports English / Chinese results. 
We \textbf{bold} all results that outperform the best baseline, and \colorbox{selectedBest}{highlight} the best result in each column.}
\label{tab:selected-schemes}
\scriptsize
\setlength{\tabcolsep}{2pt}
\resizebox{\linewidth}{!}{%
\begin{tabular}{@{}lrrrrrrrr@{}}
\toprule
 & & \multicolumn{4}{c}{Robustness TPR@1  $\uparrow$} & \multicolumn{3}{c}{Quality deviation (\%) $\downarrow$} \\
\cmidrule(lr){3-6}\cmidrule(l){7-9}
Scheme & Clean TPR@1 $\uparrow$ & Deletion & Substitution & Back-trans. & Paraphrase & $\Delta$PPL & $\Delta$SB-2 & $\Delta$SB-3 \\
\midrule
\multicolumn{9}{@{}l}{\textbf{Llama-3.1-8B}} \\
AAR & 93 / 89 & 7 / 16 & 8 / 22 & 66 / 17 & 57 / 42 & 3.1 / 3.0 & 8.3 / 6.5 & 20 / 15 \\
SynthID & 95 / 90 & 12 / 25 & 10 / 31 & 84 / 24 & 71 / 53 & 3.0 / 1.0 & 3.5 / 3.3 & 9.3 / 8.0 \\
TextSeal (high) & 95 / 91 & 15 / 31 & 12 / 38 & 78 / 28 & 66 / 56 & 2.9 / 4.4 & 3.8 / 3.6 & 9.3 / 8.7 \\
\addlinespace[2pt]
ConcordMark (low) & 94 / 89 & \textbf{30} / \textbf{54} & \textbf{42} / \textbf{62} & \textbf{86} / \textbf{37} & \textbf{75} / \textbf{65} & \textbf{0.8} / 1.8 & \textbf{2.2} / \textbf{2.7} & \textbf{5.1} / \textbf{5.7} \\
ConcordMark (high) & \cellcolor{selectedBest}\textbf{96} / \textbf{91} & \textbf{34} / \textbf{61} & \cellcolor{selectedBest}\textbf{48} / \textbf{69} & \cellcolor{selectedBest}\textbf{89} / \textbf{55} & \cellcolor{selectedBest}\textbf{82} / \textbf{77} & \textbf{0.8} / 1.8 & \textbf{2.2} / \textbf{2.7} & \textbf{5.1} / \textbf{5.7} \\
RotationFirst & 91 / 89 & \cellcolor{selectedBest}\textbf{53} / \textbf{60} & \textbf{40} / \textbf{62} & 79 / \textbf{37} & 69 / \textbf{61} & \cellcolor{selectedBest}\textbf{0.5} / \textbf{0.6} & \cellcolor{selectedBest}\textbf{0.9} / \textbf{1.2} & \cellcolor{selectedBest}\textbf{2.1} / \textbf{2.6} \\
\midrule\midrule
\multicolumn{9}{@{}l}{\textbf{Qwen3.8-27B}} \\
AAR & 95 / 84 & 4 / 6 & 5 / 11 & 50 / 15 & 61 / 65 & 2.1 / 1.6 & 8.7 / 7.4 & 18 / 16 \\
SynthID & 97 / 88 & 6 / 10 & 7 / 16 & 60 / 21 & 68 / 72 & 5.9 / 1.6 & 3.7 / 4.0 & 8.5 / 9.0 \\
TextSeal (high) & 97 / 88 & 7 / 10 & 8 / 17 & 65 / 26 & 70 / 74 & 2.7 / 1.2 & 3.3 / 3.4 & 7.5 / 7.5 \\
\addlinespace[2pt]
ConcordMark (low) & 96 / 85 & \textbf{16} / \textbf{23} & \textbf{21} / \textbf{34} & \textbf{71} / \textbf{29} & \textbf{74} / 72 & \cellcolor{selectedBest}2.2 / \textbf{0.8} & \textbf{1.5} / \textbf{2.0} & \textbf{3.3} / \textbf{4.1} \\
ConcordMark (high) & \cellcolor{selectedBest}\textbf{98} / \textbf{90} & \cellcolor{selectedBest}\textbf{20} / \textbf{31} & \cellcolor{selectedBest}\textbf{25} / \textbf{43} & \cellcolor{selectedBest}\textbf{81} / \textbf{47} & \cellcolor{selectedBest}\textbf{83} / \textbf{82} & \cellcolor{selectedBest}2.2 / \textbf{0.8} & \textbf{1.5} / \textbf{2.0} & \textbf{3.3} / \textbf{4.1} \\
RotationFirst & 86 / 81 & \textbf{25} / \textbf{24} & \textbf{16} / \textbf{35} & 54 / 22 & 57 / 64 & 2.2 / \textbf{0.9} & \cellcolor{selectedBest}\textbf{0.3} / \textbf{0.7} & \cellcolor{selectedBest}\textbf{0.7} / \textbf{1.6} \\
\bottomrule
\end{tabular}%
}\vspace{-1.2em}

\end{table}

\paragraph{Score}
The score refers to the pseudo-random variable used by the logits transformation.
For instance, with Red-Green watermarks, the scores are the colors of the tokens.
Without loss of generality, as in \cref{sec:method:preliminaries}, we refer to the \emph{seeded} watermark score as $U_t \sim \mathcal{U}([0,1])^{|\Sigma|}$ and the final score used in the logits transformation as $\tilde{U}_t$.
With different score designs, the agents tried to increase the quality while minimizing the impact on detectability.
To do that, they consistently add a small amount of true randomness to the watermark.
This is conceptually similar to the approaches by \citet{textseal,synthid} which also add true randomness to the watermark sampling to improve quality.
Yet, we find that agents explore concepts that substantially differ from prior works.
With the \texttt{Gaussian copula}, they propose sampling random Gaussian noise $\eta_t \sim \mathcal{N}(0,I_{|\Sigma|})$ and applying the following transformation
\begin{equation} \label{eq:gaussian_copula}
	\tilde{U}_t = \Phi(\rho \Phi^{-1}(U_t) + \sqrt{1-\rho^2} \eta_t),
\end{equation}
where $\rho \in [0,1]$ is a scheme parameter that trades off detectability for quality (in the limit, $\rho = 0$ has no watermark).
In~\cref{fig:textseal-copula}, we compare this approach to TextSeal's random key switching (keeping all other watermarking components fixed) and find that it is strictly better: it outperforms TextSeal along all evaluated dimensions.
With \texttt{Channels}, we use a finite pool of private keys and for each request randomly decide which key to use.
Detection is run with every key and uses the Bonferroni correction.
\texttt{Private geometries} are more complex, and we explain them further in \cref{app:components}.

\begin{wrapfigure}{r}{0.42\textwidth}
    \centering
    \vspace{-0.25in}
    \includegraphics[width=\linewidth]{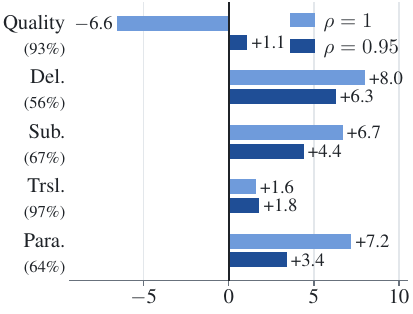}
    \vspace{-0.25in}
    \caption{\texttt{ResidualRace} standalone or with \texttt{Gaussian copula} $\rho=0.95$ to match its quality with the \texttt{Gumbel race}.}
    \vspace{-0.35in}
    \label{fig:residual-clock}
\end{wrapfigure}

\paragraph{Logits transformation}
For logits transformations, most proposed schemes settle on Gumbel-max sampling (\texttt{Gumbel race}) from~\citet{aar}: as proven
in~\citet{unified_wm_framework}, it maximizes watermark detectability.
Yet, we find that the agents commonly propose a Gumbel-max sampling extension via residual clocks (\texttt{ResidualRace}). 
Given a context seed $a_t$, let $U_{a_t}$ be the uniform watermark score.
The first time $a_t$ is visited, we initialize residual clocks $R_{a_t} = -\log U_{a_t}$; otherwise, we use previously computed scores.
Then we sample the next token according to the exponential race:
\begin{equation}
	\label{eq:exponential_race}
	\begin{gathered}
		\tau_t=\min_u\frac{R_{a_t}(u)}{p_t(u)},\qquad
		u_* = \arg\min_u\frac{R_{a_t}(u)}{p_t(u)},\\
		R_{a_t}(u)\leftarrow R_{a_t}(u)-\tau_t p_t(u).
	\end{gathered}
\end{equation}
and replace the winning token's score with a fresh exponential random variable: $R_{a_t}(u_*) \sim \operatorname{Exp}(1)$.
In~\cref{fig:residual-clock}, we compare \texttt{ResidualRace} against the \texttt{Gumbel race}, keeping all other components fixed.
\texttt{ResidualRace} improves robustness against every attack at a slight cost to quality.
Yet, adding \texttt{Gaussian copula} noise to the initial clocks recovers the quality of the \texttt{Gumbel race} while retaining most of the robustness gains.
We find that most other logits transforms proposed by agents are designed to only increase the quality of the schemes at the expense of detectability.
As we show in \cref{app:components} this generally offer worse trade-offs than \eg using the \texttt{Gaussian copula}.

\begin{wrapfigure}{r}{0.5\textwidth}
    \centering
    \vspace{-1.3em}
    \includegraphics[width=\linewidth]{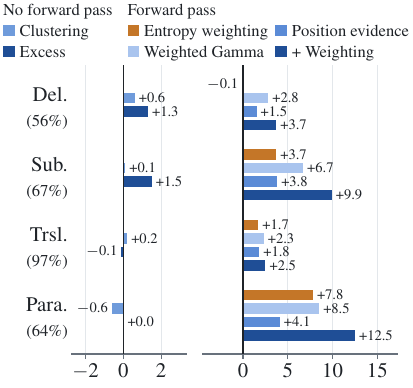}
    \vspace{-1em}
    \caption{Comparison of detector variants without forward pass (left) and with forward pass (right).}
    \label{fig:detectors}
    \vspace{-3em}
\end{wrapfigure}

\paragraph{Detection}
Most agents initially converged independently on the \texttt{Gamma} detector.
Let $\omega \in \Sigma^*$ be a sequence of tokens and $U_t(\omega_t)$ the corresponding uniform watermark score at position $t$. Then we have
\begin{equation}
	S_{\text{Gamma}}(\omega) = \sum_{t=1}^{|\omega|} -\log(1-U_t(\omega_t))
\end{equation}
which follows a $\text{Gamma}(|\omega|,1)$ distribution under the null.
As we show in~\cref{fig:detectors} (left), they further improve it with \texttt{Clustering} and \texttt{Excess} thresholding.
Lastly, using a model forward pass during detection to estimate the model probabilities can further improve detectability by a significant margin over \texttt{entropy weighting}, a prior method that also relies on a forward pass during detection (\cref{fig:detectors}, right).
For \texttt{Private geometries}, which use a different class of detectors, we defer their full evaluation to \cref{app:components}.

\subsection{Independent Evaluation of the Best Directions}
\label{sec:eval:best}

Next, we highlight two state-of-the-art schemes proposed by the agents that outperform prior works: ConcordMark as a robust watermarking scheme, and RotationFirst as a scheme with a strong quality focus.
We give an in-depth description of both schemes and provide additional evaluation in~\cref{app:selected_schemes}.

\paragraph{Experimental setup}
To mitigate overfitting due to the autoresearch process, we use a different evaluation protocol detailed in~\cref{app:setup:heldout}.
Notably, we generate $1000$ replies to English prompts, and $1000$ replies to Chinese prompts from WildChat~\citep{wildchat}, use \textsc{DeepSeek-v4 Flash} for backtranslation and paraphrasing, and evaluate replies generated by both \textsc{Llama-3.1-8B} and \textsc{Qwen3.8-27B}.
Lastly, we pool the results over $5$ watermarking seeds and report the worst result.
For consistency with the baselines, we do not apply the soundness given a key correction~(\cref{sec:method:definition_validity}).

\paragraph{Main results}
\cref{tab:selected-schemes} compares the two selected agent schemes with the three baselines.
We bold the results from the agent schemes that are better than the best baseline, and highlight the best results of each column in green.
Each cell reports the metrics on English text (left) and Chinese text (right).
For each scheme, high means the detector requires a model forward pass.
We see that ConcordMark (high) is Pareto better than all the baselines: on almost all metrics, it outperforms the best baseline.
RotationFirst significantly outperforms all schemes on all quality metrics: its diversity is very close to the unwatermarked model.
At the same time, it keeps relatively strong detection and robustness metrics making this scheme very suitable for model providers that desire strong watermarking with the smallest impact on quality.
Importantly, for both agent schemes, the results from the autoresearch evaluation~(\cref{fig:research-trajectories}) transfer to new settings with different prompts and attackers.
This suggests that autoresearch, with adequate verification, is a reliable method for improving watermarking schemes.

\vspace{-0.2em}
\section{Conclusion and Limitations}
\label{sec:conclusion}
\vspace{-0.2em}

In this work, we enable for the first time autoresearch for LLM watermarking with rigorous validity criteria~(\cref{sec:method:definition_validity}), statistical tests to verify them~(\cref{sec:method:verification}), and a tailored harness~(\cref{sec:method:framework}).
We show that frontier agents go beyond hyperparameter tuning to introduce novel ideas~(\cref{sec:eval:directions}), with two of their schemes, ConcordMark and RotationFirst, outperforming prior work on held-out data~(\cref{sec:eval:best}).

\paragraph{Limitations}
In autoresearch, the key challenge is to build powerful verification that agents can't exploit, and that faithfully transcribes the desired property (\ie in our case a watermark that is robust and with controlled false positives given a key).
For watermarking, there are many desirable properties we did not include in this work \eg security, radioactivity, applicability to open-weights setting, or extension to the multibit setting.
We believe that, future research, should not focus on establishing new schemes but rather on how to build verifier and evaluation for those properties.

\message{^^JLASTBODYPAGE \thepage^^J}

\subsection*{LLM Usage}

The goal of this work is to enable coding agents to discover novel and better watermarking schemes: hence, all the new watermark proposals in this work have been designed by fully autonomous coding agents.
Yet, organizing the schemes into individual components and understanding their impact on watermark properties have required significant manual efforts.
Notably, all ablation evaluations of the components have been designed by humans and executed with AI-assistance.
The autoresearch harness used in this work has been coded with AI-assistance.
The main paper has been written by hand, with AI-assistance for polishing the writing and the visuals.
The proofs have been LLM-generated, then proofread and edited by humans.

\subsection*{Ethics Statement}

This paper enables autonomous agents to discover new LLM watermarking schemes, which we believe has a positive societal impact: better watermarks support the provenance, auditing, and mitigation of large-scale misuse of AI-generated text.
While our verification tests screen for schemes with unsound false positive rates~(\cref{sec:method:verification}), LLM watermarks should not be used as sole evidence in high-stakes decisions.
We also acknowledge that publicly disclosing new schemes, along with a detailled component-level analysis, may facilitate targeted attacks on them.
Yet we argue that the benefits of advancing research in this area clearly outweigh the risks.

\subsection*{Reproducibility Statement}

Our autoresearch harness is described in~\cref{sec:method:framework}, and all details of the verification, the evaluation, the research agents (including their versions and reasoning efforts), and the baselines are given in~\cref{app:setup}, with the full agent prompt in~\cref{app:prompt}.
We fully specify the agent-generated components in~\cref{app:components} and the two selected schemes, ConcordMark and RotationFirst, in~\cref{app:selected_schemes}.
We also release our code, including the harness, the watermarking library, each agent final environment, a standalone implementation of all described components, and a standalone optimized implementation for ConcordMark and RotationFirst.

\clearpage
\bibliographystyle{iclr2027_conference}
\bibliography{references}

\begin{thebibliography}{34}
\providecommand{\natexlab}[1]{#1}
\providecommand{\url}[1]{\texttt{#1}}
\expandafter\ifx\csname urlstyle\endcsname\relax
  \providecommand{\doi}[1]{doi: #1}\else
  \providecommand{\doi}{doi: \begingroup \urlstyle{rm}\Url}\fi

\bibitem[Aaronson(2023)]{aar}
Scott Aaronson.
\newblock Watermarking of large language models.
\newblock Talk at the Simons Institute for the Theory of Computing, August 2023.
\newblock URL \url{https://simons.berkeley.edu/talks/scott-aaronson-ut-austin-openai-2023-08-17}.

\bibitem[{Anthropic}(2026)]{anthropic}
{Anthropic}.
\newblock How {Claude}'s text watermark works.
\newblock Anthropic News, August 2026.
\newblock URL \url{https://www.anthropic.com/news/claude-text-watermark}.

\bibitem[Bahri \& Wieting(2026)Bahri and Wieting]{blackbox_wm}
Dara Bahri and John Wieting.
\newblock A watermark for black-box language models.
\newblock \emph{Transactions on Machine Learning Research}, 2026.
\newblock URL \url{https://arxiv.org/abs/2410.02099}.

\bibitem[Chan et~al.(2025)Chan, Chowdhury, Jaffe, Aung, Sherburn, Mays, Starace, Liu, Maksin, Patwardhan, Weng, and M{\k{a}}dry]{mlebench}
Jun~Shern Chan, Neil Chowdhury, Oliver Jaffe, James Aung, Dane Sherburn, Evan Mays, Giulio Starace, Kevin Liu, Leon Maksin, Tejal Patwardhan, Lilian Weng, and Aleksander M{\k{a}}dry.
\newblock {MLE}-bench: Evaluating machine learning agents on machine learning engineering.
\newblock In \emph{The Thirteenth International Conference on Learning Representations}, 2025.
\newblock URL \url{https://openreview.net/forum?id=6s5uXNWGIh}.

\bibitem[Chen et~al.(2025)Chen, Wu, Guo, and Huang]{mcmark}
Ruibo Chen, Yihan Wu, Junfeng Guo, and Heng Huang.
\newblock Improved unbiased watermark for large language models.
\newblock In \emph{Proceedings of the 63rd Annual Meeting of the Association for Computational Linguistics (Volume 1: Long Papers)}, pp.\  20587--20601, Vienna, Austria, July 2025. Association for Computational Linguistics.
\newblock \doi{10.18653/v1/2025.acl-long.1005}.
\newblock URL \url{https://aclanthology.org/2025.acl-long.1005/}.

\bibitem[Dathathri et~al.(2024)Dathathri, See, Ghaisas, Huang, McAdam, Welbl, Bachani, Kaskasoli, Stanforth, Matejovicova, Hayes, Vyas, Merey, Brown-Cohen, Bunel, Balle, Cemgil, Ahmed, Stacpoole, Shumailov, Baetu, Gowal, Hassabis, and Kohli]{synthid}
Sumanth Dathathri, Abigail See, Sumedh Ghaisas, Po-Sen Huang, Rob McAdam, Johannes Welbl, Vandana Bachani, Alex Kaskasoli, Robert Stanforth, Tatiana Matejovicova, Jamie Hayes, Nidhi Vyas, Majd~Al Merey, Jonah Brown-Cohen, Rudy Bunel, Borja Balle, Taylan Cemgil, Zahra Ahmed, Kitty Stacpoole, Ilia Shumailov, Ciprian Baetu, Sven Gowal, Demis Hassabis, and Pushmeet Kohli.
\newblock Scalable watermarking for identifying large language model outputs.
\newblock \emph{Nature}, 634\penalty0 (8035):\penalty0 818--823, October 2024.
\newblock \doi{10.1038/s41586-024-08025-4}.
\newblock URL \url{https://doi.org/10.1038/s41586-024-08025-4}.

\bibitem[Davies(1987)]{davies1987}
Robert~B. Davies.
\newblock Hypothesis testing when a nuisance parameter is present only under the alternative.
\newblock \emph{Biometrika}, 74\penalty0 (1):\penalty0 33--43, 1987.
\newblock \doi{10.1093/biomet/74.1.33}.

\bibitem[{European Parliament and Council of the European Union}(2024)]{euiact}
{European Parliament and Council of the European Union}.
\newblock Regulation ({EU}) 2024/1689 of the european parliament and of the council of 13 june 2024 laying down harmonised rules on artificial intelligence and amending regulations ({EC}) no 300/2008, ({EU}) no 167/2013, ({EU}) no 168/2013, ({EU}) 2018/858, ({EU}) 2018/1139 and ({EU}) 2019/2144 and directives 2014/90/{EU}, ({EU}) 2016/797 and ({EU}) 2020/1828 ({Artificial Intelligence Act}).
\newblock Official Journal of the European Union, June 2024.
\newblock URL \url{https://eur-lex.europa.eu/eli/reg/2024/1689/oj/eng}.

\bibitem[Fan et~al.(2019)Fan, Jernite, Perez, Grangier, Weston, and Auli]{eli5}
Angela Fan, Yacine Jernite, Ethan Perez, David Grangier, Jason Weston, and Michael Auli.
\newblock {ELI5}: Long form question answering.
\newblock In \emph{Proceedings of the 57th Annual Meeting of the Association for Computational Linguistics}, pp.\  3558--3567, Florence, Italy, July 2019. Association for Computational Linguistics.
\newblock \doi{10.18653/v1/P19-1346}.
\newblock URL \url{https://aclanthology.org/P19-1346/}.

\bibitem[Fernandez et~al.(2023)Fernandez, Chaffin, Tit, Chappelier, and Furon]{threebricks}
Pierre Fernandez, Antoine Chaffin, Karim Tit, Vivien Chappelier, and Teddy Furon.
\newblock Three bricks to consolidate watermarks for large language models.
\newblock In \emph{2023 IEEE International Workshop on Information Forensics and Security (WIFS)}, pp.\  1--6. IEEE, December 2023.
\newblock \doi{10.1109/WIFS58808.2023.10374576}.
\newblock URL \url{https://doi.org/10.1109/WIFS58808.2023.10374576}.

\bibitem[Gloaguen et~al.(2026)Gloaguen, Staab, Jovanovi{\'c}, and Vechev]{unified_wm_framework}
Thibaud Gloaguen, Robin Staab, Nikola Jovanovi{\'c}, and Martin Vechev.
\newblock A unified framework for {LLM} watermarks.
\newblock \emph{arXiv preprint arXiv:2602.06754}, February 2026.
\newblock \doi{10.48550/arXiv.2602.06754}.
\newblock URL \url{https://arxiv.org/abs/2602.06754}.

\bibitem[Jovanovi{\'c} et~al.(2024)Jovanovi{\'c}, Staab, and Vechev]{ws}
Nikola Jovanovi{\'c}, Robin Staab, and Martin Vechev.
\newblock Watermark stealing in large language models.
\newblock In \emph{Proceedings of the 41st International Conference on Machine Learning}, volume 235 of \emph{Proceedings of Machine Learning Research}, pp.\  22570--22593. PMLR, 2024.
\newblock URL \url{https://proceedings.mlr.press/v235/jovanovic24a.html}.

\bibitem[Karpathy(2026)]{autoresearch}
Andrej Karpathy.
\newblock autoresearch: {AI} agents running research on single-{GPU} nanochat training automatically.
\newblock GitHub repository, March 2026.
\newblock URL \url{https://github.com/karpathy/autoresearch}.

\bibitem[Kirchenbauer et~al.(2023)Kirchenbauer, Geiping, Wen, Katz, Miers, and Goldstein]{kgw}
John Kirchenbauer, Jonas Geiping, Yuxin Wen, Jonathan Katz, Ian Miers, and Tom Goldstein.
\newblock A watermark for large language models.
\newblock In \emph{Proceedings of the 40th International Conference on Machine Learning}, volume 202 of \emph{Proceedings of Machine Learning Research}, pp.\  17061--17084. PMLR, 2023.
\newblock URL \url{https://proceedings.mlr.press/v202/kirchenbauer23a.html}.

\bibitem[Kirchenbauer et~al.(2024)Kirchenbauer, Geiping, Wen, Shu, Saifullah, Kong, Fernando, Saha, Goldblum, and Goldstein]{kgw2}
John Kirchenbauer, Jonas Geiping, Yuxin Wen, Manli Shu, Khalid Saifullah, Kezhi Kong, Kasun Fernando, Aniruddha Saha, Micah Goldblum, and Tom Goldstein.
\newblock On the reliability of watermarks for large language models.
\newblock In \emph{The Twelfth International Conference on Learning Representations}, pp.\  49660--49704, 2024.
\newblock URL \url{https://openreview.net/forum?id=DEJIDCmWOz}.

\bibitem[Kuditipudi et~al.(2024)Kuditipudi, Thickstun, Hashimoto, and Liang]{kth}
Rohith Kuditipudi, John Thickstun, Tatsunori Hashimoto, and Percy Liang.
\newblock Robust distortion-free watermarks for language models.
\newblock \emph{Transactions on Machine Learning Research}, 2024.
\newblock URL \url{https://openreview.net/forum?id=FpaCL1MO2C}.

\bibitem[Lange et~al.(2026)Lange, Imajuku, and Cetin]{shinkaevolve}
Robert~Tjarko Lange, Yuki Imajuku, and Edoardo Cetin.
\newblock {ShinkaEvolve}: Towards open-ended and sample-efficient program evolution.
\newblock In \emph{The Fourteenth International Conference on Learning Representations}, 2026.
\newblock URL \url{https://arxiv.org/abs/2509.19349}.

\bibitem[Lee et~al.(2024)Lee, Hong, Ahn, Hong, Lee, Yun, Shin, and Kim]{sweet}
Taehyun Lee, Seokhee Hong, Jaewoo Ahn, Ilgee Hong, Hwaran Lee, Sangdoo Yun, Jamin Shin, and Gunhee Kim.
\newblock Who wrote this code? watermarking for code generation.
\newblock In \emph{Proceedings of the 62nd Annual Meeting of the Association for Computational Linguistics (Volume 1: Long Papers)}, pp.\  4890--4911, Bangkok, Thailand, August 2024. Association for Computational Linguistics.
\newblock \doi{10.18653/v1/2024.acl-long.268}.
\newblock URL \url{https://aclanthology.org/2024.acl-long.268/}.

\bibitem[Lu et~al.(2024)Lu, Holt, Fanconi, Chan, Foerster, van~der Schaar, and Lange]{discopop}
Chris Lu, Samuel Holt, Claudio Fanconi, Alex~J. Chan, Jakob Foerster, Mihaela van~der Schaar, and Robert~Tjarko Lange.
\newblock Discovering preference optimization algorithms with and for large language models.
\newblock In \emph{Advances in Neural Information Processing Systems}, volume~37, 2024.
\newblock URL \url{https://arxiv.org/abs/2406.08414}.

\bibitem[Lu et~al.(2026)Lu, Lu, Lange, Yamada, Hu, Foerster, Ha, and Clune]{ai_scientist}
Chris Lu, Cong Lu, Robert~Tjarko Lange, Yutaro Yamada, Shengran Hu, Jakob Foerster, David Ha, and Jeff Clune.
\newblock Towards end-to-end automation of {AI} research.
\newblock \emph{Nature}, 651:\penalty0 914--919, March 2026.
\newblock \doi{10.1038/s41586-026-10265-5}.
\newblock URL \url{https://doi.org/10.1038/s41586-026-10265-5}.

\bibitem[Novikov et~al.(2025)Novikov, V{\~u}, Eisenberger, Dupont, Huang, Wagner, Shirobokov, Kozlovskii, Ruiz, Mehrabian, Kumar, See, Chaudhuri, Holland, Davies, Nowozin, Kohli, and Balog]{alphaevolve}
Alexander Novikov, Ng{\^a}n V{\~u}, Marvin Eisenberger, Emilien Dupont, Po-Sen Huang, Adam~Zsolt Wagner, Sergey Shirobokov, Borislav Kozlovskii, Francisco J.~R. Ruiz, Abbas Mehrabian, M.~Pawan Kumar, Abigail See, Swarat Chaudhuri, George Holland, Alex Davies, Sebastian Nowozin, Pushmeet Kohli, and Matej Balog.
\newblock Alphaevolve: A coding agent for scientific and algorithmic discovery.
\newblock \emph{arXiv preprint arXiv:2506.13131}, 2025.

\bibitem[{OpenAI}(2026)]{openai2026euler}
{OpenAI}.
\newblock Finite time blowup for the euler equation, September 2026.
\newblock URL \url{https://cdn.openai.com/pdf/315b36cd-ec98-4023-8342-93345194ece1/euler.pdf}.
\newblock Accessed: 2026-09-22.

\bibitem[Pan et~al.(2024)Pan, Liu, He, Gao, Zhao, Lu, Zhou, Liu, Hu, Wen, King, and Yu]{markllm}
Leyi Pan, Aiwei Liu, Zhiwei He, Zitian Gao, Xuandong Zhao, Yijian Lu, Binglin Zhou, Shuliang Liu, Xuming Hu, Lijie Wen, Irwin King, and Philip~S. Yu.
\newblock {MarkLLM}: An open-source toolkit for {LLM} watermarking.
\newblock In \emph{Proceedings of the 2024 Conference on Empirical Methods in Natural Language Processing: System Demonstrations}, pp.\  61--71, Miami, Florida, USA, November 2024. Association for Computational Linguistics.
\newblock \doi{10.18653/v1/2024.emnlp-demo.7}.
\newblock URL \url{https://aclanthology.org/2024.emnlp-demo.7/}.

\bibitem[Piet et~al.(2025)Piet, Sitawarin, Fang, Mu, and Wagner]{markmywords}
Julien Piet, Chawin Sitawarin, Vivian Fang, Norman Mu, and David Wagner.
\newblock Mark my words: Analyzing and evaluating language model watermarks.
\newblock In \emph{2025 IEEE Conference on Secure and Trustworthy Machine Learning (SaTML)}, pp.\  68--91. IEEE, April 2025.
\newblock \doi{10.1109/SaTML64287.2025.00012}.
\newblock URL \url{https://doi.org/10.1109/SaTML64287.2025.00012}.

\bibitem[Raffel et~al.(2020)Raffel, Shazeer, Roberts, Lee, Narang, Matena, Zhou, Li, and Liu]{c4}
Colin Raffel, Noam Shazeer, Adam Roberts, Katherine Lee, Sharan Narang, Michael Matena, Yanqi Zhou, Wei Li, and Peter~J. Liu.
\newblock Exploring the limits of transfer learning with a unified text-to-text transformer.
\newblock \emph{Journal of Machine Learning Research}, 21\penalty0 (140):\penalty0 1--67, 2020.

\bibitem[Romera-Paredes et~al.(2024)Romera-Paredes, Barekatain, Novikov, Balog, Kumar, Dupont, Ruiz, Ellenberg, Wang, Fawzi, Kohli, and Fawzi]{funsearch}
Bernardino Romera-Paredes, Mohammadamin Barekatain, Alexander Novikov, Matej Balog, M.~Pawan Kumar, Emilien Dupont, Francisco J.~R. Ruiz, Jordan~S. Ellenberg, Pengming Wang, Omar Fawzi, Pushmeet Kohli, and Alhussein Fawzi.
\newblock Mathematical discoveries from program search with large language models.
\newblock \emph{Nature}, 625\penalty0 (7995):\penalty0 468--475, 2024.
\newblock \doi{10.1038/s41586-023-06924-6}.
\newblock URL \url{https://doi.org/10.1038/s41586-023-06924-6}.

\bibitem[Salmon et~al.(2011)Salmon, Moraes, Dror, and Shaw]{random123}
John~K. Salmon, Mark~A. Moraes, Ron~O. Dror, and David~E. Shaw.
\newblock Parallel random numbers: As easy as 1, 2, 3.
\newblock In \emph{Proceedings of the 2011 International Conference for High Performance Computing, Networking, Storage and Analysis}, SC '11, pp.\  1--12. ACM, November 2011.
\newblock \doi{10.1145/2063384.2063405}.
\newblock URL \url{https://doi.org/10.1145/2063384.2063405}.

\bibitem[Sander et~al.(2026)Sander, Chang, Sou{\v{c}}ek, Tran, Lacatusu, Rebuffi, Mourachko, Parimi, Ropers, Moritz, Stark, Elsahar, and Fernandez]{textseal}
Tom Sander, Hongyan Chang, Tom{\'a}{\v{s}} Sou{\v{c}}ek, Tuan Tran, Valeriu Lacatusu, Sylvestre-Alvise Rebuffi, Alexandre Mourachko, Surya Parimi, Christophe Ropers, Rashel Moritz, Vanessa Stark, Hady Elsahar, and Pierre Fernandez.
\newblock {TextSeal}: A localized {LLM} watermark for provenance \& distillation protection.
\newblock \emph{arXiv preprint arXiv:2605.12456}, May 2026.
\newblock \doi{10.48550/arXiv.2605.12456}.
\newblock URL \url{https://arxiv.org/abs/2605.12456}.

\bibitem[Tu et~al.(2024)Tu, Sun, Bai, Yu, Hou, and Li]{waterbench}
Shangqing Tu, Yuliang Sun, Yushi Bai, Jifan Yu, Lei Hou, and Juanzi Li.
\newblock {WaterBench}: Towards holistic evaluation of watermarks for large language models.
\newblock In \emph{Proceedings of the 62nd Annual Meeting of the Association for Computational Linguistics (Volume 1: Long Papers)}, pp.\  1517--1542, Bangkok, Thailand, August 2024. Association for Computational Linguistics.
\newblock \doi{10.18653/v1/2024.acl-long.83}.
\newblock URL \url{https://aclanthology.org/2024.acl-long.83/}.

\bibitem[Wijk et~al.(2025)Wijk, Lin, Becker, Jawhar, Parikh, Broadley, Chan, Chen, Clymer, Dhyani, Ericheva, Garcia, Goodrich, Jurkovic, Kinniment, Lajko, Nix, Sato, Saunders, Taran, West, and Barnes]{rebench}
Hjalmar Wijk, Tao~Roa Lin, Joel Becker, Sami Jawhar, Neev Parikh, Thomas Broadley, Lawrence Chan, Michael Chen, Joshua~M. Clymer, Jai Dhyani, Elena Ericheva, Katharyn Garcia, Brian Goodrich, Nikola Jurkovic, Megan Kinniment, Aron Lajko, Seraphina Nix, Lucas Jun~Koba Sato, William Saunders, Maksym Taran, Ben West, and Elizabeth Barnes.
\newblock {RE}-bench: Evaluating frontier {AI} {R\&D} capabilities of language model agents against human experts.
\newblock In \emph{Proceedings of the 42nd International Conference on Machine Learning}, volume 267 of \emph{Proceedings of Machine Learning Research}, pp.\  66772--66832. PMLR, 2025.
\newblock URL \url{https://proceedings.mlr.press/v267/wijk25a.html}.

\bibitem[Wu et~al.(2024)Wu, Hu, Guo, Zhang, and Huang]{dipmark}
Yihan Wu, Zhengmian Hu, Junfeng Guo, Hongyang Zhang, and Heng Huang.
\newblock A resilient and accessible distribution-preserving watermark for large language models.
\newblock In \emph{Proceedings of the 41st International Conference on Machine Learning}, volume 235 of \emph{Proceedings of Machine Learning Research}, pp.\  53443--53470. PMLR, 2024.
\newblock URL \url{https://proceedings.mlr.press/v235/wu24h.html}.

\bibitem[Zhao et~al.(2024{\natexlab{a}})Zhao, Ren, Hessel, Cardie, Choi, and Deng]{wildchat}
Wenting Zhao, Xiang Ren, Jack Hessel, Claire Cardie, Yejin Choi, and Yuntian Deng.
\newblock Wildchat: 1m chat{GPT} interaction logs in the wild.
\newblock In \emph{The Twelfth International Conference on Learning Representations}, 2024{\natexlab{a}}.
\newblock URL \url{https://openreview.net/forum?id=Bl8u7ZRlbM}.

\bibitem[Zhao et~al.(2024{\natexlab{b}})Zhao, Ananth, Li, and Wang]{unigram}
Xuandong Zhao, Prabhanjan Ananth, Lei Li, and Yu-Xiang Wang.
\newblock Provable robust watermarking for {AI}-generated text.
\newblock In \emph{International Conference on Learning Representations}, volume 2024, pp.\  43738--43772, 2024{\natexlab{b}}.
\newblock URL \url{https://proceedings.iclr.cc/paper_files/paper/2024/file/beae9ed5316bcc48e616754c06c11875-Paper-Conference.pdf}.

\bibitem[Zhong et~al.(2026)Zhong, Raghunathan, and Carlini]{impossiblebench}
Ziqian Zhong, Aditi Raghunathan, and Nicholas Carlini.
\newblock {ImpossibleBench}: Measuring {LLMs}' propensity of exploiting test cases.
\newblock In \emph{The Fourteenth International Conference on Learning Representations}, 2026.
\newblock URL \url{https://arxiv.org/abs/2510.20270}.

\end{thebibliography}

\message{^^JLASTREFERENCESPAGE \thepage^^J}

\ifincludeappendixx
  \clearpage
  \appendix
  \crefalias{section}{appendix}
  \crefalias{subsection}{appendix}
  \section{Breakdown of All Generated Components}
\label{app:components}

In this section, we explain each of the watermarking components proposed by the research agents and visualized in~\cref{fig:component-pipeline}.
In \cref{app:components:prelim}, we explain in depth the decomposition of watermarks in 5 components (watermark unit, context seeding, scores, logits transformation, and detectors), and detail the notation used throughout.
In the remaining subsections, we follow the architecture of~\cref{fig:component-pipeline} and go through each component one by one: the watermark unit~(\cref{app:components:unit}), context seeding~(\cref{app:components:context}), the watermark score~(\cref{app:components:score}), the logits transformation~(\cref{app:components:transform}), and detection~(\cref{app:components:detection}).

\subsection{Preliminaries and Notation}
\label{app:components:prelim}

\paragraph{Four components}
We decompose a watermarking scheme into the four components introduced in~\cref{sec:method:preliminaries}, with the additional concept of watermark unit.
\begin{enumerate}[leftmargin=*]
	\item \emph{Watermark unit} is the component on which the watermark is applied: in prior works, it is usually tokens, but agents also proposed to apply watermarks on equivalence classes of tokens.
	\item \emph{Context seeding} hashes the previously generated tokens into a seed that is used, along with the watermark private key, to pseudo-randomly sample the watermarking scores.
	\item The \emph{watermark score} is the pseudo-random score that can be combined with true randomness before being used by the logits transformation. At detection, only the pseudo-random score can be recovered and not the true randomness.
	\item The \emph{logits transformation} combines the score with the model's next-token distribution to pick the next token, in a way that is distortion-free~(\cref{eq:distortion_free}).
	\item The \emph{detector} recomputes the pseudo-random scores of a candidate text, aggregates them into a statistic, and returns a p-value.
\end{enumerate}

\paragraph{Notation}
Let $\Sigma$ be the vocabulary and $\omega \in \Sigma^*$ a text, with $\omega_t$ its \nth{t} token and $p_t \in \Delta(\Sigma)$ the model's next-token distribution given $\omega_{<t}$.
A grouping $\mathcal{G}: \Sigma \rightarrow \mathbb{N}$ maps tokens to units. We write $u = \mathcal{G}(v)$ for the group of a candidate token $v$ and
\begin{equation}
	\label{eq:group_mass}
	M_t(u) \mathrel{:=} \sum_{v \in \Sigma} p_t(v) \indicator\{\mathcal{G}(v) = u\}
\end{equation}
for its mass under $p_t$.
Context seeding maps the context $c_t$ (a function of $\omega_{<t}$) and the private key $\xi$ to a seed $a_t$.
The seed is then used to pseudo-randomly sample a multivariate uniform random variable $U_{a_t,u} \sim \mathcal{U}([0,1])$ for every token unit $u$.
Because the watermark score may add true randomness that is not computable at detection, we use the notation $\tilde{U}_{a_t,u}$ for the watermark score mixed with true randomness (and thus used by the logits transformation).
The logits transformation selects a group $u_*$ from $(\tilde{U}_{a_t}, p_t)$ and then samples the emitted token inside that group using the residual probability $p_t(v)/M_t(u_*)$.
At detection, the scheme reconstructs the set $D$ of retained (seed, unit) pairs of $\omega$ (\eg after de-duplication of repeated $(a_t,u)$ tuples as in~\citet{threebricks}), with $N \mathrel{:=} \lvert D \rvert$, aggregates them into a statistic $S(\omega)$, and reports a p-value $P(\omega)$.
We write $Q(N,s) \mathrel{:=} \Pr[\operatorname{Gamma}(N,1) \ge s]$ for the Gamma upper tail.

\paragraph{Experimental setup}
We ablate every component against a simple baseline AAR scheme: identity grouping, a two-token context, no private randomness in the score ($\tilde{U} = U$), the \texttt{Gumbel race}, and the scalar \texttt{Gamma} detector of~\cref{sec:eval:directions}.
For each ablation, we swap one component (\eg the \texttt{Gumbel race} with the \texttt{ResidualRace}) and report the difference in percentage points for our evaluation metrics~(\cref{sec:eval}).
Quality is one minus the average relative distance to the unwatermarked model in perplexity and both Self-BLEU scores.
In the bar plots, positive values indicate an improvement, and values in parentheses give the reference quality or TPR@1\%FPR.
For ablating some component hyperparameters, we may sometime use a different reference (instead of the simple AAR baseline) and specify in the caption when the reference differs from the baseline.

\subsection{Watermark Unit}
\label{app:components:unit}

Prior works~\citep{aar,kgw,synthid,textseal} used a single token as the watermark: $\mathcal{G}$ is the identity and $M_t = p_t$. 

\begin{figure}[t]
    \centering
    \begin{minipage}[t]{0.47\textwidth}
        \vspace{0pt}\centering
        \input{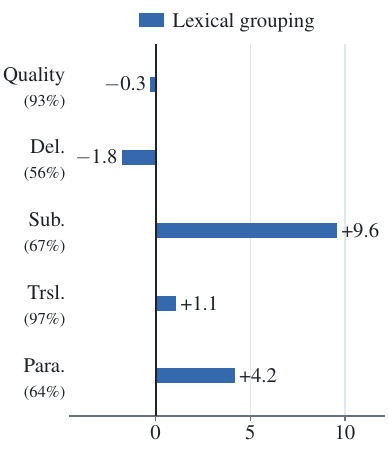}
    \end{minipage}\hfill
    \begin{minipage}[t]{0.47\textwidth}
        \vspace{0pt}\centering
        \input{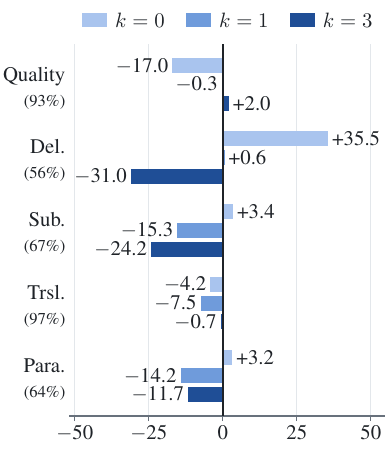}
    \end{minipage}
\end{figure}

\paragraph{\texttt{Lexical group}}
Agents replace tokens by classes of a public map $\mathcal{G}$ built from the tokenizer and WordNet~3.0: a token is decoded, stripped and lower-cased, and if what remains is an ASCII word of at least three characters it is mapped to the representative of its first WordNet synset.
Hence synonyms share a class, which is what makes it robust to substitution: a replacement inside a class leaves both the seed and the observed "token" unchanged.
\cref{fig:comp:grouping} confirms this intuition: \texttt{Lexical group} indeed significantly improves robustness to substitution, and slightly improves backtranslation and paraphrasing robustness as well. However, it significantly lowers deletion robustness, and slightly lowers quality.

\subsection{Context Seeding}
\label{app:components:context}

For context seeding, unlike some prior works~\citep{kgw}, we assume the hashing function is without collision: the hash of a context is unique.
Because the set of possible contexts is finite, this assumption is achievable in practice.
We defer the details of the hashing functions, as well as an analysis of the different hashing functions, to \cref{app:hash_function}.
We also use a per-request watermark cache: if a seed $a_t$ hashed from the context is repeated, the next token is sampled without watermarking.
This is required to guarantee that each individual request is distortion-free.

\paragraph{\texttt{Fixed-length}}
Prior work hashes the tuple of the $k$ preceding tokens~\citep{kgw,textseal}. In particular, $k=0$ is the unigram scheme of~\citet{unigram}.
As we show in \cref{fig:comp:context-fixed}, the context size $k$ controls a robustness-quality trade-off: shorter contexts survive more edits but repeat more often which leads to correlation across a corpus of generated texts.
Furthermore, at low context, \ie $k=1$ or $k=0$, because we need to de-duplicate (seed, token) pairs at detection, the lack of context diversity lowers the detectability and therefore can counteract the gain in robustness.

\begin{figure}[t]
    \centering
    \begin{minipage}[t]{0.47\textwidth}
        \vspace{0pt}\centering
        \input{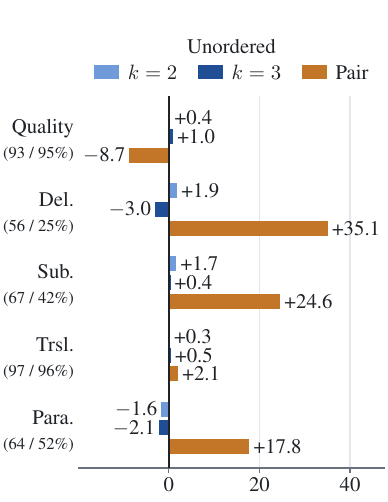}
    \end{minipage}\hfill
    \begin{minipage}[t]{0.47\textwidth}
        \vspace{0pt}\centering
        \input{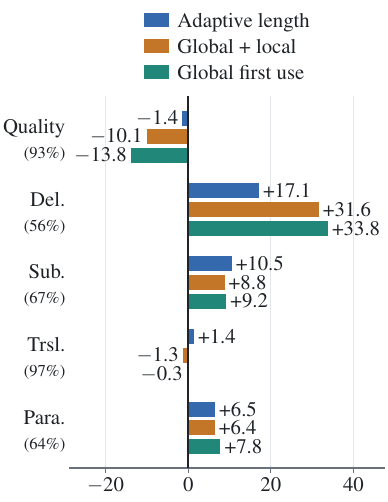}
    \end{minipage}
\end{figure}

\paragraph{\texttt{Unordered}}
It is the same as \texttt{Fixed-length} context seeding, except that it hashes the {set} of the $k$ preceding tokens rather than their tuple (the candidate $v$ is kept separate), so that a transposition of preceding tokens does not change the context.
For $k \le 1$, it is identical to \texttt{Fixed-length}.
As we show in~\cref{fig:comp:context-undirected}, for $k=2$ and $k=3$ this change barely affects robustness and slightly improves quality.
The agents also proposed an \emph{unordered pair} variant (\emph{Pair} in~\cref{tab:full-components-context}) that hashes the set $\{\omega_{t-1}, v\}$, which includes the candidate token itself, rather than the tuple $(\omega_{t-1}, v)$.
Compared to the core scheme, it significantly increases robustness but significantly lowers quality~(\cref{fig:comp:context-undirected}).

\paragraph{Adaptive and global length}
Rather than fixing $k$, these variants shorten the context for the tokens that are unlikely to repeat anyway, so that only the positions that need a long context pay for one.
\begin{enumerate}[leftmargin=*]
	\item \texttt{Adaptive length}: given a classifier $\mathrm{rare}: \Sigma \rightarrow \{0,1\}$ (here that a token frequency in a fixed corpus is below $0.1\%$), if the previous token is rare, hash only that token ($k=1$), otherwise hash the full two-token window.
	In~\cref{tab:full-components-context}, we also evaluate it with the \texttt{Excess} detector of~\cref{app:components:detection}.
	\item \texttt{Global + local}: fix a public set $E \subset \Sigma$ of eligible tokens (the agents used as $E$ the tokens that have a WordNet entry), use unigram (\ie $k=0$) if the candidate is in $E$, and a $k$-token context otherwise, with $k=2$ unless stated otherwise.
	\item \texttt{Global first use} refines it by applying unigram to a candidate in $E$ only the first time it appears in the completion, and then uses a two-token context if it appears again.
\end{enumerate}
We show in \cref{fig:comp:context-adaptive} that all those approaches significantly increase the robustness compared to the baseline, but both global variants also significantly degrade quality.

\begin{figure}[t]
    \centering
    \begin{minipage}[t]{0.47\textwidth}
        \vspace{0pt}\centering
        \input{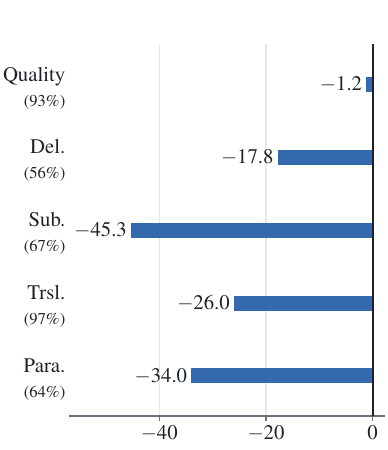}
    \end{minipage}\hfill
    \begin{minipage}[t]{0.47\textwidth}
        \vspace{0pt}\centering
        \input{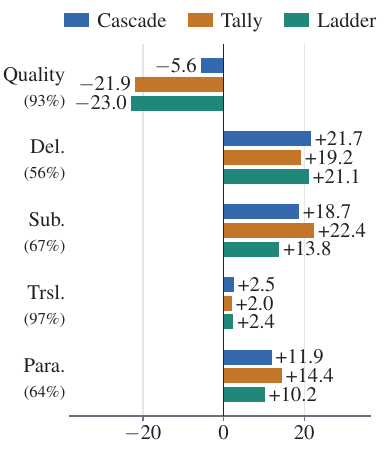}
    \end{minipage}
\end{figure}

\paragraph{\texttt{Anchored}}
Let $E$ again be a public token set and let $s$ be the distance from the current token back to the most recent token of $E$.
If $s \le S_{\max} = 4$, use as context the tuple of that anchor token and the offset $s$. Otherwise, if $s > S_{\max}$ skip the watermark.
For the detector, at every token $\omega_t$ it finds the closest previous anchor and sweeps all permitted distances $s\in \{0,\dots,S_{\max}\}$ adjusting the p-values with Bonferroni correction.
Yet, as we show in \cref{fig:comp:context-anchored}, while intuitively the slot search should buy robustness (editing tokens in between anchor tokens does not change the context), the search is too expensive and this scheme is worse than the baseline on every direction.

\paragraph{\texttt{Suffix cascade} and \texttt{Tally \& ladder}}
These approaches share the same goal: when a context has already been used in the completion, derive a new seed for it instead of skipping the position, so that a short context keeps its robustness without losing detectability.
They only differ in how that new seed is built.
\begin{enumerate}[leftmargin=*]
	\item \texttt{Suffix cascade}: try suffix lengths in increasing order and use the shortest one that has not already been used in this completion. If every allowed length has been used, skip the watermark.
	\item \texttt{Tally}: keep a one-token suffix and hash the previous token, and the number of its earlier occurrences, so that a repeated context yields a fresh seed.
	\item \texttt{Ladder} generalizes both: count every suffix width $1 \le w \le K$, select the shortest width whose count is below a threshold, and hash the width, the suffix, and the count. 
	Counts are capped at $8192$, beyond which the position falls back to ordinary sampling due to the specifics of the hash function~(\cref{app:hash_function}).
	Optionally, a \emph{floor} forces a width of at least $2$ when the previous token ID is below $200$ (as in ConcordMark,~\cref{app:concordmark}), as for most tokenizers low ID tokens tend to be very common (\eg letters, numers, or punctuations).
	Unless stated otherwise, we use $K=4$, a threshold of $3$, and the floor, and we report all four combinations of $K \in \{3,4\}$ with and without the floor in~\cref{tab:full-components-context}.
	Ladder makes almost every position usable at a one- or two-token context, which is why the \texttt{Ladder} is used in  ConcordMark~(\cref{sec:eval:best}).
\end{enumerate}
As we find in \cref{fig:comp:context-cascade}, the \texttt{Suffix cascade} indeed keeps most of the robustness of a short context, with a small quality cost, and unlike $k=1$ it improves on all robustness metrics.

\begin{wrapfigure}[14]{R}{0.45\textwidth}
    \centering
    \vspace{-0.1in}
    \includegraphics[width=\linewidth]{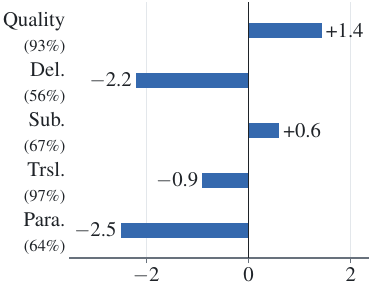}
    \vspace{-0.2in}
    \caption{Impact of the entropy \texttt{Prelude}.}
    \vspace{-0.1in}
    \label{fig:comp:prelude}
\end{wrapfigure}

\paragraph{\texttt{Last important} and \texttt{Interleaved}}
Those two schemes belong to the \texttt{skeleton family}~(\cref{fig:component-pipeline}).
The \texttt{skeleton family} splits the vocabulary using a fixed important/filler tokens classifier and gives the two streams separate histories.
\begin{enumerate}[leftmargin=*]
	\item \texttt{Last important} ignores filler tokens when computing the context and uses the last important token as context, so that edits to filler tokens do not move the context of the next content word.
	\item \texttt{Interleaved} uses when sampling an important token the last important token as context, and when sampling a filler token uses the last two previous filler tokens as context.
\end{enumerate}
Both context seedings must be changed together with their logits transformation which is why we defer the evaluation to~\cref{app:components:transform}.

\begin{figure}[!tb]
    \centering
    \includegraphics[width=\linewidth]{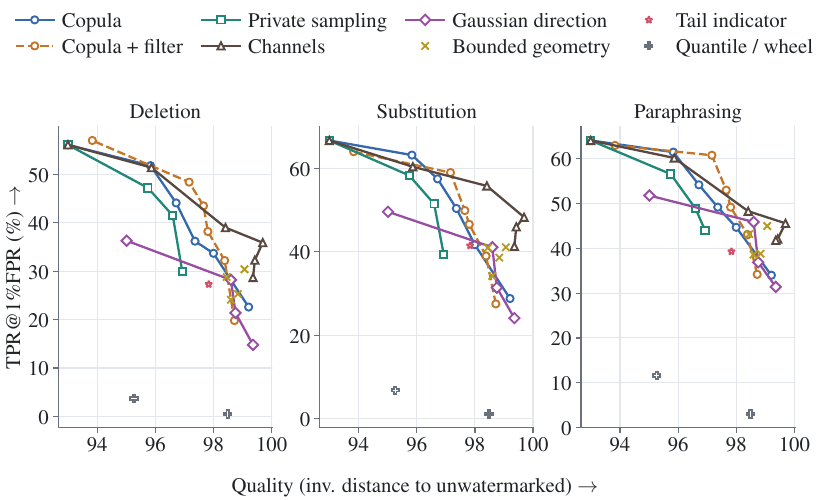}
    \caption{Quality-robustness trade-offs for the different watermark scores.}
    \label{fig:comp:score-global}
\end{figure}

\subsection{Watermark Score}
\label{app:components:score}

\paragraph{\texttt{Prelude}}
\texttt{Prelude} is the only component that can be composed with any other.
It consists of sampling ordinarily at the start of the completion until the accumulated entropy $H_2(p_t) = -\log\sum_v p_t(v)^2$ of the first generated tokens reaches $h$ nats, or after $T$ tokens, whichever comes first.
In practice, agents use $h=14$ and $T=24$.
For detection, we keep all tokens, including those of the prelude, as the detector cannot recompute the entropy without the model.
We compare \texttt{Prelude} to the no-Prelude baseline and find in~\cref{fig:comp:prelude} that it indeed improves quality with a mild robustness cost.

\paragraph{\texttt{Gaussian copula}}
\texttt{Gaussian copula} draws a fresh private normal $\eta_{t,u} \sim \mathcal{N}(0,1)$ at every position and sets $\tilde{U}_{t,u} = \Phi(\rho \Phi^{-1}(U_{a_t,u}) + \sqrt{1-\rho^2}\,\eta_{t,u})$.
The goal is to inject true randomness into the watermark score to increase quality at the cost of detectability.
At $\rho = 0$ the scheme is unwatermarked, at $\rho = 1$ it uses only the watermarked pseudo-random scores.
As we show in~\cref{sec:eval:directions}, it is a more effective way of improving quality via true randomness than prior works.

\begin{wrapfigure}{R}{0.45\textwidth}
    \centering
    \vspace{-0.1in}
    \includegraphics[width=\linewidth]{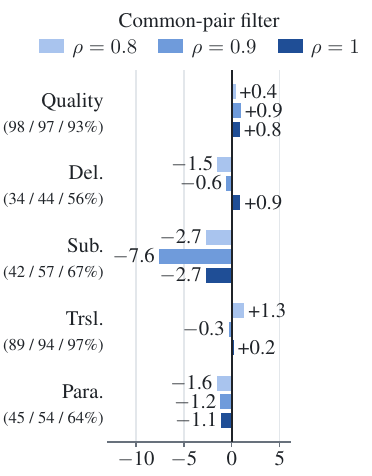}
    \vspace{-0.2in}
    \caption{Impact of common-pair filtering compared to the unfiltered \texttt{Gaussian copula} at the same correlation $\rho$. Parentheses give the unfiltered reference percentages for $\rho=0.8$ / $\rho=0.9$ / $\rho=1$, respectively.}
    \vspace{-0.6in}
    \label{fig:comp:score-filter}
\end{wrapfigure}

The \texttt{Gaussian copula} can be further refined with a common pair filter: we replace $\rho$ with a function of the current context, $\rho_{t,u} = \rho\,\indicator\{(a_t,u) \notin \mathcal{F}\}$, where $\mathcal{F}$ is the set of the $10{,}000$ most frequent (seed, token) pairs of a public corpus (in this case, the first $20{,}000$ texts of the public \textsc{ELI5} corpus, counting each pair at most once per text).
The goal is to further increase the quality with a better detectability trade-off.
Yet, as we show in \cref{fig:comp:score-filter}, using the simpler $\rho \in [0,1]$ is better in practice.

\paragraph{\texttt{Tail indicator}}
For each unit $u$, set $B_u = \max(1, \lfloor 1/\max(M_t(u), 2^{-40})\rfloor)$ and $b_u = 1 - 1/B_u$, draw a private $V_u \sim \mathcal{U}([0,1])$, and let $\tilde{U}_{t,u} = b_u + V_u/B_u$ if $U_{a_t,u} \ge b_u$ and $b_uV_u$ otherwise.
The construction keeps $\tilde{U}$ uniform, so the scheme stays distortion-free, and the detector still reads the full $U$.
However, as we show in \cref{fig:comp:score-global}, it has a significantly worse quality-robustness trade-off than \eg the \texttt{Gaussian copula}.

\paragraph{\texttt{Private sampling}}
This is the naive idea: with probability $r$, ignore the watermark at this position and sample from $p_t$ and otherwise use the keyed sampler.
However, as we show in \cref{fig:comp:score-global}, it has a significantly worse quality-robustness trade-off than \eg the \texttt{Gaussian copula}.

\paragraph{\texttt{Channels}}
\texttt{Channels} maintain $L$ independent keyed fields $U_{j,a,u}$ and draw a single channel $J \sim \mathcal{U}\{0,\ldots,L-1\}$ per request, using $\tilde{U}_{t,u} = U_{J,a_t,u}$ throughout the completion.
The goal is the same as the \texttt{Gaussian copula}, spending private randomness on quality, except that here the randomness is discrete: two requests that share a prompt and a context take different fields.
The cost is paid at detection, where the detector runs all $L$ fields and corrects its p-value by a Bonferroni factor $L$.
However, as we show in \cref{fig:comp:score-global}, that factor dominates quickly: quality saturates around $L = 32$ while the robustness to every attack keeps falling, so only small $L$ are useful.

For adjusting the p-values to account for the $L$ tests, we can either use Bonferroni or \v{S}id\'ak 
Indeed, the $L$ per-channel p-values are independent under the null, so the \v{S}id\'ak correction $1 - (1 - \min_j P_j)^L$ is valid and strictly tighter than the Bonferroni factor $L$, and the submissions use both.
Yet, we observe no difference in practice~(\cref{tab:full-components-detection}).

\begin{figure}[t]
    \centering
    \includegraphics[width=\linewidth]{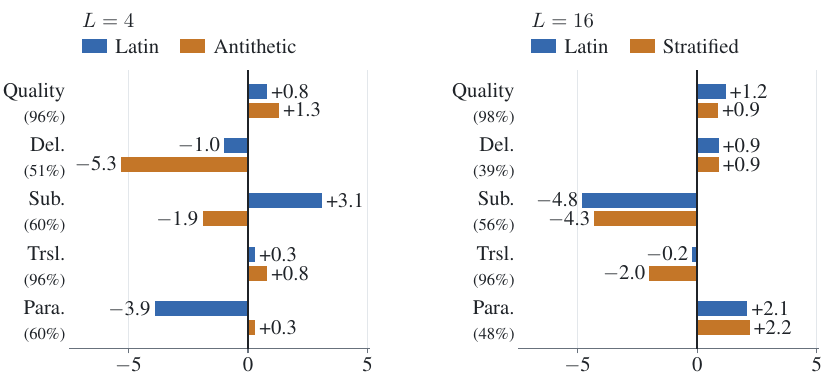}
    \caption{Impact of correlated \texttt{Channel variants} compared to $L$ independent \texttt{Channels}, keeping the channel count and Bonferroni correction fixed. Parentheses give the independent-channel reference percentages in each panel; bars show differences in percentage points.}
    \label{fig:comp:score-channel-variants}
\end{figure}

\texttt{Channels} can be further refined by correlating the $L$ fields, so that the request-level choice costs less detectability.
\begin{enumerate}[leftmargin=*]
	\item \texttt{Latin} channels derive them from a single base uniform by rotation, $U_{j,a,u} = (U_{a,u} + j/L) \bmod 1$.
	\item \texttt{Antithetic} channels work only for schemes that use normal scores $X=\Phi^{-1}(U_{j,a,u})$, and pair each keyed normal $X$ with $-X$.
	\item \texttt{Stratified} channels use a keyed permutation $\sigma_{a,u}$ and keyed jitters, $U_{j,a,u} = (\sigma_{a,u}(j) + \upsilon_{j,a,u})/L$.
\end{enumerate}
All three keep the same detector and use only Bonferroni correction (\v{S}id\'ak is no longer valid due to the dependence between tests), so~\cref{fig:comp:score-channel-variants} compares each against independent channels at the same $L$.
Yet, none of the three is a reliable improvement: the differences are within a few points everywhere and change sign across attacks.
We therefore suggest using the simpler version with independent keys, which also benefits from the more powerful \v{S}id\'ak correction.

\begin{figure}[t]
    \centering
    \begin{minipage}[t]{0.47\textwidth}
        \vspace{0pt}\centering
        \input{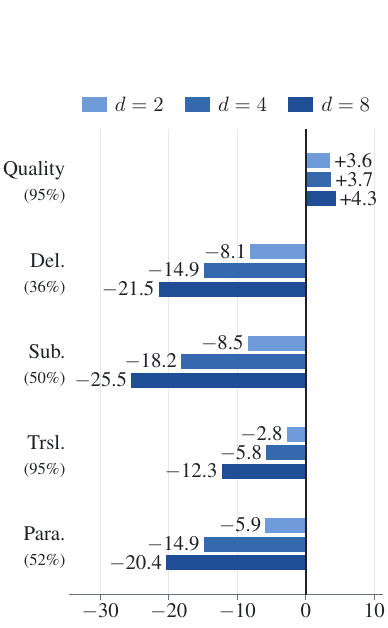}
    \end{minipage}\hfill
    \begin{minipage}[t]{0.47\textwidth}
        \vspace{0pt}\centering
        \input{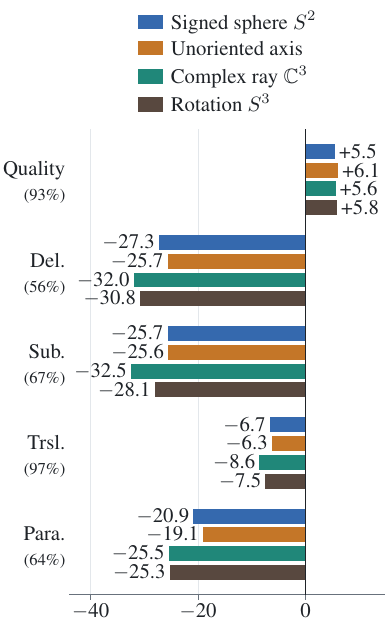}
    \end{minipage}
\end{figure}

\paragraph{\texttt{Gaussian direction}}
This is the first of the private-geometry scores~(\cref{fig:component-pipeline}), and it generalizes \texttt{Channels} from a discrete choice to a continuous one.
We use the context seed and watermark private key to sample a multivariate Gaussian $Z_{a_t,u} \sim \mathcal{N}(0, I_d)$, and \emph{per request} we draw a private unit direction $A \in S^{d-1}$ once, used for the whole completion:
\begin{equation}
	\label{eq:gaussian_direction}
	\tilde{U}_{a_t,u} = \Phi(A \cdot Z_{a_t,u}).
\end{equation}
Hence, two different requests (even with the same prompt) will have different directions, hence different watermark scores and a higher diversity.
As we show in \cref{fig:comp:score-gaussian-direction}, increasing the dimension $d$ significantly improves the quality at the cost of detectability, and thus robustness.

\paragraph{\texttt{Spherical}}
This is a similar idea to the \texttt{Gaussian direction}, but on compact spaces.
\begin{enumerate}[leftmargin=*]
	\item \texttt{Signed sphere} draws $Z_{a_t,u} \sim \mathcal{U}(S^2)$ and $A \in S^2$, with 
	$
		\tilde{U}_{a_t,u} = (1 + A\cdot Z_{a_t,u})/2.
	$
	\item \texttt{Unoriented axis} draws $Z_{a_t,u} \sim \mathcal{U}(S^2)$ and $A \in S^2$, with
	$
		\tilde{U}_{a_t,u} = \lvert A \cdot Z_{a_t,u} \rvert.
	$
	\item \texttt{Complex ray} draws $Z_{a_t,u} \sim \mathcal{U}(S^3)$ and $A$ unit vectors of $\mathbb{C}^3$ modulo phase with 
	$
		\tilde{U}_{a_t,u} = \lvert A^*Z_{a_t,u}\rvert^2(2-\lvert A^*Z_{a_t,u}\rvert^2).
	$
	\item \texttt{Rotation} draws $Z_{a_t,u} \sim \mathcal{U}(S^3)$ and $A$ is a unit quaternion with 
	$
		\tilde{U}_{a_t,u} = F(\lvert A\cdot Z_{a_t,u}\rvert),
	$
	where $F(r) = \frac{2}{\pi}[\arcsin r + r\sqrt{1-r^2}]$.
\end{enumerate}
\Cref{fig:comp:score-geometries} evaluates all four (using their corresponding detectors) compared to the simple baseline.
We find that they all trade off quality for detectability and that \texttt{Rotation} has the most favorable trade-offs.
Because the \texttt{Gaussian direction} and the spherical geometries only change the score $\tilde{U}$, they can be combined with any logits transformation: in~\cref{tab:full-components-score}, we evaluate each of them both with the \texttt{Gumbel race} and with the \texttt{ResidualRace}~(\cref{app:components:transform}).

\paragraph{\texttt{Wheel}}
Draw a keyed offset $U_{a_t}$ and a per-request random rotation $\varphi \sim \mathcal{U}([0,1])$, and feed $\tilde{U}_{a_t,u} = (U_{a_t,u} + \varphi) \bmod 1$ to the \texttt{QuantileRace} transformation of~\cref{app:components:transform}.
It is the continuous limit of the \texttt{Latin} channels of~\cref{fig:comp:score-channel-variants}: the rotation $\varphi$ is drawn from the continuous circle rather than from $L$ evenly spaced angles.
It requires the rotation-invariant detector of~\cref{app:components:detection}.

\subsection{Logits Transformation}
\label{app:components:transform}

\begin{figure}[t]
    \centering
    \includegraphics[width=\linewidth]{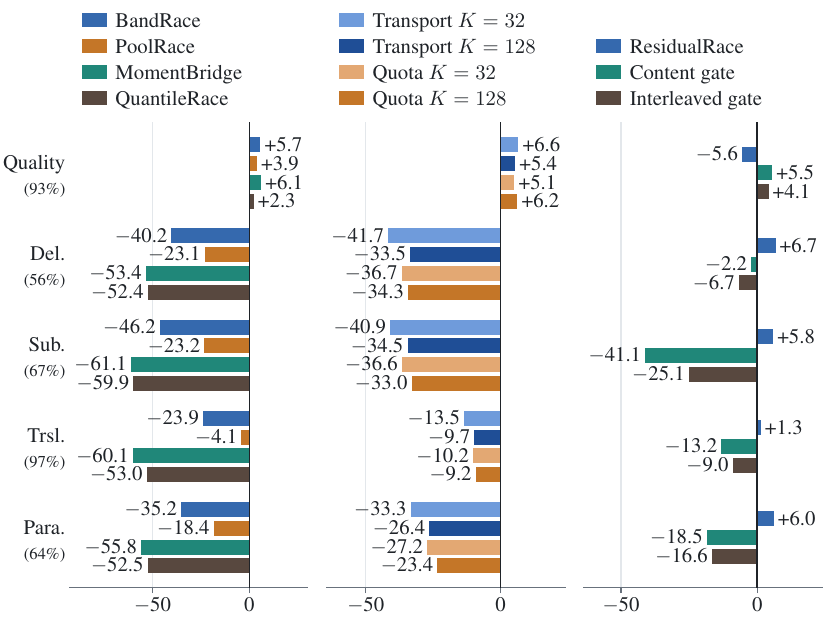}
    \caption{Impact of logits transformations compared to the \texttt{Gumbel race} baseline.}
    \label{fig:comp:transform-global}
\end{figure}

\paragraph{\texttt{Gumbel race}}
This is the transformation of~\citet{aar}, which we rewrite here with the group notation for clarity:
\begin{equation}
	\label{eq:gumbel_race}
	G_{t,u} = -\log(-\log \tilde{U}_{t,u}), \qquad u_* = \arg\max_u\left[\log M_t(u) + G_{t,u}\right],
\end{equation}
after which the token is drawn inside $u_*$ with the residual probability $p_t(v)/M_t(u_*)$.
By the Gumbel-max trick, $u_*$ has distribution $M_t$ in expectation over the key, so the scheme is distortion-free.
Equivalently, it is an exponential race: $u_*$ minimizes $-\log \tilde{U}_{t,u} / M_t(u)$.

\paragraph{\texttt{BandRace}}
Partition the groups into bands whose masses differ by less than a factor of four, 
$$
	\mathcal{B}_{t,b} = \{u : 4^b \le M_{t,\max}/M_t(u) < 4^{b+1}\},
$$
pick a band randomly with probability equal to its total mass, and run the \texttt{Gumbel race} inside it.
This increases quality as the choice of the band is truly random, but lowers the detectability~(\cref{fig:comp:transform-global}).

\paragraph{\texttt{PoolRace}}
It is the black-box equivalent of the \texttt{Gumbel race} (\ie it is an approximation of the \texttt{Gumbel race} that does not require access to the logits), similar to what has been proposed in~\citet{blackbox_wm}: draw $K = 16$ tokens $V_i$ independently from $p_t$, let $N_u$ count the proposals in group $u$, and pick
\begin{equation}
	u_* = \arg\min_{u: N_u > 0} \frac{-\log \tilde{U}_{t,u}}{N_u}.
\end{equation}
As we see in \cref{fig:comp:transform-global}, this variant slightly improves quality because the candidates are drawn with true randomness, but hurts detectability.

\paragraph{\texttt{MomentBridge}}
The closest equivalent to this scheme in the literature is SynthID~\citep{synthid}, where the watermark distribution is refined iteratively using the watermark keyed scores. 
Let $K \in \mathbb{N}$ be a scheme parameter.
Start with the weights $w_u^{(0)}=M_t(u)$ and the fixed costs $c_u = -\log M_t(u)$.
Let $X_{t,u} = \Phi^{-1}(\tilde{U}_{t,u})$ be the normal watermark score.
For $i=1,\dots,K$, sample random Gaussians $\eta_{i,u}$ and compute:
\begin{equation*}
	z_{i,u}=\frac{X_{t,u}}{\sqrt{K}}+\eta_{i,u}-\frac1{K}\sum_{\ell=1}^{K}\eta_{\ell,u}.
\end{equation*}
Let
\begin{equation*}
	  \bar c_i=\sum_uw_u^{(i-1)}c_u,\qquad
		\bar z_i=\sum_uw_u^{(i-1)}z_{i,u},
\end{equation*}
we update the weights iteratively with
\begin{equation*}
	    w_u^{(i)}=w_u^{(i-1)}\left(1+\frac{r_{i,u}}{A_i}\right),
\end{equation*}
where $A_i=\max_{u:w_u^{(i-1)}>0}|r_{i,u}|$,
\begin{equation*}
	    b_i=
	\frac{\sum_uw_u^{(i-1)}(c_u-\bar c_i)(z_{i,u}-\bar z_i)}
		{\sum_uw_u^{(i-1)}(c_u-\bar c_i)^2},
	\text{ and }
	r_{i,u}=z_{i,u}-\bar z_i-b_i(c_u-\bar c_i).
\end{equation*}
The next group is sampled randomly according to the $w^{(K)}$ distribution.
Because of the true randomness in $\eta_{i,u}$, this logits transformation trades off quality for detectability.
However, as we show in \cref{fig:comp:transform-global}, it achieves a relatively poor trade-off.

\paragraph{\texttt{QuantileRace}}
It replaces the \texttt{Gumbel race} with inverse-CDF selection: order each group (or token) in the vocabulary according to a keyed pseudo-random permutation $R_{a_t,u}$ with an interval of length $M_t(u)$, and emit the group whose interval contains the watermark score $\tilde{U}_{a_t}$.
See \cref{fig:comp:transform-global}.

\paragraph{\texttt{TransportRace} and \texttt{QuotaTransport}}
This is the second black-box family, and the only one that couples the transformation to the \texttt{Channels}.
Draw $L$ tokens from the model, compute the keyed score $E_{j,a_t,\mathcal{G}(V_i)}$ of every (channel, proposal) pair, and choose the permutation $\sigma_*$ maximizing $\sum_j E_{j,a_t,\mathcal{G}(V_{\sigma(j)})}$.
\texttt{QuotaTransport} stratifies the proposal pool before the same assignment, drawing one proposal from each of the $L$ mass intervals $(i+W)/L$ for a fresh $W \sim \mathcal{U}([0,1])$, so that the pool covers the distribution rather than concentrating on its mode.
Both were evaluated at $L \in \{32, 128\}$ against the \texttt{Gamma} detector with Bonferroni correction.
As we see in~\cref{fig:comp:transform-global}, all four configurations substantially improve quality, but they are less robust than the \texttt{Gumbel race}.

\paragraph{\texttt{ResidualRace}}
This is the extension of the \texttt{Gumbel race} described in~\cref{sec:eval:directions}.
We keep one residual clock $R_{a_t,u}$ per (context, group) pair, initialized at $-\log \tilde{U}_{a_t,u}$ on first use. 
At each step, we sample the next token according to the exponential race:
\begin{equation*}
	\tau_t=\min_u\frac{R_{a_t,u}}{M_t(u)},\qquad
	u_* = \arg\min_u\frac{R_{a_t,u}}{M_t(u)},\qquad
	R_{a_t,u}\leftarrow R_{a_t,u}-\tau_t M_t(u),
\end{equation*}
and replace the winning token's score with a fresh exponential random variable: $R_{a_t,u_*} \sim \operatorname{Exp}(1)$.
Because of the memorylessness property of the exponential, all losing residual exponentials are independent given the past, so the next winning probability is still $M_t(u)$ even when a context recurs and its masses have changed.
This is what makes it sound to re-use previous clocks in the exponential race when a context re-occurs~(\cref{prop:residual_race}).
As we show in \cref{fig:comp:transform-global}, it increases detectability at the cost of quality.
Yet, as we have shown in \cref{fig:residual-clock}, adding \texttt{Gaussian copula} noise to the initial clocks recovers the quality cost while keeping the detectability gains.

\paragraph{\texttt{Content gate} and \texttt{Interleaved gate}}
This is the skeleton family logits transformation, which relies on either the \texttt{Last important} or \texttt{Interleaved} context seeding.
Let $\mathcal{C} $ be the important groups, and $r_t = \sum_{u \in \mathcal{C}} M_t(u)$.
At each step, draw a fresh $B_t \sim \operatorname{Bernoulli}(r_t)$.
\begin{enumerate}[leftmargin=*]
	\item The \texttt{Content gate} uses the \texttt{Gumbel race} within the important groups when $B_t = 1$ and samples conditionally from the fillers otherwise.
	\item The \texttt{Interleaved gate} uses a \texttt{Gumbel race} inside whichever group the gate selects (\eg either the important or filler group), seeding the \texttt{Gumbel race} with the corresponding group context.
\end{enumerate}
As we see in \cref{fig:comp:transform-global}, the \texttt{Interleaved gate} strictly outperforms the \texttt{Content gate}, though both show limited robustness.

\subsection{Detection}
\label{app:components:detection}

\begin{figure}[t]
    \centering
    \includegraphics[width=\linewidth]{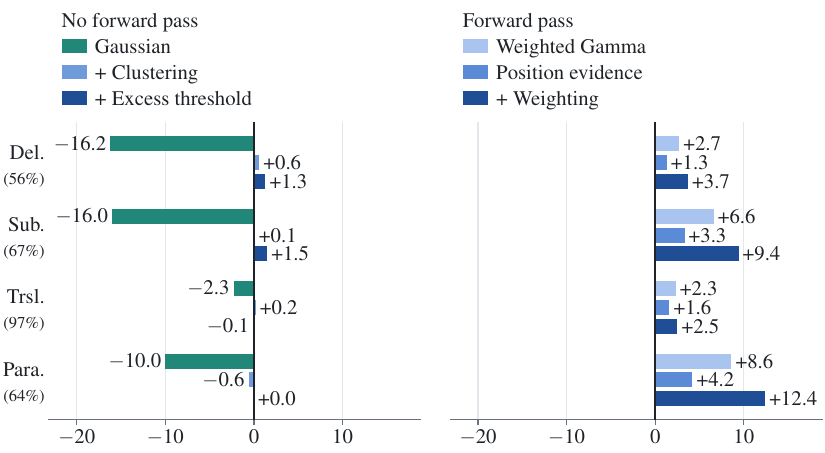}
    \caption{Comparison of scalar detectors without a model forward pass (left) and with a forward pass (right). Quality is omitted because changing the detector does not change generation.}
    \label{fig:comp:detector-global}
\end{figure}

\paragraph{Scalar \texttt{Gamma} and \texttt{Gaussian}}
The \texttt{Gamma} detector sums exponential scores, 
$$
	S(\omega) = \sum_{t \in D} -\log(1 - U_{a_t,\omega_t}) \sim \operatorname{Gamma}(N,1).
$$ 
The \texttt{Gaussian} detector sums the normal scores instead, 
$$
	S(\omega) = N^{-1/2}\sum_{t \in D} \Phi^{-1}(U_{a_t,\omega_t}) \sim \mathcal{N}(0,1).
$$
Both detectors weight the evidence differently: unlike the \texttt{Gaussian} detector, the \texttt{Gamma} detector gives more weight to the tail of the distribution, \ie to the tokens with very high watermarking scores.
As we see in \cref{fig:comp:detector-global}, the global \texttt{Gamma} detector significantly outperforms the \texttt{Gaussian} detector without changing the generation algorithm.

\paragraph{Gamma \texttt{Clustering} and \texttt{Excess} threshold}
Both detectors are refinements over the \texttt{Gamma} detector that group the evidence by context before aggregating.
Let $\mathcal{H}(\omega)$ be the set of unique contexts in $\omega$ and $D_a(\omega)$ the distinct successors of context $a$, with $m_a = \lvert D_a(\omega) \rvert$.
\begin{enumerate}[leftmargin=*]
	\item \texttt{Clustering} calibrates each context separately, and then aggregates the calibrated values,
	\begin{align*}
		S_a(\omega) &= \sum_{v \in D_a(\omega)} -\log(1-U_{a,v}), \\
		S(\omega) &= \sum_{a \in \mathcal{H}(\omega)} -\log Q(m_a, S_a(\omega)) \sim \operatorname{Gamma}(\lvert\mathcal{H}(\omega)\rvert, 1),
	\end{align*}
	so that a context with many successors cannot dominate the statistic on its own.
	Indeed, during generation, all successors of a given context take part in the same race, which means that two observations of the same race (\ie context) carry less information than two observations of distinct races.
	Hence, even though the detection is sound as long as no (context, candidate) pair is repeated, it should account for this loss of information whenever a context is repeated.
		\item \texttt{Excess} detection keeps only the part of each $-\log Q(m_a, S_a(\omega))$ above the $\operatorname{Exp}(1)$ median, 
	$$
		S(\omega) = \sum_a \max(-\log Q(m_a, S_a(\omega)) - \log 2, 0),
	$$ 
	which is zero with probability one half under the null and $\operatorname{Exp}(1)$ otherwise, giving the exact tail $R_H(s) = \sum_{b=1}^{H}\binom{H}{b}2^{-H}Q(b,s)$ for $H = \lvert\mathcal{H}\rvert$.
	This pushes the idea of the \texttt{Gamma} detector even further: only the tail of the distribution carries reliable watermarking signal.
\end{enumerate}	
As we see in \cref{fig:comp:detector-global}, both approaches indeed slightly increase detection power without changing the generation algorithm.

\paragraph{\texttt{Weighted Gamma} and \texttt{Position evidence}}
Those two detectors are additions on top of the \texttt{Gamma} detector that require a model forward pass at detection, and hence are considerably more costly.
Let $\hat{p}_t$ be the estimated next-token distribution given $\omega_{<t}$.
\begin{enumerate}[leftmargin=*]
	\item \texttt{Weighted Gamma} weights each exponential score by the observed token's surprisal quantized to quarter-unit, 
	$$
		\eta_t=1+\frac14\operatorname{round}\left(4\min(\max(-\log \hat{p}_t(\omega_t),0),2.5)\right)
	$$ 
	and uses as a detection statistic
	$$
		S(\omega) = \sum_{t \in D} -\eta_t\log(1 - U_{a_t,\omega_t}).
	$$
	The intuition is that a high-entropy position is where the race had a real choice, so its score carries more information than a position where the model was nearly deterministic.
	\item \texttt{Position evidence} goes further and reconstructs the race itself from the observed token's log-probability and those of up to eight competitors.
	At position $t$, let $C_t$ contain the retained competitors other than $\omega_t$.
	It then reconstructs the \texttt{Gumbel race} outcome:
	$$
		\alpha_{t,v}=\min\left(10^4,\frac{\hat{p}_t(v)}{\hat{p}_t(\omega_t)}\right), \qquad
		\lambda_t=\sum_{v\in C_t}\alpha_{t,v}, \qquad
		W_t=\mathbf{1}\!\left\{U_{a_t,v}<U_{a_t,\omega_t}^{\alpha_{t,v}}\text{ for all }v\in C_t\right\},
	$$
	where $W_t$ indicates whether the observed token wins the reconstructed race.
	Its exact race-win tail is
	$$
		R_t=\begin{cases}
			\displaystyle\frac{1-U_{a_t,\omega_t}^{1+\lambda_t}}{1+\lambda_t}, & W_t=1,\\[6pt]
			\displaystyle 1-U_{a_t,\omega_t}+\frac{U_{a_t,\omega_t}^{1+\lambda_t}}{1+\lambda_t}, & W_t=0.
		\end{cases}
	$$
	Under the null, $\Pr(W_t=1\mid U_{a_t,\omega_t}=u)=u^{\lambda_t}$ and $R_t\sim\operatorname{Uniform}(0,1)$~(\cref{prop:position_evidence}).
	When no competitor survives, $\lambda_t=0$ and $W_t=1$, giving $R_t=1-U_{a_t,\omega_t}$.
	Keeping only the first retained candidate for each context (\ie deduplicating occurrences of the context), denoted by $D_{\mathrm{pos}}$, avoids reusing the same race and gives the statistic
	$$
		S(\omega)=\sum_{t\in D_{\mathrm{pos}}}-\log R_t
		\sim\operatorname{Gamma}(\lvert D_{\mathrm{pos}}\rvert,1).
	$$
	It can also be combined with the weighted variant.
\end{enumerate}

As we have shown in \cref{sec:eval:directions}, and see in \cref{fig:comp:detector-global}, the weighted \texttt{Position evidence} detector is significantly more powerful than the versions that do not require a forward pass.

\paragraph{\texttt{Quantile} detector}
This is the detector matched to \texttt{QuantileRace}.
Let $R_{a_t,u}$ be the keyed ordering score and $\tilde{U}_{a_t}$ the keyed offset, it computes $S(\omega) = \sum_{(a,u) \in D} \cos(2\pi(R_{a_t,u} - U_{a_t}))$ and calibrates it by Monte Carlo, since the statistic has no closed-form null.

\begin{wrapfigure}[15]{R}{0.45\textwidth}
    \centering
    \vspace{-0.1in}
    \includegraphics[width=\linewidth]{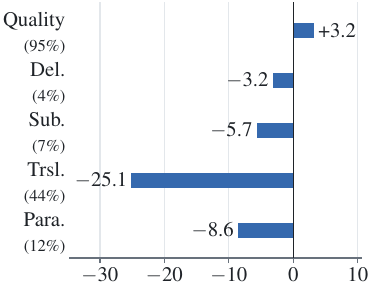}
    \vspace{-0.2in}
    \caption{Impact of the \texttt{Wheel} compared to \texttt{QuantileRace} with the \texttt{Quantile} detector.}
    \vspace{-0.1in}
    \label{fig:comp:quantile-wheel}
\end{wrapfigure}

\paragraph{\texttt{Quantile + wheel}}
This is the rotation-invariant detector required by the \texttt{Wheel}.
Form the angles $A_{a_t,u} = 2\pi((R_{a_t,u} - U_{a_t}) \bmod 1)$ and their resultant $S = \lVert(\sum_D \cos A_{a_t,u}, \sum_D \sin A_{a_t,u})\rVert$, then report an e-value rather than a p-value.
With $\kappa = 2\sqrt{\log(2000)/N}$, the quantity $\mathcal{E} = I_0(\kappa S)/I_0(\kappa)^N$ has unit expectation under the null, so $P = \min(1, 1/\mathcal{E})$ is valid.
As we see in~\cref{fig:comp:quantile-wheel}, the \texttt{Wheel} slightly improves quality but almost entirely removes the robustness of \texttt{QuantileRace}.

\paragraph{\texttt{Predictive Gamma}}
This variant of the \texttt{Gamma} detector applies to all the private-geometry scores.

Enumerate the retained pairs in $D$ as $(a_i,u_i)_{i=1}^N$ and write $Z_i = Z_{a_i,u_i}$. Let $A_0,\ldots,A_{K-1}$ be a fixed public dictionary of candidate geometries ($K=128$ for the \texttt{Signed sphere} and \texttt{Unoriented axis}, and $K=512$ for \texttt{Complex ray} and \texttt{Rotation}), with weights $w_{i,j}$ measuring how well $A_j$ explains the scores observed before step $i$.
Let $B_i$ denote the estimate of $A$ formed from the weights $w_{i,j}$ before observing $Z_i$.
Write $\tilde{U}(B,Z)$ for the corresponding geometric score from~\cref{app:components:score}, evaluated with geometry $B$ in place of $A$, and define $d(B,Z)=1-\tilde{U}(B,Z)$. At step $i$, compute $d_i=d(B_i,Z_i)$.

After scoring $Z_i$, update and normalize the weights as $w_{i+1,j} \propto w_{i,j}\,d(A_j,Z_i)^{-1/2}$, then compute
\begin{equation*}
	B_{i+1} = \begin{cases}
		\displaystyle \frac{\sum_{j=0}^{K-1} w_{i+1,j}A_j}{\left\lVert\sum_{j=0}^{K-1} w_{i+1,j}A_j\right\rVert}, & \text{signed sphere}, \\[10pt]
		\displaystyle \operatorname{eig}_{\max}\!\left(\sum_{j=0}^{K-1} w_{i+1,j}A_jA_j^\top\right), & \text{unoriented axis or rotation}, \\[10pt]
		\displaystyle \operatorname{eig}_{\max}\!\left(\sum_{j=0}^{K-1} w_{i+1,j}A_jA_j^*\right), & \text{complex ray},
	\end{cases}
\end{equation*}
where $\operatorname{eig}_{\max}$ returns a unit eigenvector associated with the largest eigenvalue and $*$ denotes the conjugate transpose. 
Under the null, each fresh score $Z_i$ is independent of the earlier scores, and the matching geometric transform makes $d_i$ uniform on $[0,1]$ conditional on them. Thus,
\begin{equation}
	\label{eq:predictive_gamma}
	S(\omega) = \sum_{i=1}^{N} -\log d_i \sim \operatorname{Gamma}(N,1).
\end{equation}
We prove this in~\cref{prop:predictive_gamma}.

\begin{figure}[t]
    \centering
    \begin{minipage}[t]{0.47\textwidth}
        \vspace{0pt}\centering
        \input{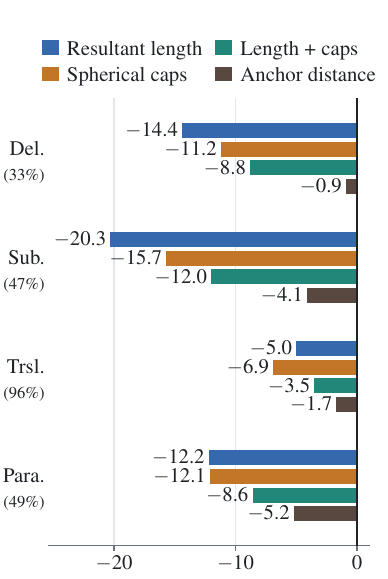}
    \end{minipage}\hfill
    \begin{minipage}[t]{0.47\textwidth}
        \vspace{0pt}\centering
        \input{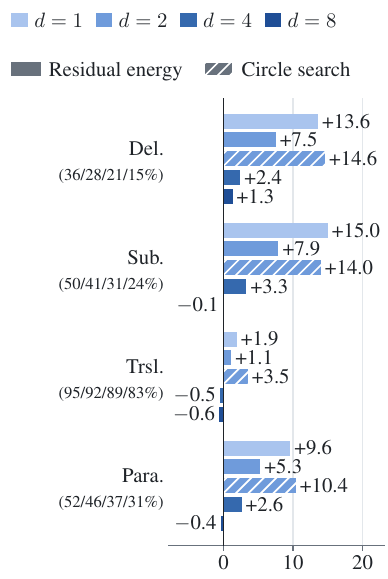}
    \end{minipage}
\end{figure}

\paragraph{\texttt{Resultant length}, \texttt{Spherical caps}, and \texttt{Anchor distance}}
These detectors test whether the \texttt{Signed sphere} scores concentrate around a common direction, without knowing the private direction $A$.
As above, write $Z_i = Z_{a_i,u_i}$ for the scores associated with the $N$ retained pairs in $D$.
\begin{enumerate}[leftmargin=*]
	\item \texttt{Resultant length} measures the alignment of the scores through $R = \lVert\sum_{i=1}^N Z_i\rVert$.
	Aligned scores produce a larger resultant, whereas scores pointing in different directions tend to cancel.
	The detector compares $R$ with its null distribution for $N$ independent uniform points on $S^2$, making it the spherical analogue of the \texttt{Gaussian norm} detector below.
	\item \texttt{Spherical caps} measures local concentration around $m=\min(N,16)$ anchor scores.
	For each anchor $Z_j$ and area fraction $q \in \{1/16,1/8,1/4\}$, it counts $C_{j,q} = \sum_{i \ne j}\indicator\{Z_j \cdot Z_i \ge 1-2q\}$.
	Under the null, conditional on $Z_j$, this count follows $\operatorname{Bin}(N-1,q)$.
	The detector takes the smallest upper-tail p-value and applies a Bonferroni correction for the $3m$ anchor--cap combinations.
	\item \texttt{Anchor distance} replaces cap membership with a continuous measure of alignment.
	For each anchor $Z_j$, it computes $S_j = \sum_{i \ne j} -\log((1-Z_j \cdot Z_i)/2)$, so that scores closer to the anchor contribute more evidence.
	Under the null, conditional on $Z_j$, the transformed inner products are independent uniforms, giving $S_j \sim \operatorname{Gamma}(N-1,1)$.
	Searching over all $N$ anchors gives the corrected p-value $P(\omega)=\min\{1,N\min_j Q(N-1,S_j)\}$.
\end{enumerate}
\Cref{fig:comp:detector-geometry} compares these detectors to the \texttt{Predictive Gamma} detector.
We find that the \texttt{Predictive Gamma} detector outperforms each of those geometry-specific detectors.

\paragraph{\texttt{Gaussian norm}, \texttt{Circle search}, and \texttt{Residual energy}}
These detectors apply to the \texttt{Gaussian direction} scores.
We write $Z_i = Z_{a_i,u_i}$ for the scores associated with the $N$ retained pairs in $D$.
\begin{enumerate}[leftmargin=*]
	\item \texttt{Gaussian norm}: rather than searching for the private direction $A$, it integrates it out through the norm of the summed scores,
	\begin{equation}
		\label{eq:gaussian_norm}
		S(\omega) = \frac{1}{N}\left\lVert \sum_{i \in D} Z_i \right\rVert^2 \sim \chi^2_d.
	\end{equation}
	\texttt{Circle search} and \texttt{Residual energy} extend this detector.

	\item \texttt{Circle search} searches explicitly for the private direction $A$ in dimension $d=2$.
	For each candidate direction $B(\theta)=(\cos\theta,\sin\theta)$, it computes the Gamma statistic
	\begin{equation*}
		\begin{aligned}
		S(\theta)&=\sum_{i=1}^N -\log\Phi(-B(\theta)\cdot Z_i), \\
		T&=\max_{\theta\in[0,2\pi)}S(\theta).
		\end{aligned}
	\end{equation*}
	Under the null, $S(\theta)\sim\operatorname{Gamma}(N,1)$ for each fixed $\theta$, but maximizing over directions requires a correction.
	The detector uses the continuous-search bound of~\citet{davies1987}~(\cref{app:proofs:detectors})
	\begin{equation*}
		P_0=\min\{1,Q(N,T)+2\sqrt{\pi T}f_N(T)\},
	\end{equation*}
	where $f_N$ is the density of $\operatorname{Gamma}(N,1)$.

	\item \texttt{Residual energy} measures the dispersion of the scores around their mean, in addition to their alignment.
	Let $\bar{Z}=N^{-1}\sum_{i=1}^N Z_i$. The two statistics are
	\begin{equation*}
		\begin{aligned}
		S_{\mathrm{mean}}&=N\lVert\bar{Z}\rVert^2, \\
		S_{\mathrm{res}}&=\sum_{i=1}^N\lVert Z_i-\bar{Z}\rVert^2.
		\end{aligned}
	\end{equation*}
	Under the null, they are independent, with $S_{\mathrm{mean}}\sim\chi^2_d$ and $S_{\mathrm{res}}\sim\chi^2_{d(N-1)}$.
	For $N>1$, let $p_{\mathrm{mean}}$ and $p_{\mathrm{res}}$ be their respective upper-tail p-values. The detector combines them through $S(\omega)=-\log p_{\mathrm{mean}}-\log p_{\mathrm{res}}\sim\operatorname{Gamma}(2,1)$ under the null, giving $P_0=Q(2,S(\omega))$.
	For $N=1$, it uses only the \texttt{Gaussian norm} statistic.
\end{enumerate}
As we see in~\cref{fig:comp:detector-gaussian}, both extensions improve over \texttt{Gaussian norm}, with \texttt{Circle search} giving the largest gains at $d=2$, but the gains of \texttt{Residual energy} shrink as $d$ grows and vanish at $d=8$.

\section{Examples of Invalid Autoresearch Schemes}
\label{app:invalid}

In this section, we present $4$ schemes that research agents designed using earlier iterations of our harness, whose verification was weaker than that described in~\cref{sec:method:verification}.
All four schemes appear to outperform prior work, yet they either exploit gaps in the evaluation process or have an abnormally high false positive rate.
We re-evaluate them with our harness in~\cref{tab:reward-hacking}: they show impressive robustness, but none passes our verification.
Nevertheless, we believe that some of their ideas are interesting and could be explored in future work.

\begin{table}[t]
\centering
\caption{Evaluation of invalid agent schemes. Each TPR cell reports the result from the prior harness the scheme was designed in / from our current harness~(\cref{app:setup}), whereas quality metrics only come from our current harness.
Verified reports whether the scheme passes our current verification.}
\label{tab:reward-hacking}
\scriptsize
\setlength{\tabcolsep}{2pt}
\resizebox{\linewidth}{!}{%
\begin{tabular}{@{}lrrrrrrrrc@{}}
\toprule
 & & \multicolumn{4}{c}{Robustness TPR@1 $\uparrow$} & \multicolumn{3}{c}{Quality deviation (\%) $\downarrow$} & \\
\cmidrule(lr){3-6}\cmidrule(lr){7-9}
Scheme & Clean TPR@1 $\uparrow$ & Deletion & Substitution & Back-trans. & Paraphrase & $\Delta$PPL & $\Delta$SB-2 & $\Delta$SB-3 & Verified \\
\midrule
ConcordMark-v0 & 100 / 100 & 86 / 83 & 99 / 96 & 99 / 99 & 91 / 90 & 5.7 & 8.9 & 23 & \faTimes \\
CathedralMark & 100 / 100 & 100 / 85 & 100 / 97 & 100 / 98 & 99 / 91 & 9.6 & 9.2 & 23 & \faTimes \\
OverlayMark & 100 / 100 & 100 / 84 & 100 / 97 & 100 / 99 & 100 / 92 & 7.1 & 9.4 & 24 & \faTimes \\
MeridianMark & 99 / 98 & 88 / 88 & 62 / 74 & 94 / 93 & 77 / 78 & 65 & 23 & 60 & \faTimes \\
\bottomrule
\end{tabular}%
}
\end{table}

\paragraph{Prior harnesses}
ConcordMark-v0, CathedralMark, OverlayMark, and MeridianMark come from a harness with neither private verification nor private evaluation (\ie the agent could see the whole evaluation dataset).
Furhtermore, the verifier only checked distortion-freeness and the false positive rate on $5{,}000$ English C4~\citep{c4} human texts, using a single fixed watermark key.
In particular, it had no check for soundness over the key.
Schemes that passed were then evaluated once using the watermark key set in their configuration, \ie chosen by the agent.

\paragraph{ConcordMark-v0}
Despite its name, ConcordMark-v0 is unrelated to ConcordMark~(\cref{app:concordmark}); we simply kept the independently submitted names for both.
Its generation is a \texttt{Gumbel race} with a one-token context (\ie \texttt{Fixed-length} context seeding with $k=1$).
Its detector, however, combines two p-values: $P_w$, the rank of the statistic under the private key among $1023$ decoy keys (\ie a Monte Carlo estimate of the p-value), and $P_c$, a p-value derived from a zero-shot LLM-text classifier, specifically an $n$-gram classifier.
The agent trained this classifier to separate its own watermarked texts (clean and attacked) from $2{,}500$ human texts and calibrated it on $2{,}500$ other human texts.
A text is flagged as watermarked if $P_w \le 0.002$ or $P_c \le 0.008$, so that, by the union bound, the false positive rate on human texts is at most $1\%$.
This significantly improved the scheme's robustness~(\cref{tab:reward-hacking}).

Yet, the classifier learned to recognize LLM-generated text in general rather than watermarked LLM-generated text, \eg on unwatermarked replies from WildChat, it wrongly flags $31\%$ of the texts as watermarked at a supposedly $1\%$ FPR.
This shows the importance of soundness over the key.
Indeed, $P_c$ does not depend on the key, so the scheme violates soundness over the key~(\cref{eq:soundness_over_key}) (a text flagged by the classifier is flagged under every key).

\paragraph{CathedralMark}
CathedralMark is a later scheme from the same run that keeps the detector of ConcordMark-v0 but also steers the generation toward its classifier.
Depending on a keyed latent state, it either tilts the next-token distribution toward the tokens that increase the classifier score or applies a compensating tilt, so that, in expectation over the key, the next-token distribution remains $p_t$.
Hence, the scheme passes the distortion-freeness test but, like ConcordMark-v0, fails the soundness over the key check.
Nonetheless, the idea of steering watermark generation toward easily detectable LLM patterns is particularly robust: it achieves more than 90\% robustness to paraphrasing.

\paragraph{OverlayMark}
For OverlayMark, a scheme similar to CathedralMark, the agent carefully selected a watermark key that maximized TPR@1\%FPR on the evaluation dataset because the evaluation was public.
Hence, as shown in~\cref{tab:reward-hacking}, it achieves 100\% paraphrasing robustness in the original harness's evaluation, but this falls to 92\% in our updated harness, which varies the key (we take the worst TPR@1\%FPR out of $5$ random keys).

\paragraph{MeridianMark}
MeridianMark uses a \texttt{Gumbel race} seeded by the last $4$ tokens mapped to $32$ semantic clusters.
Its detector combines an entropy-weighted \texttt{Gamma} statistic and a tail statistic, and computes a p-value by fitting a Cauchy distribution to the statistics of $1024$ decoy keys.
It then remaps the p-values: every p-value below $0.0066$ is linearly mapped to $[0, 0.001]$.
The agent chose this threshold just below the fifth-smallest p-value of the public human texts ($0.0069$), and documented it explicitly in its submission: ``\texttt{evaluate} never measures human FPR [\dots] exactly 4 humans fall under it (one-text margin)''.
This shows that, if verification is public, the agent can exploit the metric to pass verification incorrectly.
This is why our final harness uses private verification.

\paragraph{Takeaways}
Each of the above schemes exploits a gap in the harness: a human null corpus that does not cover LLM-generated text (ConcordMark-v0 and CathedralMark), a watermark key chosen by the agent (OverlayMark), or the ability to exploit public verification (MeridianMark).
Our harness closes these gaps: our \emph{private} verification corpus includes LLM-generated replies~(\cref{app:setup:verification}), we evaluate each scheme with $5$ private keys and report the worst result~(\cref{app:setup:evaluation}), and our soundness over the key test only requires a single text on which the detector is miscalibrated over the key to reject a scheme, independently of how well the detector is calibrated on any specific human corpus.

\section{Studying the Soundness Given a Key}
\label{app:soundness_key}

In this section, we investigate the effect of enforcing soundness given a key~(\cref{sec:method:definition_validity}).
In~\cref{fig:research-trajectories}, we plotted the autoresearch trajectories of each agent independently, and as we explain they tackled soundness given a key differently which makes the comparison uneven.
Namely, \astra multiplies its p-values by $1/\beta = 20$ to be provably sound given a key, \opus applies smaller safety margins that it calibrated locally, and \gemini and the baselines apply none.
For the baselines in particular, in the settings of~\cref{fig:research-trajectories}, neither AAR nor SynthID passes our verification of being sound given a key.

\paragraph{Experimental setup}
We rescore the submissions of the four agents from~\cref{fig:research-trajectories} and the three baselines.
For each scheme, we remove the explicit outer p-value multiplier chosen by the agent (if any), and keep the rest of the detector unchanged (\eg the Bonferroni correction over \texttt{Channels}) so it remains sound over the keys.
We then evaluate three settings:
\begin{itemize}[leftmargin=*]
	\item \textbf{Guaranteed}: we multiply the p-value by $1/\beta = 20$ as in~\cref{eq:calibrated_pvalue}. By~\cref{prop:soundness_given_key}, any scheme that is sound over the key becomes sound given a key.
	\item \textbf{Empirical}: we multiply the p-value by the smallest multiplier $\kappa \ge 1$ such that the scheme passes both soundness tests~(\cref{alg:soundness_over_key_test,alg:soundness_given_key_test}), with the same thresholds, sample sizes and significance levels as in~\cref{sec:method:verification}.
	\item \textbf{No correction}: we use the raw p-values ($\kappa = 1$) and do not enforce the soundness given a key test.
\end{itemize}
\begin{wrapfigure}{r}{0.45\textwidth}
    \centering
    \vspace{-0.28in}
    \includegraphics[width=\linewidth]{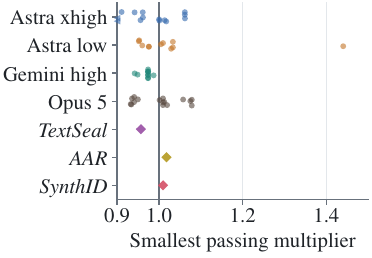}
    \vspace{-0.25in}
    \caption{Smallest p-value multiplier passing our soundness tests.}
    \label{fig:soundness_key_multipliers}
    \vspace{-0.4in}
\end{wrapfigure}

In all settings, the scheme must still be distortion-free and sound over the key.
We then measure the paraphrasing TPR@1\%FPR with the protocol of~\cref{sec:eval}.

\begin{figure}[t]
    \centering
    \includegraphics[width=\linewidth]{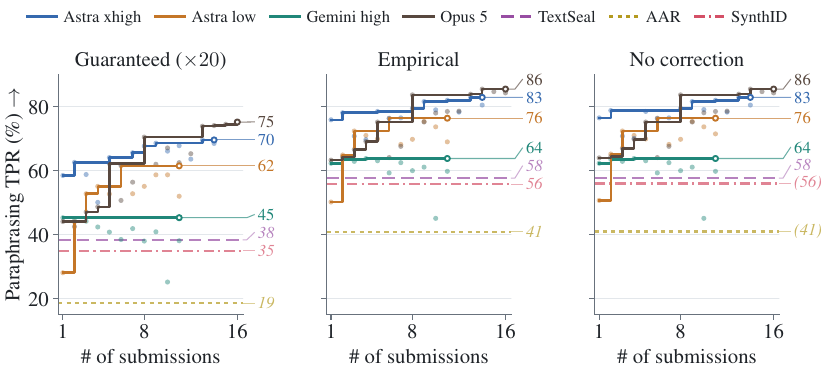}
    \caption{Paraphrasing robustness of the submissions of each agent under the three p-value corrections, where each dot corresponds to a submission and the solid lines are the best value so far. The baselines are in italic, and a parenthesis means the baseline fails our soundness given a key test without correction.}
    \label{fig:soundness_key_trajectories}
\end{figure}

\paragraph{Baselines fail without correction}
In~\cref{fig:soundness_key_multipliers}, we show the smallest multiplier each scheme requires to pass our soundness tests.
Among the baselines, only TextSeal passes without correction.
With raw p-values, both AAR and SynthID fail the soundness given a key test.
Similarly, $23$ of the $52$ agent submissions would fail without their outer multiplier.
In fact, both \astra (low) and \opus only started adding such multipliers after their first submissions were rejected by our soundness given a key test.

\paragraph{The guaranteed correction is overly conservative}
Yet, the multiplier required in practice is much smaller than $1/\beta$: at most $1.08$ for all but one submission.
This is expected: if the watermark pseudo-random scores are diverse enough over a corpus given a fixed key, by soundness over the key the expected false-positive rate will converge to be below $\alpha$ which satisfies soundness given a key.
This means that correcting with $1/\beta$ is overly conservative in many cases, costing significant power.
This is why we opted to not correct the schemes in our main evaluation in~\cref{sec:eval:trajectories} unless the agents did so in their submissions.

\paragraph{Impact on robustness}
We show in~\cref{fig:soundness_key_trajectories} the paraphrasing robustness trajectories in each setting.
The guaranteed correction costs on average $16.3$ percentage points of paraphrasing TPR compared to the empirical correction, whereas the empirical correction costs at most $2.5$ percentage points compared to no correction.
Importantly, the conclusions of~\cref{sec:eval:trajectories} hold in all three settings: all agents improve over time and outperform all baselines, with \opus and \astra (xhigh) reaching the highest robustness.
In particular, in the empirical setting, the best \astra (xhigh) submission reaches $83\%$ paraphrasing TPR, compared to $70\%$ in the guaranteed setting, closing most of the gap with \opus ($86\%$).

\section{Experimental Setup}
\label{app:setup}

In this section, we detail the experimental setup used in our evaluation~(\cref{sec:eval}).
Our harness distinguishes two configurations: the \emph{public} configuration, which is given to the research agent and which it can run locally, and the \emph{private} configuration, which is only used by the grader when the agent submits a scheme~(\cref{sec:method:framework}).

\subsection{Watermark Verification}
\label{app:setup:verification}

To verify each watermarking scheme, we run the three statistical tests from~\cref{sec:method:verification} on a corpus of non-watermarked texts.
A scheme is valid only if it passes all three tests.

\paragraph{Corpora}
The public corpus consists of the answers of the \textsc{ELI5}~\citep{eli5} dataset.
The private corpus is a mixture where half of the texts come from WildChat~\citep{wildchat} and the other half from Wikipedia articles, evenly split across $10$ languages (English, German, French, Spanish, Chinese, Russian, Arabic, Hindi, Japanese, and Swahili).
For WildChat, we take in each conversation the first assistant reply that has at least $512$ tokens (\ie non-watermarked LLM-generated text), across all $86$ shards of \textsc{WildChat-4.8M}.
For Wikipedia, we use the first shard of the 2023-11-01 dump of each language.
In both corpora, we tokenize the texts with the \textsc{Llama-3.1-8B-Instruct} tokenizer, discard texts shorter than $512$ tokens, and truncate the others to their first $512$ tokens.
Whenever a test samples texts from the corpus, it samples them from this mixture, according to the weights of each component.

\paragraph{Soundness tests}
For soundness over the key~(\cref{alg:soundness_over_key_test}), we sample $200$ fixed reference texts from the corpus and draw $m$ keys per text, with $m=10{,}000$ in the public configuration and $m=20{,}000$ in the private one.
For soundness given a key~(\cref{alg:soundness_given_key_test}), we draw $1000$ keys and, for each key, sample $20{,}000$ fresh texts from the corpus (with replacement).
We set the allowed bad-key fraction to $\beta = 5\%$ with a single screening level of $1\%$.
In the private configuration, both tests check the thresholds $\alpha \in \{0.001, 0.01, 0.05\}$, whereas the public configuration only checks $\alpha \in \{0.01, 0.05\}$.
The two soundness tests share a significance level of $0.001$, split equally between them (\ie $\delta = 0.0005$ in~\cref{alg:soundness_over_key_test,alg:soundness_given_key_test}).

\paragraph{Distortion-freeness test}
For distortion-freeness~(\cref{alg:distortion_free_test}), we sample $4$ texts from the corpus and use their first $128$ tokens as contexts.
We compute the next-token distributions of \textsc{Llama-3.2-1B-Instruct} (temperature $1$, top-$k$ $50$) on each context, and average the watermarked distributions over $2048$ keys, testing the $16$ most likely tokens (and the remaining mass).
We set the significance level to $\delta = 0.001$.
As in~\cref{alg:distortion_free_test}, if the scheme leaves all $4$ distributions unchanged, the test is inconclusive and the scheme is rejected.

\subsection{Watermark Evaluation}
\label{app:setup:evaluation}

The public and private evaluations share the same protocol, which we describe first, and differ in the watermark keys, prompts, and sampling seeds.

\paragraph{Generation}
We sample $1000$ prompts from the questions of the train split of \textsc{ELI5}~\citep{eli5}, keeping only questions of more than $100$ characters and removing duplicates.
We generate the replies with \textsc{Llama-3.1-8B-Instruct}, prompting it with the raw question.
We sample with temperature $1$, top-$k$ $50$, and top-$p$ $1$, and force replies of $200$ to $300$ tokens.

\paragraph{Detection and robustness}
To measure watermark strength (either before or after the attacks), we report the TPR@1\%FPR, \ie the fraction of the $1000$ replies for which the p-value returned by the detector is below $0.01$.
Because the verification~(\cref{app:setup:verification}) screens the p-values for soundness, we do not calibrate the threshold empirically.
For robustness, we use four attacks.
For 50\% word deletion, each whitespace-separated word is deleted independently with probability $0.5$.
For 50\% synonym substitution, we randomly replace half of the words (or all replaceable words if there are fewer) with a random WordNet synonym.
For back-translation through French and paraphrasing, we use \textsc{GPT-5.4-mini} (\texttt{gpt-5.4-mini-2026-03-17}) with the default sampling parameters of the OpenAI API and the prompts from~\cref{app:prompt:attacks}.

\paragraph{Quality}
To measure the impact on quality, we use three different metrics.
We compute the perplexity of each of the $1000$ replies with \textsc{Llama-3.1-8B-Instruct}, on the reply alone (\ie not conditioned on the prompt), and average the perplexities over the replies.
We also measure the diversity of replies to a single prompt: we sample $10$ additional prompts from \textsc{ELI5} and generate $100$ replies for each prompt, following the protocol established in~\citet{unified_wm_framework}.
For each reply, we compute the Self-BLEU (bigram and trigram) score on token IDs, using the $99$ other replies to the same prompt as references, and average the scores over the $1000$ replies.
We then compare each quality metric to that of the unwatermarked model, evaluated with the same protocol.
Then, we report the relative distance to the unwatermarked model (in percent), and, when we report an aggregated quality metric, we average the relative distances of the three metrics.

\paragraph{Public and private evaluation}
In the public configuration, the agent evaluates its scheme locally with its own watermark key, and the prompts are always the first $1000$ (and $10$ for diversity) \textsc{ELI5} questions after shuffling the dataset with seed $0$.
In the private configuration, the grader evaluates each submission with $5$ private watermark keys.
For each key, the grader derives from the key the seeds used to shuffle \textsc{ELI5} (separately for the detection and diversity prompts) and the generation seed, so that the private prompts differ from the public ones and across keys.
The grader then reports, for each metric, the worst result over the $5$ keys: the minimum TPR and the maximum distance to the unwatermarked model.
Only the aggregated metrics and the verification verdicts are returned to the agent.
The private evaluation runs in an isolated container.

\subsection{Research Agents}
\label{app:setup:agents}

\begin{table}[t]
    \centering
    \caption{Research agents from~\cref{fig:research-trajectories}.}
    \label{tab:setup-agents}
    \small
    \setlength{\tabcolsep}{4.5pt}
    \begin{tabular}{lllll}
        \toprule
        Agent & Interface & Version & Reasoning  & Submissions \\
        \midrule
        \astra (xhigh) & Codex CLI & 0.153.3 & xhigh & $14$ \\
        \astra (low) & Codex CLI & 0.153.3 & low & $11$ \\
        \opus & Claude Code & 2.1.231 & high  & $16$ \\
        \gemini (high) & Antigravity CLI & 1.1.25 & high  & $11$ \\
        \bottomrule
    \end{tabular}
\end{table}

We summarize the research agents in~\cref{tab:setup-agents}.
We run all agents with the prompt of~\cref{app:prompt:agent}, in their own command-line interface and with subagents enabled.
When an agent exits, the harness starts a new agent in the same container and Git checkout, and the loop stops once the submission budget is exhausted and all submissions are graded.

\paragraph{Environment}
Each agent runs in a persistent Docker container with one NVIDIA RTX PRO 6000 Blackwell GPU (96\,GB), $48$ CPU cores, and full Internet access.
The container holds the watermarking library (with the baselines of~\cref{sec:method:framework}), the public configurations, and API keys for third-party AI providers (intended to be used for \eg paraphrasing attacks).
The grader runs in a separate container, which does not share any storage with the agent container, and on a different GPU.

\subsection{Baselines}
\label{app:setup:baselines}

Here, we first detail how the three baselines (AAR, SynthID, and TextSeal) work using the notation from~\cref{sec:method:preliminaries} and then go through their hyperparameters.
Recall that $\omega \in \Sigma^*$ is a sequence of tokens, $p_t \in \Delta(\Sigma)$ the conditional probability distribution of the next token given $\omega_{<t}$.
We assume there exists a sequence $(U_{a_t})$ of pseudo-random vectors in $[0,1]^{|\Sigma|}$ seeded by the context hash $a_t$ whose entries are i.i.d. uniform variables.
The watermark is defined by $f: [0,1]^{|\Sigma|} \times \Delta(\Sigma) \rightarrow \Delta(\Sigma)$, and $S(\omega)$ is the detector statistic.

\paragraph{AAR}
With AAR, the logits transformation is simply a Gumbel race, and the next token is sampled according to
\begin{equation*}
    \omega_t = \mathop{\mathrm{arg\,max}}_{v\in\Sigma}\left( -\log(-\log U_{a_t,v}) + \log p_t(v) \right).
\end{equation*}
Because $-\log(-\log(U_{a_t,v}))$ are i.i.d. Gumbel random variables, using the Gumbel-max trick we can show that in expectation over $U_{a_t}$ the next token $\omega_t$ is sampled according to $p_t$.
For detection, we use the one-sided Kolmogorov-Smirnov test against the uniform null, with the statistic
\begin{equation*}
    S(\omega) = \max_{1 \le i \le N} \left( U_{(i)} - \frac{i-1}{N} \right),
\end{equation*}
where $U_{(i)}$ are the ordered scores $(U_{a_t,\omega_t})$.

\paragraph{SynthID}
With SynthID, the logits transformation corresponds to a repeated tournament sampling with $L$ layers and $C$ leaves.
Instead of a single watermark score, SynthID uses $L$ independent watermark scores $U^{(1)}_{a_t},\dots,U^{(L)}_{a_t}$, one per layer, which it binarizes into $g^{(l)}_{a_t,v} \mathrel{:=} \indicator\{U^{(l)}_{a_t,v} > 1/2\}$.
Let $p_t^{(l)}$ be the probability distribution at layer $l$ with $p_t^{(0)} \mathrel{:=} p_t$.
Then, at each layer we apply
\begin{equation*}
    p_t^{(l)}(v) = p_t^{(l-1)}(v) \cdot
    \begin{cases}
        \dfrac{1-(1-M_t^{(l)})^C}{M_t^{(l)}} & \text{if } g^{(l)}_{a_t,v} = 1, \\
        (1-M_t^{(l)})^{C-1} & \text{otherwise},
    \end{cases}
    \quad \text{with} \quad
    M_t^{(l)} \mathrel{:=} \sum_{v \in \Sigma} p_t^{(l-1)}(v)\, g^{(l)}_{a_t,v},
\end{equation*}
and the logits transformation is $f(U_{a_t}, p_t) \mathrel{:=} p_t^{(L)}$.
For detection, we use the weighted mean detector over the retained positions $t \in D$, with statistic
\begin{equation*}
    S(\omega) = \frac{1}{N}\sum_{t \in D} \sum_{l=1}^{L} w_l\, g^{(l)}_{a_t,\omega_t},
\end{equation*}
where $w_1,\dots,w_L \ge 0$ are layer weights summing to $1$.
Under the null hypothesis, the per-token scores $\sum_{l} w_l\, g^{(l)}_{a_t,\omega_t}$ are i.i.d. with mean $1/2$ and variance $\sigma_w^2 \mathrel{:=} \frac{1}{4}\sum_{l} w_l^2$.
We thus use a normal approximation and return the p-value $1 - \Phi\left(\sqrt{N}\,(S(\omega) - 1/2)/\sigma_w\right)$, where $\Phi$ is the standard normal CDF.

\paragraph{TextSeal}
With TextSeal, the logits transformation is the same Gumbel race as AAR, but with two keys: a primary and a secondary key, yielding two watermark scores $U_{a_t}$ and $U'_{a_t}$.
At each step, the watermark score used by the logits transformation is $\tilde{U}_{a_t} = U'_{a_t}$ with probability $\lambda$ and $\tilde{U}_{a_t} = U_{a_t}$ otherwise, where the choice is drawn with true randomness.
For detection, we fuse the two scores of each retained position and use the weighted statistic
\begin{equation*}
    S(\omega) = \sum_{t \in D} w_t\,(-(1-\lambda)\log(1-U_{a_t,\omega_t}) - \lambda \log(1-U'_{a_t,\omega_t})),
\end{equation*}
where the weights $w_t \in [w_{\min}, w_{\max}]$ increase linearly with the entropy $H_t$ of the next-token distribution of an auxiliary LLM at position $t$, \ie $w_t = w_{\min} + (w_{\max} - w_{\min})(H_t - \min_{t'} H_{t'})/(\max_{t'} H_{t'} - \min_{t'} H_{t'})$.

\paragraph{Watermark hyperparameters}
All three baselines seed their watermark with a hash of the $3$ preceding tokens and the private key, and do not apply the watermark when the context was already seen in the same request.
All sample from the top-$50$ tokens.
For AAR~\citep{aar}, we use the Gumbel-max sampling, and its detector, a one-sided Kolmogorov--Smirnov test on the de-duplicated Gumbel scores.
For SynthID~\citep{synthid}, we use the tournament sampling with $30$ layers and $2$ leaves, restricted to the top-$50$ tokens of the unwatermarked distribution, and the weighted mean detector (with weights decreasing linearly from $10$ to $1$ across layers) with a normal approximation.
For TextSeal~\citep{textseal}, we use a secondary key with probability $\lambda = 0.1$, and weights between $w_{\min} = 0.1$ and $w_{\max} = 1$ derived from the entropy of \textsc{Gemma-3-270M-it}.

\subsection{Held-Out Evaluation}
\label{app:setup:heldout}

For the held-out evaluation of~\cref{sec:eval:best}, we use a different protocol, which neither the agents nor our harness used.

\paragraph{Prompts}
We sample prompts from \textsc{WildChat-4.8M}~\citep{wildchat} using its language labels, English and Chinese.
We use the first user turn of conversations whose second turn is an assistant reply, remove duplicate prompts, and discard prompts longer than $3500$ tokens.
We keep a prompt only if its original WildChat reply has at least $200$ tokens with both the \textsc{Llama-3.1-8B-Instruct} and the \textsc{Qwen3.8-27B} tokenizers.
We then take the first $1000$ qualifying prompts of each language after shuffling the dataset shards with a fixed seed.
The prompts are the same for all schemes, models, and keys.

\paragraph{Generation}
We generate the replies with \textsc{Llama-3.1-8B-Instruct} and \textsc{Qwen3.8-27B} (with thinking disabled), using their chat template with a single user message and no system prompt.
We sample with temperature $1$, top-$k$ $50$, and top-$p$ $1$, and force replies of $200$ to $300$ tokens.

\paragraph{Metrics}
We report the TPR@1\%FPR as in~\cref{app:setup:evaluation}.
For word deletion and synonym substitution, we segment the English texts with a regular expression and the Chinese texts with Jieba.
Deletion removes exactly half of the words, chosen at random, and substitution replaces half of the words with a random synonym from WordNet (English) or the Open Multilingual WordNet (Chinese); if fewer words are replaceable, we replace all of them.
For back-translation and paraphrasing, we use \textsc{DeepSeek-v4 Flash} with temperature $1$, reasoning disabled, and the prompts of~\cref{app:prompt:attacks}.

We compute the perplexity of the replies with the model that generated them, on the reply alone.
For diversity, we use the first $10$ prompts of each language and generate $100$ replies per prompt, and compute the Self-BLEU scores as in~\cref{app:setup:evaluation}.
We report the relative distance (in percent) to the unwatermarked model with the same model and language.
We use $5$ new watermark keys, different from all keys used in our harness, and report for each metric the worst result over the $5$ keys.

Lastly, for RotationFirst, we rebuild the \texttt{Lexical group} map for each tokenizer and extend it to Chinese tokens with the Open Multilingual WordNet.

\section{Our Autoresearch Harness}
\label{app:implementation_speed}

In this section, we detail our watermarking library~(\cref{app:harness:library}) and motivate the design of the research environment~(\cref{app:harness:environment}).

\subsection{Our Watermarking Library}
\label{app:harness:library}

\paragraph{Library design}
We design our watermarking library around four primary components: the context seeding, the watermark score, the logits transformation, and the detector, following the formalization described in~\cref{sec:method:preliminaries} and the work of~\citet{unified_wm_framework}.
At generation time, the watermarking scheme is defined as a vLLM logit processor.
It takes as input the list of tokens already generated and the next-token logits (\ie the unnormalized next-token probability distribution) and returns the watermarked next-token logits.
It is stateful per prompt but stateless across prompts (\ie each prompt is treated independently).
For detection, it takes as input a list of tokens and returns a p-value.
This simple design, essentially a class with two methods, ensures that the proposed schemes can be easily evaluated and, following~\citet{unified_wm_framework}, is expressive enough to represent most prior watermarking schemes.
In fact, alongside the library boilerplate, we provide the research agents with implementations of the following watermarks: AAR~\citep{aar}, SynthID~\citep{synthid}, DiPMark~\citep{dipmark}, TextSeal~\citep{textseal}, MCMark~\citep{mcmark}, SWEET~\citep{sweet}, KGW~\citep{kgw}, unigram~\citep{unigram}, and the schemes from~\citet{unified_wm_framework}.

\begin{wraptable}{r}{0.45\textwidth}
    \centering
    \vspace{-0.15in}
    \caption{End-to-end generation throughput (tokens/s). Reference implementations use Transformers; our library uses vLLM with one or up to eight prompts per request.}
    \label{tab:implementation_speed}
    \small
    \begin{tabular}{lrrr}
        \toprule
        Scheme & Reference & Ours (1) & Ours (8) \\
        \midrule
        SynthID & $152.3$ & $243.5$ & $628.2$ \\
        AAR & $194.5$ & $303.9$ & $1{,}413.3$ \\
        TextSeal & $186.9$ & $298.6$ & $1{,}338.7$ \\
        \bottomrule
    \end{tabular}
    \vspace{-0.1in}
\end{wraptable}

\paragraph{Throughput}
We compare our vLLM library with the reference implementations of AAR, SynthID, and TextSeal on NVIDIA RTX PRO 6000 Blackwell GPUs.
For each scheme, we generate $128$ tokens for each of $100$ \textsc{ELI5} prompts with \textsc{Llama-3.2-1B-Instruct}, temperature $0.7$, and top-$p$ $1$.
Our runs and the SynthID reference use top-$k$ $50$; the AAR and TextSeal references expose no top-$k$ option and sample from the full vocabulary.
We exclude model loading and compute throughput as total generated tokens divided by total generation time.
Our unbatched generation is $1.56$--$1.60\times$ faster than the references; batching eight prompts increases this to $4.12$--$7.27\times$.
\wrapfill

\subsection{Research Environment}
\label{app:harness:environment}

Here, we motivate the two-phase submission budget of~\cref{sec:method:framework} with two runs from~\cref{tab:full-trajectories}.
We find that, when several submissions are available, agents mostly spend them on variants of a single scheme, whereas when a single submission is left, they explore distinct ideas locally before committing to one.

\paragraph{\astra (low)}
In this run, the first agent was given all $11$ submissions at once.
It spent $10$ of them in about four hours on FreshRace and $8$ of its variants (FreshRace-lexical was submitted twice).
Then, with a single submission left, the $8$ subsequent agents implemented and evaluated locally $15$ new schemes before submitting RenewalRace.
Moreover, when we retrospectively submit the $15$ unsubmitted schemes to our private grader, all of them pass our verification, \ie they are not merely failed attempts.

\paragraph{\opus}
This run uses the two-phase budget.
In the first phase, the agent spent its $4$ submissions on a single scheme, TallyMark.
After its first submission failed our soundness given a key test, it changed the context seeding (also rejected), then reverted this change and multiplied the p-values by $4$, and finally resubmitted the same code with an updated description.
In the second phase, where each agent has a single submission, all $12$ submissions are distinct schemes, which significantly improve robustness.

\section{Statistical Tests}
\label{app:statistical_tests}

In this section, we present the algorithmic description of the three statistical tests from~\cref{sec:method:verification}.
We prove in~\cref{app:proofs:tests} that each test rejects a valid scheme with probability at most $\delta$.
If a scheme is rejected, it means that with confidence $1-\delta$ it violates the property.
If it is not rejected, as with every statistical tests, it means we can't conclude whether the scheme violates the property.
In practice, however we find that the tests are powerful \eg in~\cref{app:invalid} they rejected all bad schemes and even rejected some of the baselines.

\begin{algorithm}[t]
    \caption{Testing distortion-freeness}
    \label{alg:distortion_free_test}
    \begin{algorithmic}[1]
    \Require Watermark transform $f$, fixed contexts and next-token distributions $(\omega_{<t},p_t)_{t=1}^{N}$, $N\ge1$, keys per context $m\ge2$, head size $1\le k\le|\Sigma|$, significance level $\delta\in(0,1)$.
    \State $h\gets\lfloor m/2\rfloor$; $a\gets\mathrm{False}$
    \For{$t=1,\dots,N$}
        \State $v_1,\dots,v_k\gets$ the $k$ most likely tokens under $p_t$
        \State $G_t(q)\gets(q(v_1),\dots,q(v_k),1-\sum_{r=1}^{k}q(v_r))$ \Comment{Pool remaining tokens}
        \State Draw $m$ uniform independent keys and derive $\zeta_t^{(1)},\dots,\zeta_t^{(m)}$ at $\omega_{<t}$
        \State $p_t^{(j)}\gets f(\zeta_t^{(j)},p_t)$; $d_j\gets G_t(p_t^{(j)})-G_t(p_t)$ for $j=1,\dots,m$
        \State $a\gets a\lor(\exists j:p_t^{(j)}\ne p_t)$
        \State $\bar d\gets m^{-1}\sum_{j=1}^{m}d_j$
        \State $p_{\mathrm{coord}}\gets\min\{1,2(k+1)\exp(-2m\|\bar d\|_\infty^2)\}$
        \State $u\gets\mathrm{sign}(h^{-1}\sum_{j=1}^{h}d_j)$ \Comment{Learn a bias direction}
        \State $b\gets\max\{0,(m-h)^{-1}\sum_{j=h+1}^{m}u^\top d_j\}$ \Comment{Test on held-out keys}
        \State $p_{\mathrm{split}}\gets\exp(-(m-h)b^2/2)$
        \State $q_t\gets\min\{1,2\min(p_{\mathrm{coord}},p_{\mathrm{split}})\}$
    \EndFor
    \State $Q\gets\min\{1,N\min_t q_t\}$ \Comment{Bonferroni across contexts}
    \If{$Q\le\delta$}
        \State \Return Reject
    \EndIf
    \State \Return No distortion detected if $a$, otherwise inconclusive
    \end{algorithmic}
\end{algorithm}

\begin{algorithm}[t]
    \caption{Testing soundness over the key}
    \label{alg:soundness_over_key_test}
    \begin{algorithmic}[1]
    \Require Detector $P_\xi$, nonempty corpus $C=(\omega_1,\dots,\omega_n)$, keys per text $m\ge1$, finite nonempty thresholds $\mathcal A\subseteq[0,1]$, significance level $\delta\in(0,1)$.
    \State $\eta \gets \bigl(1-(1-\delta)^{1/n}\bigr)/|\mathcal A|$ \Comment{Corrected cutoff}
    \State $\mathcal R \gets \emptyset$
    \For{$i=1,\dots,n$}
        \State Draw $\xi_1^i,\dots,\xi_m^i$ uniformly and independently from the key space
        \State $P_{i,j} \gets P_{\xi_j^i}(\omega_i)$ for $j=1,\dots,m$
        \For{$\alpha\in\mathcal A$}
            \State $X_i(\alpha) \gets \sum_{j=1}^{m}\mathbf{1}\{P_{i,j}\le\alpha\}$
            \State $q_i(\alpha) \gets \Pr_{Z\sim\mathrm{Binomial}(m,\alpha)}[Z\ge X_i(\alpha)]$
            \If{$q_i(\alpha)\le\eta$}
                \State $\mathcal R \gets \mathcal R\cup\{(i,\alpha)\}$
            \EndIf
        \EndFor
    \EndFor
    \State \Return $\mathcal R$ \Comment{Reject the scheme if $\mathcal R\ne\emptyset$}
    \end{algorithmic}
\end{algorithm}

\begin{algorithm}[t]
    \caption{Testing soundness given a key}
    \label{alg:soundness_given_key_test}
    \begin{algorithmic}[1]
    \Require Detector $P_\xi$, human text distribution $\mathcal H_0$, number of keys $d\ge1$, texts per key $n\ge1$, allowed bad-key fraction $\beta\in(0,1)$, predeclared finite nonempty thresholds $\mathcal A\subseteq[0,1]$ and screening levels $\mathcal S\subset(0,1)$, allocated significance level $\delta\in(0,1)$.
    \State $T(s,p,c) \gets \Pr_{Z\sim\mathrm{Binomial}(s,p)}[Z\ge c]$ \Comment{Binomial upper tail}
    \State $\eta \gets \delta/(|\mathcal A||\mathcal S|)$; $\mathcal R \gets \emptyset$
    \For{$j=1,\dots,d$}
        \State Draw a uniform key $\xi_j$ and $\omega_1^j,\dots,\omega_n^j\sim\mathcal H_0$ \Comment{All draws independent}
        \State $P_{j,t} \gets P_{\xi_j}(\omega_t^j)$ for $t=1,\dots,n$
        \State $X_j(\alpha) \gets \sum_{t=1}^{n}\mathbf{1}\{P_{j,t}\le\alpha\}$ for $\alpha\in\mathcal A$
    \EndFor
    \For{$(\alpha,\gamma)\in\mathcal A\times\mathcal S$}
        \State $c \gets \min\{k\in\{1,\dots,n+1\}:T(n,\alpha,k)\le\gamma\}$
        \State $r \gets T(n,\alpha,c)$; $b \gets \beta+(1-\beta)r$ \Comment{Null screening bound}
        \State $B \gets \sum_{j=1}^{d}\mathbf{1}\{X_j(\alpha)\ge c\}$ \Comment{Keys screening positive}
        \State $q \gets T(d,b,B)$
        \If{$q\le\eta$}
            \State $\mathcal R \gets \mathcal R\cup\{(\alpha,\gamma)\}$
        \EndIf
    \EndFor
    \State \Return $\mathcal R$ \Comment{Reject the scheme if $\mathcal R\ne\emptyset$}
    \end{algorithmic}
\end{algorithm}

\section{Hash Function and PRNG}
\label{app:hash_function}

In this part, we explain the hash functions used by the agents to turn the context into a seed (\ie an integer), and detail the PRNG algorithm we use in our library.

\paragraph{Hash functions}
As stated in \cref{app:components:context}, we assume that the hashing function is without collision, \ie two contexts that are different should return a different hash.
In particular, prior works used weaker hash functions that had a high probability of collisions.
We assume we have a tuple $(x_1,\dots,x_k)\in \mathbb{N}^k$ of integers (token IDs or lexical classes $\mathcal G(\omega_t)$).
All arithmetic is on unsigned 64-bit integers (\ie modulo $2^{64}$) unless stated otherwise, $\oplus$ is the bitwise XOR, and $\gg$ the right shift.

\emph{SplitMix64 hash:} We chain the SplitMix64 finalizer $m(x) = y_2 \oplus (y_2 \gg 31)$, with $y_1 = (x \oplus (x \gg 30))\cdot \texttt{0xbf58476d1ce4e5b9}$ and $y_2 = (y_1 \oplus (y_1 \gg 27))\cdot \texttt{0x94d049bb133111eb}$.
The context hash is
\begin{equation*}
    h_0 = \texttt{0x243f6a8885a308d3}, \qquad h_j = m\bigl(h_{j-1} \oplus m(x_j + j)\bigr), \qquad a_t = h_k,
\end{equation*}
where the offset $+j$ makes the hash order-sensitive.
The hash is not injective, but with $64$-bit outputs a collision among $n$ contexts has probability at most $n^2/2^{65}$.

\emph{Rotation first hash:} packs the context injectively, so the no-collision assumption holds exactly. 
However, it is only valid for context sizes up to two.
A context of length one is encoded as $a_t = x_1 + 1$, and one of length two as $a_t = 2^{41} + 2^{20}x_1 + x_2$, which is injective for vocabularies below $2^{20}$.

\emph{Rolling polynomial hash:} a rolling polynomial hash with a counter.
This is the hash used with \texttt{Ladder} context seeding~(\cref{app:components:context}).
Given the lexical classes $g_t = \mathcal G(\omega_t)$, a context of width $w\le 4$ ending at $t$ is hashed as
\begin{equation*}
    K_w = \Bigl(\sum_{i=0}^{w-1} g_{t-i}\,\varphi^{\,w-1-i}\Bigr) \oplus (w\cdot\texttt{0x2545F4914F6CDD1D}), \qquad \varphi = \texttt{0x9E3779B97F4A7C15},
\end{equation*}
where the XOR with a width-dependent constant separates contexts of different lengths.
Let $n$ be the number of earlier occurrences of $K_w$ in the text, and $w^\star$ the width selected by the ladder.
The seed is then
\begin{equation*}
    a_t = \bigl\lfloor \mathrm{fmix}_{64}\bigl(8192\,K_{w^\star} + n\bigr) / 2^{32} \bigr\rfloor,
\end{equation*}
where $\mathrm{fmix}_{64}$ is the MurmurHash3 finalizer and $8192$ is the maximum tally, so that repeated contexts with different counts get different seeds.
Unlike the two previous hashes, the seed has only $32$ bits and collides after roughly $2^{16}$ distinct contexts.

\paragraph{PRNG}
Given a seed $a_t$, a candidate token $v$, and the 64-bit private key $\xi$, the PRNG returns the score $U_{a_t,v}\in[0,1)$.
We require it to be stateless, \ie $U_{a_t,v}$ depends only on $(\xi, a_t, v)$, so that the detector recovers exactly the scores used at generation.

\emph{Philox:} this is the PRNG we implemented as part of our library and used for all baselines.
It uses the counter-based Philox4x32-10 block cipher~\citep{random123}, which maps a 128-bit counter and a 64-bit key to $4$ pseudorandom 32-bit words $(y_0,y_1,y_2,y_3)$.
With $A$ a bijective 32-bit avalanche ($x \mapsto x\oplus(x\gg16)$, multiply by \texttt{0x7FEB352D}, $x\oplus(x\gg15)$, multiply by \texttt{0x846CA68B}, $x\oplus(x\gg16)$), the score is
\begin{equation*}
    (y_0,\dots,y_3) = \mathrm{Philox}_{\xi}\bigl(A(v),\, A(a_t),\, 0,\, 0\bigr), \qquad U_{a_t,v} = \lfloor y_0 / 2^9\rfloor\, 2^{-23},
\end{equation*}
where the avalanche disperses adjacent token IDs before Philox, and $23$ bits fill the float32 mantissa.
Independent \texttt{Channels} $c$ use the candidate $v + 2^{18}c$, which never collides with a real token ID.

\emph{High-precision Philox:} similar to Philox, but the 64-bit seed and the token are packed injectively into the counter:
\begin{align*}
    (y_0,\dots,y_3) &= \mathrm{Philox}_{\xi}\bigl(v,\, a_t \bmod 2^{32},\, \lfloor a_t/2^{32}\rfloor,\, \texttt{0xC0C0A123}\bigr), \\
    U_{a_t,v} &= \bigl(2^{26}\lfloor y_0/2^6\rfloor + \lfloor y_1/2^6\rfloor\bigr)\,2^{-52} + 2^{-53}.
\end{align*}
Since the counter is injective in $(a_t, v)$, distinct (seed, token) pairs always map to distinct Philox counters, hence to pseudorandomly independent scores.
When several scores are needed per (seed, token) pair (\eg the three coordinates for RotationFirst), the seed is replaced by $2^6 a_t + d$, reserving the low $6$ bits for the index $d$.

\emph{SplitMix64 PRF:} reuses the SplitMix64 finalizer $m$ from the \emph{SplitMix64 hash}, with the domain-separation constants $c_1 = \texttt{0xa4093822299f31d0}$ and $c_2 = \texttt{0x082efa98ec4e6c89}$:
\begin{equation*}
    s = m\bigl(a_t \oplus m(\xi \oplus c_1)\bigr), \qquad s' = m\bigl(s \oplus m(v \oplus c_2)\bigr), \qquad U_{a_t,v} = \bigl(\lfloor s'/2^{12}\rfloor + \tfrac12\bigr)\,2^{-52},
\end{equation*}
where $52$ bits fill the float64 mantissa, and the midpoint $+\tfrac12$ keeps $U_{a_t,v}\in(0,1)$ so that $\Phi^{-1}(U_{a_t,v})$ is always finite.
Independent \texttt{Channels} $c$ use the key $\xi \oplus (c\cdot\texttt{0x9e3779b97f4a7c15})$.

\section{ConcordMark and RotationFirst}
\label{app:selected_schemes}
\raggedbottom

We describe the schemes used in~\cref{sec:eval:best}.
For low and high variants, the generation is shared and only the detector differs.

\subsection{ConcordMark}
\label{app:concordmark}

\paragraph{Key components}
We describe precisely the ConcordMark generation algorithm in~\cref{alg:concordmark_generation}, and both detector variants in~\cref{alg:concordmark_low_detection,alg:concordmark_high_detection}.
ConcordMark uses \texttt{Lexical group} only for context seeding.
Let $\mathcal G: \Sigma \rightarrow \mathbb{N}$ be the corresponding lexical map.
Given a word, its spelling is first normalized and then gets attributed a synonym class of at most eight members (using WordNet).
It then uses \texttt{Ladder} context seeding starting at width $1$, or width $2$ when the last mapped token ID is below $200$.
For the hash function, it uses the \emph{rolling polynomial hash}~(\cref{app:hash_function}).

For the watermark score, it uses both \texttt{Prelude} with a cap of $14$ nats of entropy or $24$ tokens and $6$ independent \texttt{Channels}. 
It uses the \texttt{Gumbel race} as a logits transformation.
For detection, the low variant uses the \texttt{Excess} Gamma detection and the high variant the weighted \texttt{Position evidence}.

\begin{algorithm}[H]
    \caption{ConcordMark generation}
    \label{alg:concordmark_generation}
    \footnotesize
    \begin{algorithmic}[1]
    \Require Keys $\xi_0,\ldots,\xi_5$, model distributions $p_t$, lexical map $\mathcal G$, completion length $T$.
    \State Sample $J\sim\mathcal U\{0,\ldots,5\}$ \Comment{Use channel $J$, keyed by $\xi_J$, throughout}
    \State $\omega\gets()$; $B\gets0$; $\mathcal U\gets\emptyset$
    \For{$t=1,\ldots,T$}
        \If{$t\le24$ and $B<14$}
            \State Sample $\omega_t\sim p_t$ using private randomness
            \State $B\gets B-\log\sum_v p_t(v)^2$ \Comment{Prelude collision entropy}
        \Else
            \State $z_{1:t-1}\gets(\mathcal G(\omega_1),\ldots,\mathcal G(\omega_{t-1}))$; $m\gets\min(4,t-1)$
            \State $\ell\gets1$; if $z_{t-1}<200$, set $\ell\gets\min(2,m)$
            \For{$w=\ell,\ldots,m$}
                \State $W\gets z_{t-w:t-1}$ \Comment{Suffix of width $w$}
                \State $n_t\gets\left|\{i:w\le i<t-1,\ z_{i-w+1:i}=W\}\right|$ \Comment{Count earlier occurrences}
                \State If $n_t<4$ or $w=m$, set $c_t\gets(w,W,n_t)$, $a_t\gets h(c_t)$ and \textbf{break} \Comment{Context and seed}
            \EndFor
            \If{$n_t\ge8192$ or $a_t\in\mathcal U$}
                \State Sample $\omega_t\sim p_t$ using private randomness
            \Else
                \State $\mathcal U\gets\mathcal U\cup\{a_t\}$
                \State $\tilde U_{t,v}\gets U_{J,a_t,v}$ for each $v$ with $p_t(v)>0$
                \State $\omega_t\gets\arg\min_{v:p_t(v)>0}\{-\log\tilde U_{t,v}/p_t(v)\}$ \Comment{Token-level exponential race}
            \EndIf
        \EndIf
        \State Append $\omega_t$ to $\omega$
    \EndFor
    \State \Return $\omega$
    \end{algorithmic}
\end{algorithm}

\begin{algorithm}[H]
    \caption{ConcordMark (low) detection}
    \label{alg:concordmark_low_detection}
    \footnotesize
    \begin{algorithmic}[1]
    \Require Keys $\xi_0,\ldots,\xi_5$, completion $\omega$, lexical map $\mathcal G$.
    \State $D\gets\emptyset$; $\mathcal U\gets\emptyset$
    \For{$t=2,\ldots,|\omega|$}
        \State Reconstruct the ladder seed $a_t$ and count $n_t$ from $\mathcal G(\omega_{<t})$ as in~\cref{alg:concordmark_generation}
        \State If $n_t<8192$ and $(a_t,\omega_t)\notin\mathcal U$, set $D\gets D\cup\{t\}$ and $\mathcal U\gets\mathcal U\cup\{(a_t,\omega_t)\}$
    \EndFor
    \State If $D=\emptyset$, \Return $1$
    \State $\mathcal H(\omega)\gets\{a_t:t\in D\}$; $H\gets|\mathcal H(\omega)|$
    \State $D_a(\omega)\gets\{\omega_t:t\in D,\ a_t=a\}$; $m_a\gets|D_a(\omega)|$ for each $a\in\mathcal H(\omega)$
    \For{$j=0,\ldots,5$}
        \For{$a\in\mathcal H(\omega)$}
            \State $S_{j,a}(\omega)\gets\sum_{v\in D_a(\omega)}-\log(1-U_{j,a,v})$
            \State $Z_{j,a}\gets-\log Q(m_a,S_{j,a}(\omega))$ \Comment{Calibrate each context with its Gamma tail}
        \EndFor
        \State $S_j(\omega)\gets\sum_{a\in\mathcal H(\omega)}\max\{Z_{j,a}-\log2,0\}$ \Comment{Keep evidence above the median}
        \State $P_j\gets1$
        \If{$S_j(\omega)>H/2$}
            \State $\theta\gets(3-\sqrt{1+4H/S_j(\omega)})/2$
            \State $P_j\gets\exp\{H\log[(1-\theta/2)/(1-\theta)]-\theta S_j(\omega)\}$ \Comment{Chernoff bound}
            \State If $P_j<0.1$, set $P_j\gets\sum_{b=1}^{H}\binom Hb2^{-H}Q(b,S_j(\omega))$ \Comment{Exact excess tail $R_H$}
        \EndIf
    \EndFor
    \State \Return $P(\omega)=1-(1-\min_jP_j)^6$ \Comment{\v{S}id\'ak correction over channels}
    \end{algorithmic}
\end{algorithm}

\begin{algorithm}[H]
    \caption{ConcordMark (high) detection}
    \label{alg:concordmark_high_detection}
    \footnotesize
    \begin{algorithmic}[1]
    \Require Keys $\xi_0,\ldots,\xi_5$, completion $\omega$, lexical map $\mathcal G$, auxiliary model distributions $\hat p_t$.
    \State Reconstruct $a_t$ and the retained positions $D$ as in~\cref{alg:concordmark_low_detection}
    \State If $D=\emptyset$, \Return $1$
    \For{$t\in D$}
        \State $\eta_t\gets1+\frac14\operatorname{round}\!\left(4\min\{\max\{-\log\hat p_t(\omega_t),0\},2.5\}\right)$
        \State $C_t\gets\{\text{eight most probable tokens under }\hat p_t\}\setminus\{\omega_t\}$
        \State $\alpha_{t,v}\gets\min\{10^4,\hat p_t(v)/\hat p_t(\omega_t)\}$ for each $v\in C_t$
        \State $C_t\gets\{v\in C_t:\alpha_{t,v}\ge10^{-3}\}$; $\lambda_t\gets\sum_{v\in C_t}\alpha_{t,v}$
        \State If $\lambda_t>8$, set $C_t\gets\emptyset$ and $\lambda_t\gets0$ \Comment{Discard an unreliable reconstructed race}
    \EndFor
    \For{$j=0,\ldots,5$}
        \For{$t\in D$}
            \State $W_{j,t}\gets\mathbf1\{U_{j,a_t,v}<U_{j,a_t,\omega_t}^{\alpha_{t,v}}\text{ for all }v\in C_t\}$
            \If{$W_{j,t}=1$} \Comment{The observed token wins the reconstructed race}
                \State $R_{j,t}\gets(1-U_{j,a_t,\omega_t}^{1+\lambda_t})/(1+\lambda_t)$
            \Else
                \State $R_{j,t}\gets1-U_{j,a_t,\omega_t}+U_{j,a_t,\omega_t}^{1+\lambda_t}/(1+\lambda_t)$
            \EndIf
        \EndFor
        \State $S_j(\omega)\gets\sum_{t\in D}-\eta_t\log R_{j,t}$ \Comment{Weighted position evidence}
    \EndFor
    \State $s\gets\max_j S_j(\omega)$
    \If{all weights equal some $\eta$}
        \State $q\gets Q(|D|,s/\eta)$ \Comment{Exact Gamma upper tail}
    \Else
        \State $q\gets$ Lugannani--Rice approximation to $\Pr[\sum_{t\in D}\eta_t E_t\ge s]$, for independent $E_t\sim\operatorname{Exp}(1)$
    \EndIf
    \State \Return $P(\omega)=\min\{1,1.02[1-(1-q)^6]\}$ \Comment{Channel correction and tail safety factor}
    \end{algorithmic}
\end{algorithm}

\subsection{RotationFirst}
\label{app:rotationfirst}

\paragraph{Key components}
We describe precisely the RotationFirst generation algorithm and its detector in~\cref{alg:rotationfirst_generation,alg:rotationfirst_detection}.
RotationFirst uses \texttt{Lexical group} throughout the scheme.
Let $\mathcal G: \Sigma \rightarrow \mathbb{N}$ be the corresponding lexical map.
Each token is decoded, stripped of surrounding whitespace, and lowercased.
If the resulting word appears in WordNet, it is assigned to the synonym class given by its first WordNet synset. 
Let $\mathcal{E}$ be the set of the resulting synonym classes.
Tokens sharing this synset are grouped together, whereas all other tokens remain in singleton classes.
It then uses \texttt{Global first use} context seeding with the \emph{rotation first hash}~(\cref{app:hash_function}).
For a class $c$, it uses address $(0,c)$ if it has not been used before and $c\in\mathcal E$, and otherwise uses $(\mathcal G(\omega_{t-1})+1,c)$.

For the watermark score, it uses a \emph{private geometry} approach, namely the \texttt{Rotation} spherical geometry~(\texttt{Spherical}).
For each request, sample randomly a uniform quaternion $A$.
Then, at each step draw a uniform keyed $Z_{a_t,u}\in S^3$ and derive the score:
\begin{equation*}
    \tilde{U}_{a_t,u} = F(\lvert A\cdot Z_{a_t,u}\rvert),
\end{equation*}
where $F(r)=\frac2\pi\bigl(\arcsin r+r\sqrt{1-r^2}\bigr)$.
Then for detection, it uses the corresponding \texttt{Predictive Gamma} detector.

\begin{algorithm}[H]
    \caption{RotationFirst generation}
    \label{alg:rotationfirst_generation}
    \footnotesize
    \begin{algorithmic}[1]
    \Require Key $\xi$, model distributions $p_t$, grouping $\mathcal G$, eligible classes $\mathcal E$, completion length $T$.
    \State Sample $A\sim\mathcal U(\mathbb S^3)$ \Comment{Private rotation shared throughout the completion}
    \State $F(r)\gets\frac2\pi(\arcsin r+r\sqrt{1-r^2})$
    \State $\omega\gets()$; $\mathcal A\gets\emptyset$; $\mathcal U\gets\emptyset$ \Comment{Emitted classes and initialized clocks}
    \For{$t=1,\ldots,T$}
        \State $M_t(u)\gets\sum_{v:\mathcal G(v)=u}p_t(v)$ for each class $u$
        \State $b_t\gets\mathcal G(\omega_{t-1})+1$ if $t>1$, otherwise $b_t\gets0$
        \For{each class $u$ with $M_t(u)>0$}
            \State $c_t(u)\gets0$ if $u\in\mathcal E\setminus\mathcal A$, otherwise $c_t(u)\gets b_t$; $a_t(u)\gets h(c_t(u))$
            \If{$(a_t(u),u)\notin\mathcal U$}
                \If{$b_t\ne0$ or $u\in\mathcal E$}
                    \State Recover $Z_{a_t(u),u}\in\mathbb S^3$ using key $\xi$
                    \State $\tilde U_{a_t(u),u}\gets F(|A\cdot Z_{a_t(u),u}|)$
                    \State $R_{a_t(u),u}\gets-\log\tilde U_{a_t(u),u}$
                \Else
                    \State Sample $R_{a_t(u),u}\sim\operatorname{Exp}(1)$ privately
                \EndIf
                \State $\mathcal U\gets\mathcal U\cup\{(a_t(u),u)\}$
            \EndIf
        \EndFor
        \State $u_*\gets\arg\min_{u:M_t(u)>0}R_{a_t(u),u}/M_t(u)$; $\tau_t\gets R_{a_t(u_*),u_*}/M_t(u_*)$
        \State $R_{a_t(u),u}\gets\max\{0,R_{a_t(u),u}-\tau_tM_t(u)\}$ for each $u$ with $M_t(u)>0$
        \State Sample $R_{a_t(u_*),u_*}\sim\operatorname{Exp}(1)$ privately \Comment{Refresh the winning clock}
        \State Sample $\omega_t$ from $\{v:\mathcal G(v)=u_*\}$ with probabilities $p_t(v)/M_t(u_*)$
        \State $\mathcal A\gets\mathcal A\cup\{u_*\}$; append $\omega_t$ to $\omega$
    \EndFor
    \State \Return $\omega$
    \end{algorithmic}
\end{algorithm}

\begin{algorithm}[H]
    \caption{RotationFirst detection}
    \label{alg:rotationfirst_detection}
    \footnotesize
    \begin{algorithmic}[1]
    \Require Key $\xi$, completion $\omega$, grouping $\mathcal G$, eligible classes $\mathcal E$, public unit quaternions $A_0,\ldots,A_{511}$.
    \State $D\gets()$; $\mathcal A\gets\emptyset$; $\mathcal U\gets\emptyset$
    \For{$t=1,\ldots,|\omega|$}
        \State $u_t\gets\mathcal G(\omega_t)$
        \State $b_t\gets\mathcal G(\omega_{t-1})+1$ if $t>1$, otherwise $b_t\gets0$
        \State $c_t\gets0$ if $u_t\in\mathcal E\setminus\mathcal A$, otherwise $c_t\gets b_t$; $a_t\gets h(c_t)$
        \State If $(t>1\text{ or }u_t\in\mathcal E)$ and $(a_t,u_t)\notin\mathcal U$, append $(a_t,u_t)$ to $D$ and set $\mathcal U\gets\mathcal U\cup\{(a_t,u_t)\}$
        \State $\mathcal A\gets\mathcal A\cup\{u_t\}$
    \EndFor
    \State $N\gets|D|$; if $N\le1$, \Return $1$
    \State Enumerate $D$ as $(a_i,u_i)_{i=1}^{N}$ in text order
    \State $F(r)\gets\frac2\pi(\arcsin r+r\sqrt{1-r^2})$; $d(B,Z)\gets1-F(|B\cdot Z|)$
    \State $B_1\gets A_0$; $w_{1,j}\gets1/512$ for $j=0,\ldots,511$; $S(\omega)\gets0$
    \For{$i=1,\ldots,N$}
        \State Recover $Z_i=Z_{a_i,u_i}\in\mathbb S^3$ using key $\xi$
        \State $S(\omega)\gets S(\omega)-\log d(B_i,Z_i)$ \Comment{Score before updating the predicted rotation}
        \State $w_{i+1,j}\gets w_{i,j}\,d(A_j,Z_i)^{-1/2}$ for $j=0,\ldots,511$
        \State Normalize $w_{i+1,j}\gets w_{i+1,j}/\sum_{k=0}^{511}w_{i+1,k}$
        \State $B_{i+1}\gets\operatorname{eig}_{\max}(\sum_{j=0}^{511}w_{i+1,j}A_jA_j^\top)$ \Comment{Unit eigenvector for the largest eigenvalue}
    \EndFor
    \State \Return $P(\omega)=Q(N,S(\omega))$ \Comment{Predictive Gamma upper tail}
    \end{algorithmic}
\end{algorithm}

\subsection{Additional Evaluation}

Here, we complete the evaluation from~\cref{sec:eval:best} with additional plots~(\cref{fig:selected-roc}).
Namely, we plot the TPR as a function of the significance level $\alpha$, from $10^{-6}$ to $10^{-1}$, for clean detection and paraphrasing.
At each $\alpha$, the TPR is the fraction of samples with p-value below $\alpha$. 
As in the main evaluation, we report the minimum TPR across the five watermarking seeds.

\begin{figure}[H]
    \centering
    \includegraphics[width=\linewidth]{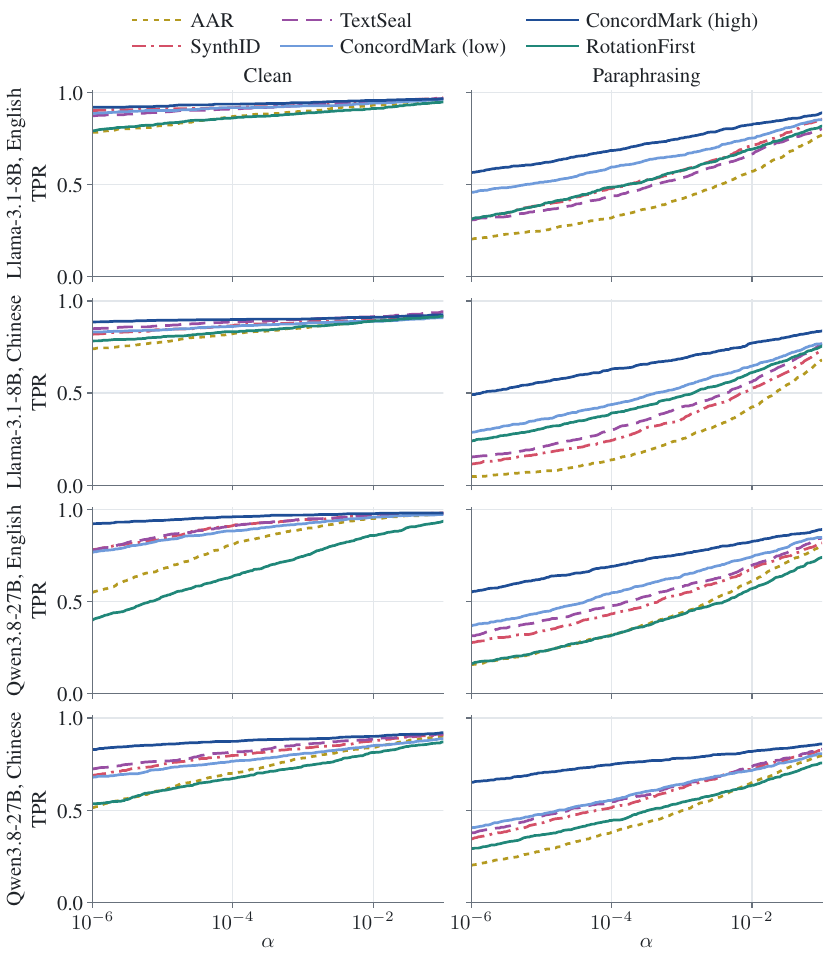}
    \caption{TPR as a function of the significance level $\alpha$, with 1k samples per model and language generated from WildChat prompts. Left is clean detection and right after paraphrasing.}
    \label{fig:selected-roc}
\end{figure}

\paragraph{Results}
Across models, languages, and significance levels, these curves confirm the conclusions of~\cref{sec:eval:best}.
ConcordMark (high) provides the strongest detection, and especially after paraphrasing.
ConcordMark (low) also retains a strong paraphrasing robustness while being more efficient at detection, as it does not require a forward pass.
RotationFirst is less robust than ConcordMark, as expected as it focuses mainly on quality, yet remains comparable to the baselines.
\flushbottom

\section{Proofs}
\label{app:proofs}

In this section, we prove the theoretical claims made throughout the paper.
In~\cref{app:proofs:prelim}, we state our assumptions and two lemmas used in the rest of the section.
In~\cref{app:proofs:given_key}, we prove that a detector that is sound over the key can be turned into a detector that is sound given a key~(\cref{sec:method:definition_validity}).
In~\cref{app:proofs:tests}, we prove that the three statistical tests of~\cref{sec:method:verification} are valid.
In~\cref{app:proofs:residual}, we prove that \texttt{ResidualRace} is distortion-free~(\cref{app:components:transform}).
Lastly, in~\cref{app:proofs:detectors}, we derive the null distribution of the \texttt{Predictive Gamma} and \texttt{Position evidence} detectors~(\cref{app:components:detection}).
For all other mathematical results, we consider the proof sufficiently trivial and do not derive them here.

\subsection{Preliminaries}
\label{app:proofs:prelim}

\paragraph{Setup}
Let $\Xi$ be the key space, and let $\xi$ be drawn from $\Xi$ as in~\cref{eq:soundness_over_key}.
For every text $\omega \in \Sigma^*$, the detector returns a p-value $P_\xi(\omega) \in [0,1]$, and we assume that $(\xi,\omega) \mapsto P_\xi(\omega)$ is measurable (which holds trivially for a finite key space).
Human texts are written without knowledge of the private key: for any human text distribution $\mathcal{H}_0$, the text $\Omega \sim \mathcal{H}_0$ is independent of $\xi$.
As in~\cref{sec:method:preliminaries}, we model the PRNG as ideal: the seed $a_t$ is a function of the context only, and, over a uniformly drawn key, the scores of distinct (seed, unit) pairs are independent, with $U_{a,u} \sim \mathcal{U}([0,1])$ (and $Z_{a,u}$ distributed according to its geometry for private-geometry scores).
We also assume that the hash function has no collision~(\cref{app:hash_function}).
Finally, we write $T(s,p,c) \mathrel{:=} \Pr_{Z \sim \mathrm{Binomial}(s,p)}[Z \ge c]$ for the binomial upper tail, as in~\cref{alg:soundness_given_key_test}.

\begin{lemma}
\label{lem:tail_pvalue}
Let $X$ be a real random variable and $\varphi: \mathbb{R} \rightarrow [0,\infty)$ a continuous nonincreasing function such that $\Pr[X \ge s] \le \varphi(s)$ for all $s \in \mathbb{R}$.
Then $\min\{1,\varphi(X)\}$ is a valid p-value, \ie $\Pr[\min\{1,\varphi(X)\} \le x] \le x$ for all $x \in [0,1]$.
\end{lemma}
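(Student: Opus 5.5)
The plan is to reduce the event $\{\min\{1,\varphi(X)\}\le x\}$ to an upper-tail event of the form $\{X\ge s\}$ and then apply the hypothesis $\Pr[X\ge s]\le\varphi(s)$. The case $x=1$ is trivial since the probability is at most $1$. So fix $x\in[0,1)$. Then $\min\{1,\varphi(X)\}\le x$ is equivalent to $\varphi(X)\le x$, and it suffices to show $\Pr[\varphi(X)\le x]\le x$.

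Next, I use monotonicity and continuity of $\varphi$ to describe the sublevel set $L_x:=\{s\in\mathbb{R}:\varphi(s)\le x\}$. Because $\varphi$ is nonincreasing, $L_x$ is an up-set (if $s\in L_x$ and $s'\ge s$ then $s'\in L_x$). Because $\varphi$ is continuous, $L_x$ is closed. Hence $L_x$ is one of $\emptyset$, $\mathbb{R}$, or $[s^*,\infty)$ with $s^*:=\inf L_x$.

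I then treat the three cases.
\begin{itemize}
\item If $L_x=\emptyset$, the probability is $0$.
\item If $L_x=[s^*,\infty)$, then $\Pr[\varphi(X)\le x]=\Pr[X\ge s^*]\le\varphi(s^*)\le x$, where the last inequality holds because $s^*\in L_x$ by closedness.
\item If $L_x=\mathbb{R}$, then $\varphi(s)\le x$ for every $s$. For any $s$, the hypothesis gives $\Pr[X\ge s]\le\varphi(s)\le x$. Letting $s\to-\infty$ along a sequence, continuity from below of probability gives $\Pr[X\in\mathbb{R}]=\lim_{s\to-\infty}\Pr[X\ge s]\le x$. Since $\varphi(X)\le x$ holds surely in this case, the bound follows.
\end{itemize}

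The main subtlety, though mild, is this third case together with the use of continuity. Without continuity, $L_x$ could be the open interval $(s^*,\infty)$, where $\varphi(s^*)>x$, and then only $\Pr[X>s^*]$ would be controlled. One would instead have to approximate from the right using $\Pr[X>s^*]=\lim_{s\downarrow s^*}\Pr[X\ge s]\le\lim_{s\downarrow s^*}\varphi(s)$, which is still $\le x$. That route would actually work even without continuity, so continuity mainly shortens the argument. Measurability of $\varphi(X)$ is immediate because $\varphi$ is continuous.
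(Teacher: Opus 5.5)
Your proof is correct and follows essentially the same route as the paper: both take $s^*=\inf\{s:\varphi(s)\le x\}$, use continuity to get $\varphi(s^*)\le x$, and bound the event by $\Pr[X\ge s^*]$. Both also dispose of the unbounded-below case via $\Pr[X\ge s]\to 1$ as $s\to-\infty$; the paper phrases this as a contradiction, you phrase it as a direct bound. Your closing remark, that right-limits make continuity unnecessary for nonincreasing $\varphi$, is also correct.
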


\begin{proof}
The case $x = 1$ is trivial, and the case $x = 0$ follows from the case $x > 0$, since $\Pr[\min\{1,\varphi(X)\} \le 0] \le \Pr[\min\{1,\varphi(X)\} \le x]$ for every $x > 0$.
Let $x \in (0,1)$ and $I \mathrel{:=} \{s \in \mathbb{R} : \varphi(s) \le x\}$.
If $I = \emptyset$, the event is empty.
Otherwise, $I$ is an interval unbounded above, and it is bounded below, since otherwise $\Pr[X \ge s] \le x < 1$ for all $s$, which contradicts $\Pr[X \ge s] \rightarrow 1$ as $s \rightarrow -\infty$.
Let $s^* \mathrel{:=} \inf I$. By continuity, $\varphi(s^*) \le x$, and since $\{\varphi(X) \le x\} \subseteq \{X \ge s^*\}$,
\begin{equation*}
	\Pr[\min\{1,\varphi(X)\} \le x] \le \Pr[X \ge s^*] \le \varphi(s^*) \le x. \qedhere
\end{equation*}
\end{proof}

\begin{lemma}
\label{lem:binomial_pvalue}
Let $X$ be a sum of $s$ independent Bernoulli random variables with parameters at most $p \in [0,1]$.
Then $\Pr[X \ge c] \le T(s,p,c)$ for all $c \in \mathbb{N}$, and $\Pr[T(s,p,X) \le \eta] \le \eta$ for all $\eta \in [0,1]$.
\end{lemma}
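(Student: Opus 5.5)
The plan has two parts. First we establish the tail bound $\Pr[X \ge c] \le T(s,p,c)$ by a monotone coupling. Then we derive the p-value property through a quantile argument in the style of \cref{lem:tail_pvalue}, adapted to the discrete setting.

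\emph{Tail bound.} Write $X=\sum_{i=1}^s B_i$ with $B_i$ independent and $B_i\sim\mathrm{Bernoulli}(p_i)$, where $p_i\le p$. Take independent uniforms $V_1,\dots,V_s$ on $[0,1]$. Set $B_i' := \indicator\{V_i\le p_i\}$ and $B_i'' := \indicator\{V_i\le p\}$. Then $(B_i')$ has the same joint law as $(B_i)$, because the $V_i$ are independent. Moreover $B_i'\le B_i''$ pointwise, and $Y:=\sum_i B_i''\sim\mathrm{Binomial}(s,p)$. Hence $X$ is equal in law to $X':=\sum_i B_i'\le Y$, so that
\[
\Pr[X\ge c]=\Pr[X'\ge c]\le\Pr[Y\ge c]=T(s,p,c)
\]
for every $c\in\mathbb{N}$. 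The cases $c=0$ (both sides equal $1$) and $c>s$ (the bound is trivial) are consistent with this.

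\emph{P-value property.} Fix $\eta\in[0,1]$. The map $c\mapsto T(s,p,c)$ is nonincreasing on $\mathbb{N}$. Let $I:=\{c\in\{0,\dots,s+1\}: T(s,p,c)\le\eta\}$. The value $T(s,p,s+1)=0$ always lies at most $\eta$, so $I$ is nonempty, and by monotonicity it is an up-set. Let $c^*:=\min I$. Since $X$ takes values in $\{0,\dots,s\}$ and $T(s,p,\cdot)$ is nonincreasing, we get $\{T(s,p,X)\le\eta\}=\{X\ge c^*\}$. Applying the first part,
\[
\Pr[T(s,p,X)\le\eta]=\Pr[X\ge c^*]\le T(s,p,c^*)\le\eta .
\]

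The only delicate step is the event identity $\{T(s,p,X)\le\eta\}=\{X\ge c^*\}$. This identity is where discreteness replaces the continuity assumption used in \cref{lem:tail_pvalue}. It holds because the set $I$ is an integer up-set, so no limiting argument is needed. The coupling step is routine. The main thing to check there is that the law of $X$ depends only on the $p_i$, which makes the realization through the $V_i$ legitimate.
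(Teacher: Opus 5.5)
Your proof is correct and follows essentially the same route as the paper: it uses the monotone coupling $\sum_i \indicator\{V_i \le p_i\} \le \sum_i \indicator\{V_i \le p\} \sim \mathrm{Binomial}(s,p)$ for the tail bound. It then takes $c^* = \min\{c \in \{0,\dots,s+1\} : T(s,p,c) \le \eta\}$ and the event identity $\{T(s,p,X)\le\eta\}=\{X\ge c^*\}$ to obtain the p-value property, with your extra edge-case remarks just making the paper's terse argument explicit.
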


\begin{proof}
Let $p_1,\dots,p_s \le p$ be the Bernoulli parameters and $V_1,\dots,V_s$ i.i.d. uniform random variables on $[0,1]$.
Then $X$ has the same distribution as $\sum_{i} \indicator\{V_i \le p_i\} \le \sum_{i} \indicator\{V_i \le p\} \sim \mathrm{Binomial}(s,p)$, which gives the first claim.
For the second claim, since $c \mapsto T(s,p,c)$ is nonincreasing and $T(s,p,s+1) = 0$, let $c^* \mathrel{:=} \min\{c \in \{0,\dots,s+1\} : T(s,p,c) \le \eta\}$.
Then $\{T(s,p,X) \le \eta\} = \{X \ge c^*\}$, and by the first claim, $\Pr[X \ge c^*] \le T(s,p,c^*) \le \eta$.
\end{proof}

\subsection{Soundness Given a Key}
\label{app:proofs:given_key}

We prove the claim of~\cref{sec:method:definition_validity}: multiplying by $1/\beta$ the p-value of a detector that is sound over the key~(\cref{eq:soundness_over_key}) gives a detector that is sound given a key~(\cref{eq:soundness_given_key}).

\begin{proposition}
\label{prop:soundness_given_key}
Let $\beta \in (0,1]$ and let $P_\xi$ be a detector that satisfies~\cref{eq:soundness_over_key}.
Define the calibrated detector
\begin{equation}
	\label{eq:calibrated_pvalue}
	P^\beta_\xi(\omega) \mathrel{:=} \min\left\{1, \frac{P_\xi(\omega)}{\beta}\right\}.
\end{equation}
Then $P^\beta_\xi$ satisfies~\cref{eq:soundness_over_key}, and for every human text distribution $\mathcal{H}_0$ and every $\alpha \in [0,1]$,
\begin{equation*}
	\Pr_{\xi}\left[ \Pr_{\Omega \sim \mathcal{H}_0} [P^\beta_{\xi}(\Omega) \le \alpha] > \alpha\right] \le \beta,
\end{equation*}
\ie $P^\beta_\xi$ also satisfies~\cref{eq:soundness_given_key}.
\end{proposition}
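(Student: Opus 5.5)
Our approach is to reduce both claims to \cref{eq:soundness_over_key} via a pointwise comparison of events, followed by Fubini and Markov's inequality. The only input is the inequality $\Pr_\xi[P_\xi(\omega)\le \alpha\beta]\le \alpha\beta$ for each fixed text $\omega$, together with independence of the human text $\Omega$ and the key $\xi$, which is stated in the setup of \cref{app:proofs:prelim}.

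For soundness over the key, we fix $\omega$ and $\alpha\in[0,1]$. If $\alpha=1$ the claim is trivial. For $\alpha<1$, the event $\{P^\beta_\xi(\omega)\le\alpha\}$ equals $\{P_\xi(\omega)\le \alpha\beta\}$, since the minimum with $1$ cannot be at most $\alpha<1$ unless $P_\xi(\omega)/\beta\le\alpha$. Applying \cref{eq:soundness_over_key} at level $\alpha\beta\in[0,1]$ gives probability at most $\alpha\beta\le\alpha$.

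For soundness given a key, we fix $\mathcal H_0$ and $\alpha$. The case $\alpha=1$ is again trivial, because the inner probability cannot exceed $1$. For $\alpha<1$ we define the random variable $F(\xi):=\Pr_{\Omega\sim\mathcal H_0}[P_\xi(\Omega)\le\alpha\beta]$, which by the event identity above is exactly the inner false positive rate of $P^\beta_\xi$. It is measurable by the measurability assumption and Fubini. Since $\Omega$ is independent of $\xi$, Fubini/Tonelli gives
\[
\mathbb E_\xi[F(\xi)]=\mathbb E_{\Omega\sim\mathcal H_0}\bigl[\Pr_\xi[P_\xi(\Omega)\le\alpha\beta]\bigr]\le\alpha\beta,
\]
using \cref{eq:soundness_over_key} pointwise in $\Omega$. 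If $\alpha=0$ then $F=0$ almost surely and the claim holds. Otherwise Markov's inequality yields $\Pr_\xi[F(\xi)>\alpha]\le \mathbb E[F]/\alpha\le\beta$.

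There is no real obstacle. The only points needing care are the boundary cases $\alpha\in\{0,1\}$, where the clipping at $1$ and the division in Markov's inequality interact, and the justification of the exchange of expectations. The latter relies on independence of $\Omega$ and $\xi$ and on joint measurability of $(\xi,\omega)\mapsto P_\xi(\omega)$, both assumed in \cref{app:proofs:prelim}.
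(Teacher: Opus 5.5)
Your proof is correct and follows the paper's argument essentially step for step. Both use the event identity $\{P^\beta_\xi(\omega)\le\alpha\}=\{P_\xi(\omega)\le\alpha\beta\}$ for $\alpha<1$, apply soundness over the key at level $\alpha\beta$, swap expectations via Fubini using independence of $\Omega$ and $\xi$, and finish with Markov's inequality, treating the boundary cases $\alpha\in\{0,1\}$ the same way.
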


\begin{proof}
We first relate the rejection regions of both detectors.
For $\alpha \in [0,1)$, since $\min\{1,x\} \le \alpha < 1$ holds if and only if $x \le \alpha$, we have for every key $\xi$ and text $\omega$
\begin{equation}
	\label{eq:calibrated_region}
	P^\beta_\xi(\omega) \le \alpha \iff P_\xi(\omega) \le \alpha\beta.
\end{equation}
For $\alpha = 1$, both properties hold trivially, as all probabilities are bounded by $1$.
We therefore assume $\alpha \in [0,1)$ in the rest of the proof.

\emph{Soundness over the key.}
Let $\omega \in \Sigma^*$.
By~\cref{eq:calibrated_region} and~\cref{eq:soundness_over_key} applied at the threshold $\alpha\beta \in [0,1]$,
\begin{equation*}
	\Pr_{\xi}[P^\beta_\xi(\omega) \le \alpha] = \Pr_{\xi}[P_\xi(\omega) \le \alpha\beta] \le \alpha\beta \le \alpha.
\end{equation*}

\emph{Soundness given a key.}
Let $\mathcal{H}_0$ be a human text distribution, and define for each key the false positive rate
\begin{equation*}
	g(\xi) \mathrel{:=} \Pr_{\Omega \sim \mathcal{H}_0}[P^\beta_\xi(\Omega) \le \alpha] = \Pr_{\Omega \sim \mathcal{H}_0}[P_\xi(\Omega) \le \alpha\beta],
\end{equation*}
where the equality follows from~\cref{eq:calibrated_region}.
Because $\Omega$ is independent of $\xi$, Fubini's theorem allows us to swap the two expectations, and applying~\cref{eq:soundness_over_key} to each fixed text gives
\begin{equation*}
	\mathbb{E}_{\xi}[g(\xi)]
	= \Pr_{\xi,\Omega}[P_\xi(\Omega) \le \alpha\beta]
	= \mathbb{E}_{\Omega \sim \mathcal{H}_0}\left[\Pr_{\xi}[P_\xi(\Omega) \le \alpha\beta]\right]
	\le \alpha\beta.
\end{equation*}
If $\alpha = 0$, then $g \ge 0$ and $\mathbb{E}_\xi[g(\xi)] \le 0$, so $g(\xi) = 0$ almost surely and $\Pr_\xi[g(\xi) > 0] = 0 \le \beta$.
Otherwise, $\alpha > 0$ and Markov's inequality gives
\begin{equation*}
	\Pr_{\xi}[g(\xi) > \alpha] \le \frac{\mathbb{E}_{\xi}[g(\xi)]}{\alpha} \le \frac{\alpha\beta}{\alpha} = \beta,
\end{equation*}
which concludes the proof.
\end{proof}

\subsection{Validity of the Verification Tests}
\label{app:proofs:tests}

We prove that each of the three tests of~\cref{sec:method:verification} wrongly rejects a valid scheme with probability at most its significance level $\delta$.

\begin{proposition}
\label{prop:distortion_free_test}
Let $f$ be a distortion-free logits transformation~(\cref{eq:distortion_free}).
For each context $t$, let $v_1,\dots,v_k$ be the $k$ most likely tokens under $p_t$ and let
\begin{equation*}
	G_t(q) \mathrel{:=} \Bigl(q(v_1),\dots,q(v_k),1-\sum_{r=1}^{k}q(v_r)\Bigr)
\end{equation*}
map a distribution $q \in \Delta(\Sigma)$ to a probability vector over the head tokens and the pooled remaining tokens.
For $m$ independent keys, let $d_j \mathrel{:=} G_t(f(\zeta_t^{(j)},p_t)) - G_t(p_t)$, $\bar{d}$ their mean, $h \mathrel{:=} \lfloor m/2 \rfloor$, $u \mathrel{:=} \mathrm{sign}(h^{-1}\sum_{j=1}^{h}d_j)$ and $b \mathrel{:=} \max\{0,(m-h)^{-1}\sum_{j=h+1}^{m}u^\top d_j\}$.
\Cref{alg:distortion_free_test} combines the p-values
\begin{equation*}
	p_{\mathrm{coord}} \mathrel{:=} \min\{1,2(k+1)\exp(-2m\lVert\bar{d}\rVert_\infty^2)\}
	\quad\text{and}\quad
	p_{\mathrm{split}} \mathrel{:=} \exp(-(m-h)b^2/2)
\end{equation*}
into $q_t \mathrel{:=} \min\{1,2\min(p_{\mathrm{coord}},p_{\mathrm{split}})\}$, and rejects $f$ if $Q \mathrel{:=} \min\{1,N\min_t q_t\} \le \delta$.
Then~\cref{alg:distortion_free_test} rejects $f$ with probability at most $\delta$.
\end{proposition}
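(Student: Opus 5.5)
Under distortion-freeness, for each fixed context $t$ the vectors $d_1,\dots,d_m$ are i.i.d. (independent keys) with mean $\mathbb{E}[d_j]=G_t(\mathbb{E}_\zeta f(\zeta,p_t))-G_t(p_t)=0$, since $G_t$ is affine and~\cref{eq:distortion_free} holds. Each coordinate of $d_j$ is a difference of two probabilities, hence lies in an interval of length at most $1$ (coordinate of $G_t(f(\cdot))\in[0,1]$ minus a constant). The plan is to show that $p_{\mathrm{coord}}$ and $p_{\mathrm{split}}$ are each valid p-values for fixed $t$, combine them with a factor $2$ into the valid p-value $q_t$, and then apply the Bonferroni factor $N$ across contexts. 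Throughout, I use the elementary fact that if $\Pr[p\le x]\le x$ for two p-values then $\min\{1,2\min(p,p')\}$ satisfies $\Pr[\cdot\le x]\le \Pr[p\le x/2]+\Pr[p'\le x/2]\le x$, and likewise $\Pr[\min\{1,N\min_t q_t\}\le\delta]\le\sum_t\Pr[q_t\le\delta/N]\le\delta$. No independence across contexts is needed for this union bound.

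\textbf{The coordinate test.} For each of the $k+1$ coordinates $r$, Hoeffding's inequality for a mean of $m$ independent zero-mean variables with range $1$ gives $\Pr[|\bar d_r|\ge s]\le 2\exp(-2ms^2)$. A union bound over the $k+1$ coordinates then gives $\Pr[\|\bar d\|_\infty\ge s]\le 2(k+1)\exp(-2ms^2)$. I would invoke~\cref{lem:tail_pvalue} with $X=\|\bar d\|_\infty$ and $\varphi(s)=2(k+1)\exp(-2m\max(s,0)^2)$, which is continuous and nonincreasing. It dominates the tail: for $s\le0$ we have $\varphi\ge1$, so the bound is trivial. This shows that $p_{\mathrm{coord}}=\min\{1,\varphi(X)\}$ is valid.

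\textbf{The split test.} I expect this step to be the main obstacle, because of the data-dependent direction. The plan is to condition on the first $h$ keys. Then $u\in\{-1,0,1\}^{k+1}$ is fixed, and the held-out terms $Y_j=u^\top d_j$ for $j>h$ are i.i.d. with mean $0$ and are independent of $u$. The delicate point is their range. Because $\sum_r G_t(q)_r=1$, the coordinates of $d_j$ sum to zero, so $\sum_r |d_{j,r}|\le 2$ (it equals twice the total-variation distance). Hence $|Y_j|\le\|d_j\|_1\le 2$, so $Y_j\in[-2,2]$ has range $4$. I also need $\|d_j\|_1\le 2$ itself, which holds because both $G_t(\cdot)$ vectors are probability vectors. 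Hoeffding then yields $\Pr[\bar Y\ge s\mid u]\le\exp(-2(m-h)s^2/16)=\exp(-(m-h)s^2/8)$, which is weaker than the claimed $\exp(-(m-h)s^2/2)$.

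To get the stated constant, I would refine the range. Each $Y_j$ lies in an interval of length at most $2$, not $4$. Write $Y_j=u^\top(x_j-c)$ with $c=G_t(p_t)$ fixed, where $x_j$ is a probability vector. Then $u^\top x_j$ ranges over $[\min_r u_r,\max_r u_r]\subseteq[-1,1]$, an interval of length $2$. Hoeffding with range $2$ gives $\exp(-2(m-h)s^2/4)=\exp(-(m-h)s^2/2)$, which matches. Since $b=\max(0,\bar Y)$, applying~\cref{lem:tail_pvalue} conditionally with $\varphi(s)=\exp(-(m-h)\max(s,0)^2/2)$ shows that $p_{\mathrm{split}}$ is valid given $u$, and hence valid unconditionally. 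The edge case $h=0$ (when $m=1$) cannot arise since $m\ge2$. If $u=0$, then $Y_j=0$ and $p_{\mathrm{split}}=1$, which is trivially valid.

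Combining the three steps, each $q_t$ is a valid p-value, and $\Pr[Q\le\delta]\le\delta$. This is exactly the claim that~\cref{alg:distortion_free_test} rejects a distortion-free $f$ with probability at most $\delta$.
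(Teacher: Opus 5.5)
Your proof is correct and follows essentially the same route as the paper's. That route is Hoeffding plus a union bound over the $k+1$ coordinates for $p_{\mathrm{coord}}$, then conditioning on the first $h$ keys and using that $u^\top G_t(q)$ lies in an interval of length $2$ for $p_{\mathrm{split}}$, and finally the factor-$2$ and Bonferroni union bounds. Clamping the tail functions at $\max(s,0)$ is a small improvement on the paper, whose $\varphi(s)=2(k+1)e^{-2ms^2}$ and $e^{-(m-h)s^2/2}$ are not nonincreasing on all of $\mathbb{R}$ as \cref{lem:tail_pvalue} requires; your range-$4$ detour in the split test is harmless but unnecessary.
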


\begin{proof}
Fix a context $t$.
Because the $m$ keys are drawn independently, the scores $\zeta_t^{(1)},\dots,\zeta_t^{(m)}$ are i.i.d., and by~\cref{eq:distortion_free}, the differences $d_1,\dots,d_m$ are i.i.d. with $\mathbb{E}[d_j] = G_t(\mathbb{E}[f(\zeta_t^{(j)},p_t)]) - G_t(p_t) = 0$, as $G_t$ is affine.

\emph{Coordinate test.}
Each coordinate of $G_t(q)$ lies in $[0,1]$ for every $q \in \Delta(\Sigma)$, so each coordinate of $d_j$ lies in an interval of length $1$.
By Hoeffding's inequality and a union bound over the $k+1$ coordinates, $\Pr[\lVert\bar{d}\rVert_\infty \ge s] \le 2(k+1)\exp(-2ms^2)$ for all $s \ge 0$.
Hence, by~\cref{lem:tail_pvalue} with $\varphi(s) = 2(k+1)\exp(-2ms^2)$, $p_{\mathrm{coord}}$ is a valid p-value.

\emph{Split test.}
The direction $u$ only depends on $d_1,\dots,d_h$, so conditional on $u$, the variables $Y_j \mathrel{:=} u^\top d_j$ for $j > h$ are i.i.d. with mean zero.
Since $G_t(q)$ and $G_t(p_t)$ are probability vectors and $u \in [-1,1]^{k+1}$, $u^\top G_t(q) \in [-1,1]$ and hence $Y_j$ lies in an interval of length $2$. 
By Hoeffding's inequality, $\Pr[b \ge s \mid u] \le \exp(-(m-h)s^2/2)$ for all $s > 0$, where $m-h \ge 1$ as $m \ge 2$.
Hence, by~\cref{lem:tail_pvalue} with $\varphi(s) = \exp(-(m-h)s^2/2)$, $p_{\mathrm{split}}$ is a valid p-value conditional on $u$, and thus unconditionally.

\emph{Combination.}
By a union bound over both tests, $\Pr[q_t \le x] \le \Pr[p_{\mathrm{coord}} \le x/2] + \Pr[p_{\mathrm{split}} \le x/2] \le x$.
By a union bound over the $N$ contexts, $\Pr[Q \le \delta] \le \sum_{t=1}^{N}\Pr[q_t \le \delta/N] \le \delta$.
\end{proof}

\begin{proposition}
\label{prop:soundness_over_key_test}
Let $P_\xi$ be a detector that is sound over the key~(\cref{eq:soundness_over_key}), \ie
\begin{equation*}
	\forall \omega \in \Sigma^*, \forall \alpha \in [0,1], \quad \Pr_{\xi}[P_{\xi}(\omega) \le \alpha] \le \alpha.
\end{equation*}
Given a corpus $C = (\omega_1,\dots,\omega_n)$, thresholds $\mathcal{A}$ and $m$ independent keys $\xi_1^i,\dots,\xi_m^i$ per text, \cref{alg:soundness_over_key_test} counts the false positives $X_i(\alpha) \mathrel{:=} \sum_{j=1}^{m}\indicator\{P_{\xi_j^i}(\omega_i) \le \alpha\}$, and rejects $P_\xi$ if $q_i(\alpha) \mathrel{:=} T(m,\alpha,X_i(\alpha)) \le \eta \mathrel{:=} (1-(1-\delta)^{1/n})/|\mathcal{A}|$ for some text $\omega_i$ and threshold $\alpha \in \mathcal{A}$.
Then, for every corpus $C$,~\cref{alg:soundness_over_key_test} rejects $P_\xi$ with probability at most $\delta$.
\end{proposition}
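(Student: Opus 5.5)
The approach is to show that each individual check $q_i(\alpha)\le\eta$ fires with probability at most $\eta$. We then combine across thresholds by a union bound and across texts using independence of the key draws.

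\emph{Per-check validity.} Fix a text $\omega_i$ and a threshold $\alpha\in\mathcal A$. The keys $\xi_1^i,\dots,\xi_m^i$ are drawn independently and uniformly, so the indicators $\indicator\{P_{\xi_j^i}(\omega_i)\le\alpha\}$ are independent Bernoulli variables. By~\cref{eq:soundness_over_key} applied to the fixed text $\omega_i$, each has parameter $\Pr_\xi[P_\xi(\omega_i)\le\alpha]\le\alpha$. Hence $X_i(\alpha)$ is a sum of $m$ independent Bernoullis with parameters at most $\alpha$, and~\cref{lem:binomial_pvalue} gives $\Pr[T(m,\alpha,X_i(\alpha))\le\eta]\le\eta$.

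\emph{Combining thresholds within a text.} For a fixed $i$, let $E_i$ be the event that some $\alpha\in\mathcal A$ triggers a rejection. A union bound over $\mathcal A$ gives $\Pr[E_i]\le|\mathcal A|\eta=1-(1-\delta)^{1/n}$. The events for different thresholds share the same keys and are dependent, which is exactly why we use a union bound rather than independence at this step.

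\emph{Combining texts.} The key families $(\xi_j^i)_j$ are drawn independently for different $i$, and the corpus $C$ is fixed. Each $E_i$ is a measurable function of the $i$-th key family only, so the events $E_1,\dots,E_n$ are independent. Therefore
\begin{equation*}
\Pr\Bigl[\bigcup_i E_i\Bigr]=1-\prod_{i=1}^n(1-\Pr[E_i])\le 1-\bigl((1-\delta)^{1/n}\bigr)^n=\delta.
\end{equation*}

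The main point to get right is this last step. It relies on the keys being drawn fresh per text, as in the algorithm, so that the $E_i$ are genuinely independent and the Šidák-type cutoff is justified. We also need the corpus to be fixed, not chosen depending on the keys. The rest is a direct application of~\cref{lem:binomial_pvalue}.
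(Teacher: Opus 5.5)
Your proposal is correct and matches the paper's proof step for step. Each check is valid by \cref{lem:binomial_pvalue}, a union bound over $\mathcal{A}$ gives $\Pr[E_i]\le 1-(1-\delta)^{1/n}$, and the events $E_i$ are independent across texts because they use disjoint key families, which yields $1-\prod_i(1-\Pr[E_i])\le\delta$.
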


\begin{proof}
Fix a text $\omega_i$ and a threshold $\alpha \in \mathcal{A}$.
Because the keys $\xi_1^i,\dots,\xi_m^i$ are i.i.d., $X_i(\alpha)$ is a sum of $m$ i.i.d. Bernoulli random variables with parameter $\Pr_\xi[P_\xi(\omega_i) \le \alpha] \le \alpha$ by soundness over the key.
Hence, by~\cref{lem:binomial_pvalue}, $\Pr[q_i(\alpha) \le \eta] \le \eta$, and by a union bound over the thresholds, the event $E_i \mathrel{:=} \{\exists \alpha \in \mathcal{A} : q_i(\alpha) \le \eta\}$ has probability at most $|\mathcal{A}|\eta = 1-(1-\delta)^{1/n}$.
The events $E_1,\dots,E_n$ depend on disjoint sets of independent keys, so they are independent, and
\begin{equation*}
	\Pr[\mathcal{R} \neq \emptyset] = 1 - \prod_{i=1}^{n}(1 - \Pr[E_i]) \le 1 - (1-\delta) = \delta. \qedhere
\end{equation*}
\end{proof}

\begin{proposition}
\label{prop:soundness_given_key_test}
Let $P_\xi$ be a detector that is sound given a key~(\cref{eq:soundness_given_key}) for the human text distribution $\mathcal{H}_0$, the fraction $\beta$, and the thresholds $\mathcal{A}$, \ie
\begin{equation*}
	\forall \alpha \in \mathcal{A}, \quad \Pr_{\xi}[ \Pr_{\Omega \sim \mathcal{H}_0} [P_{\xi}(\Omega) \le \alpha] > \alpha] \le \beta.
\end{equation*}
\Cref{alg:soundness_given_key_test} draws $d$ independent keys $\xi_j$ and, for each key, $n$ independent texts $\omega_1^j,\dots,\omega_n^j \sim \mathcal{H}_0$, and counts the false positives $X_j(\alpha) \mathrel{:=} \sum_{t=1}^{n}\indicator\{P_{\xi_j}(\omega_t^j) \le \alpha\}$.
For each threshold $\alpha \in \mathcal{A}$ and screening level $\gamma$ in a predeclared finite set $\mathcal{S} \subset (0,1)$, a key screens positive if $X_j(\alpha) \ge c$, where $c \mathrel{:=} \min\{k \in \{1,\dots,n+1\} : T(n,\alpha,k) \le \gamma\}$ is the smallest count that is significant at level $\gamma$ for a key with false positive rate $\alpha$.
Let $r \mathrel{:=} T(n,\alpha,c)$, $b \mathrel{:=} \beta + (1-\beta)r$, and $B$ the number of keys that screen positive.
\Cref{alg:soundness_given_key_test} rejects $P_\xi$ if $q \mathrel{:=} T(d,b,B) \le \eta \mathrel{:=} \delta/(|\mathcal{A}||\mathcal{S}|)$ for some $(\alpha,\gamma) \in \mathcal{A} \times \mathcal{S}$.
Then~\cref{alg:soundness_given_key_test} rejects $P_\xi$ with probability at most $\delta$.
\end{proposition}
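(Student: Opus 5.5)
The plan is to fix a pair $(\alpha,\gamma) \in \mathcal{A} \times \mathcal{S}$ and show that the screening indicators $Y_j \mathrel{:=} \indicator\{X_j(\alpha) \ge c\}$ are independent Bernoulli variables with parameter at most $b$. We then apply \cref{lem:binomial_pvalue} to $B = \sum_j Y_j$, which gives $\Pr[q \le \eta] \le \eta$. A union bound over the $|\mathcal{A}||\mathcal{S}|$ pairs then yields total rejection probability at most $\delta$.

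\textbf{Independence.} Each $Y_j$ is a function of $(\xi_j, \omega_1^j, \dots, \omega_n^j)$ only. These blocks are mutually independent across $j$ by construction of \cref{alg:soundness_given_key_test}, so the $Y_j$ are independent. They are in fact identically distributed, but we only need the bound on their parameters.

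\textbf{Per-key bound.} This is the main step. Fix $j$ and condition on the key $\xi_j = \xi$. Let $g(\xi) \mathrel{:=} \Pr_{\Omega \sim \mathcal{H}_0}[P_\xi(\Omega) \le \alpha]$. Since the texts are drawn independently of the key and of each other, $X_j(\alpha)$ given $\xi_j = \xi$ is $\mathrm{Binomial}(n, g(\xi))$.

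\begin{itemize}
\item On the \emph{good} set $\{g(\xi) \le \alpha\}$, the first claim of \cref{lem:binomial_pvalue} gives $\Pr[X_j(\alpha) \ge c \mid \xi] \le T(n,\alpha,c) = r$.
\item On the \emph{bad} set, we bound the conditional probability by $1$.
\end{itemize}

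Soundness given a key says the bad set has probability $\pi \le \beta$. Integrating over $\xi$ gives
\begin{equation*}
\Pr[Y_j = 1] \le \pi + (1-\pi) r = r + \pi(1-r) \le r + \beta(1-r) = \beta + (1-\beta) r = b,
\end{equation*}
where the second inequality uses monotonicity in $\pi$ (valid because $1 - r \ge 0$). The measurability of $g$ follows from the standing assumption on $(\xi,\omega) \mapsto P_\xi(\omega)$ together with Fubini. Two edge cases need care:
\begin{itemize}
\item If $c = n+1$, then $r = 0$ and no key screens positive beyond probability $\beta$; the bound still holds.
\item If $b > 1$ were possible we would cap it, but $b \le 1$ always holds since $\beta, r \in [0,1]$.
\end{itemize}

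\textbf{Conclusion.} By the second claim of \cref{lem:binomial_pvalue} applied with $s = d$ and $p = b$, we get $\Pr[T(d,b,B) \le \eta] \le \eta$ for the fixed pair $(\alpha,\gamma)$. Summing over the $|\mathcal{A}||\mathcal{S}|$ pairs gives $\Pr[\mathcal{R} \neq \emptyset] \le |\mathcal{A}||\mathcal{S}| \eta = \delta$.

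The only subtle point is the conditioning argument in the per-key bound. The texts are independent of the key, but they are not independent unconditionally in the sense of being binomial with a fixed parameter; they are binomial only after conditioning on the key. That is why the bound has to go through conditioning on the key rather than applying the lemma directly.
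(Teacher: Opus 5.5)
Your proposal is correct and follows the paper's proof essentially step for step. Both arguments fix $(\alpha,\gamma)$, condition on the key to get $X_j(\alpha)\sim\mathrm{Binomial}(n,g(\xi_j))$, bound the screening probability by $r$ on good keys and by $1$ on bad keys to obtain $\pi+(1-\pi)r\le b$, then apply \cref{lem:binomial_pvalue} to $B$ and take a union bound over $\mathcal{A}\times\mathcal{S}$. (A minor point: when $c=n+1$, no key can ever screen positive, since $X_j(\alpha)\le n$, so that edge case is even more trivial than you state.)
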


\begin{proof}
Fix $(\alpha,\gamma) \in \mathcal{A} \times \mathcal{S}$, and note that $c$, $r$ and $b$ are deterministic.
For a key $\xi$, let $g(\xi) \mathrel{:=} \Pr_{\Omega \sim \mathcal{H}_0}[P_\xi(\Omega) \le \alpha]$ be its false positive rate, and say that $\xi$ is bad if $g(\xi) > \alpha$.
Given $\xi_j$, the texts $\omega_1^j,\dots,\omega_n^j$ are i.i.d., so $X_j(\alpha) \sim \mathrm{Binomial}(n, g(\xi_j))$.
If $\xi_j$ is not bad, $g(\xi_j) \le \alpha$, and~\cref{lem:binomial_pvalue} gives $\Pr[X_j(\alpha) \ge c \mid \xi_j] \le T(n,\alpha,c) = r$.
As a key is bad with probability $\pi \le \beta$ by soundness given a key,
\begin{equation*}
	\Pr[X_j(\alpha) \ge c] \le \pi + (1-\pi)r = r + \pi(1-r) \le \beta + (1-\beta)r = b.
\end{equation*}
Since all draws are independent across keys, $B$ is a sum of $d$ independent Bernoulli random variables with parameters at most $b$, and by~\cref{lem:binomial_pvalue}, $\Pr[q \le \eta] \le \eta$.
By a union bound over $\mathcal{A} \times \mathcal{S}$, $\Pr[\mathcal{R} \neq \emptyset] \le |\mathcal{A}||\mathcal{S}|\eta = \delta$.
\end{proof}

\subsection{Distortion-Freeness of Residual Clocks}
\label{app:proofs:residual}

We prove that \texttt{ResidualRace}~(\cref{eq:exponential_race}) is distortion-free even when a context is repeated within a request, whereas the \texttt{Gumbel race} requires the per-request cache for that~(\cref{app:components:context}).
The proof relies on the following classical property of exponential races.

\begin{lemma}
\label{lem:exponential_race}
Let $E_1,\dots,E_K$ be i.i.d. $\operatorname{Exp}(1)$ random variables and $m_1,\dots,m_K > 0$ with $\sum_k m_k = 1$.
Let $\tau \mathrel{:=} \min_k E_k/m_k$ and $\kappa \mathrel{:=} \arg\min_k E_k/m_k$.
Then $\Pr[\kappa = k] = m_k$, and conditional on $(\tau,\kappa)$, the residuals $(E_j - \tau m_j)_{j \neq \kappa}$ are i.i.d. $\operatorname{Exp}(1)$.
\end{lemma}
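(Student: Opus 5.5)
We compute the joint law of $(\tau,\kappa,(E_j)_{j\neq\kappa})$ directly by conditioning on the event $\{\kappa=k\}$ and changing variables. Rescaling, the variables $X_k \mathrel{:=} E_k/m_k$ are independent with $X_k\sim\operatorname{Exp}(m_k)$, with density $m_k e^{-m_k x}$ on $[0,\infty)$. Ties occur with probability zero, so $\kappa$ is almost surely well defined. We also have $\tau=\min_k X_k$, and $\{\kappa=k\}=\{X_k<X_j \ \forall j\neq k\}$.

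Fix $k$, a time $s\ge 0$, and thresholds $y_j\ge 0$ for $j\neq k$. The key step is to evaluate
\begin{equation*}
\Pr\bigl[\kappa=k,\ \tau\in ds,\ E_j-\tau m_j> y_j\ \forall j\neq k\bigr].
\end{equation*}
On $\{\kappa=k,\tau=s\}$, the condition $E_j-s m_j>y_j$ reads $E_j> s m_j+y_j$. Since $y_j\ge0$, this already implies $E_j/m_j>s$, so it subsumes the constraint that $k$ wins. By independence, the probability therefore factorizes as
\begin{equation*}
m_k e^{-m_k s}\,ds\;\prod_{j\neq k} e^{-(s m_j+y_j)} \;=\; m_k e^{-s}\,ds\;\prod_{j\neq k}e^{-y_j},
\end{equation*}
where the equality uses $\sum_j m_j=1$.

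The joint density thus splits into three factors: $m_k$ (the law of $\kappa$), $e^{-s}$ (so $\tau\sim\operatorname{Exp}(1)$, independent of $\kappa$), and $\prod_{j\neq k}e^{-y_j}$. The last factor is the joint survival function of i.i.d. $\operatorname{Exp}(1)$ variables, and it does not depend on $(s,k)$. Two consequences follow:
\begin{itemize}
\item Setting all $y_j=0$ and integrating over $s$ gives $\Pr[\kappa=k]=m_k$.
\item Dividing by the marginal density $m_k e^{-s}$ of $(\tau,\kappa)$ shows that, conditional on $(\tau,\kappa)=(s,k)$, the residuals $(E_j-\tau m_j)_{j\neq k}$ are i.i.d. $\operatorname{Exp}(1)$.
\end{itemize}

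The main subtlety is that the conditioning is on a continuous variable $\tau$. To make it rigorous, we will state the computation for events $\{\tau\in[s,s+h)\}$ and let $h\to0$. Equivalently, we can note that the factorized expression determines the joint distribution on product sets, which is enough to conclude by a $\pi$-$\lambda$ argument. The edge case where some $m_k=0$ is excluded by the hypothesis $m_k>0$.
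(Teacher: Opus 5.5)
Your proposal is correct and takes essentially the same route as the paper. Both compute $\Pr[\kappa=k,\ \tau\in \mathrm{d}s,\ E_j-sm_j>y_j\ \forall j\neq k]$, factor it as $m_k\cdot e^{-s}\,\mathrm{d}s\cdot\prod_{j\neq k}e^{-y_j}$ using $\sum_j m_j=1$, and read off the law of $\kappa$ and the conditional i.i.d. $\operatorname{Exp}(1)$ law of the residuals; your remarks on ties, on the residual constraint subsuming the winning constraint when $y_j\ge 0$, and on conditioning on the continuous $\tau$ only make explicit details the paper leaves implicit.
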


\begin{proof}
Since $E_k/m_k$ has density $m_k e^{-m_k s}$, for all $k$, $s \ge 0$ and $x_j \ge 0$,
\begin{equation*}
	\Pr[\kappa = k, \tau \in \mathrm{d}s, E_j - s m_j > x_j \ \forall j \neq k]
	= m_k e^{-m_k s}\,\mathrm{d}s \prod_{j \neq k} e^{-(s m_j + x_j)}
	= m_k \cdot e^{-s}\,\mathrm{d}s \cdot \prod_{j \neq k} e^{-x_j}.
\end{equation*}
The density factorizes, so $\kappa$, $\tau$ and the residuals are independent with the claimed distributions.
\end{proof}

\begin{proposition}
\label{prop:residual_race}
Assume that the initial scores $\tilde{U}_{a,u}$ of distinct (seed, unit) pairs are i.i.d. uniform.
Then, for \texttt{ResidualRace}, $\Pr[\omega_t = v \mid \omega_{<t}] = p_t(v)$ for every step $t$ and token $v \in \Sigma$, even if the seed $a_t$ has already been used in the request.
\end{proposition}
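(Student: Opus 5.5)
The plan is to prove, by induction on $t$, a stronger invariant. Conditional on the history $\mathcal{F}_{t-1}$ (the tokens $\omega_{<t}$ and all race outcomes so far), the current clocks $R_{a,u}$ of all (seed, unit) pairs already initialized are i.i.d. $\operatorname{Exp}(1)$. They are also independent of the clocks of not-yet-initialized pairs, which are themselves i.i.d. $\operatorname{Exp}(1)$ by the assumption on $\tilde U$, since $-\log \tilde U_{a,u} \sim \operatorname{Exp}(1)$ for uniform $\tilde U_{a,u}$. Because the hash has no collision, every pair $(a,u)$ owns a single clock, so the invariant is about a well-defined family of random variables. The base case $t=1$ is immediate: no clock has been touched, and all clocks are i.i.d. $\operatorname{Exp}(1)$ by assumption.

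For the inductive step, I fix $\omega_{<t}$, hence $a_t$ and the masses $M_t(u)$, which are $\mathcal{F}_{t-1}$-measurable. The race at step $t$ uses the vector $(R_{a_t,u})_u$ over units with $M_t(u)>0$. Each entry is either a residual clock (if $a_t$ was seen before) or a fresh initialization. By the invariant, this vector is i.i.d. $\operatorname{Exp}(1)$ conditional on $\mathcal{F}_{t-1}$, regardless of whether $a_t$ is repeated; this is exactly where repetition stops mattering. Units with $M_t(u)=0$ never win and only have their clocks decremented by $\tau_t \cdot 0$, so they can be ignored. Applying \cref{lem:exponential_race} with $m_u = M_t(u)$ then gives $\Pr[u_* = u \mid \mathcal{F}_{t-1}] = M_t(u)$. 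The within-unit residual sampling of \cref{eq:watermark-unit-sampling} then yields $\Pr[\omega_t = v \mid \omega_{<t}] = M_t(\mathcal{G}(v))\cdot p_t(v)/M_t(\mathcal{G}(v)) = p_t(v)$. To pass from conditioning on $\mathcal{F}_{t-1}$ to conditioning on $\omega_{<t}$ alone, I average over the extra history using the tower property.

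To propagate the invariant, I again use \cref{lem:exponential_race}. Conditional on $(\tau_t,u_*)$ and $\mathcal{F}_{t-1}$, the updated losing clocks $R_{a_t,u}-\tau_t M_t(u)$ are i.i.d. $\operatorname{Exp}(1)$. The winning clock is replaced by a fresh private $\operatorname{Exp}(1)$, independent of everything else. Clocks of other seeds are untouched, and by the invariant they are independent of the step-$t$ race vector, so they keep their conditional law. The in-unit token draw uses fresh private randomness, so conditioning additionally on $\omega_t$ changes nothing. Hence the invariant holds for $\mathcal{F}_t$.

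The main obstacle is this bookkeeping of conditional independence. The race outcome and the token are revealed, yet the lemma must be applied to the clocks of the seed $a_t$ only, while the clocks of other seeds remain independent of that race. I would handle this by writing the joint conditional density of all clocks given $\mathcal{F}_{t-1}$ as a product over seeds. I would then check that the factorization in the proof of \cref{lem:exponential_race} multiplies in cleanly, with the $e^{-s}\,\mathrm{d}s$ and $M_t(u)$ factors absorbed into the law of $(\tau_t,u_*)$. The clipping $\max\{0,\cdot\}$ in \cref{alg:rotationfirst_generation} is harmless, since the residuals are almost surely nonnegative.
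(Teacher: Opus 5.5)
Your proposal is correct and follows essentially the same route as the paper: an induction on the invariant that, conditional on the history of tokens and race outcomes, all clocks (with uninitialized ones identified with $-\log\tilde U_{a,u}$) are i.i.d. $\operatorname{Exp}(1)$, and \cref{lem:exponential_race} gives both the winning probabilities $M_t(u)$ and the preservation of the invariant through the losing residuals, the fresh winner clock, and the private in-unit draw. Your extra remarks on zero-mass units, the per-seed factorization of the joint density, and the harmless clipping are only more explicit bookkeeping of the same argument.
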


\begin{proof}
Let $\mathcal{F}_t$ be the information generated by $\omega_{\le t}$ and the race outcomes $(\tau_s, u_s)_{s \le t}$.
We show by induction the invariant: conditional on $\mathcal{F}_t$, the current clocks $R_{a,u}$ of all (seed, unit) pairs are i.i.d. $\operatorname{Exp}(1)$, where we identify a clock that is not yet initialized with its initial value $-\log\tilde{U}_{a,u}$.
At $t = 0$, this holds because $-\log\tilde{U}_{a,u} \sim \operatorname{Exp}(1)$.

Assume the invariant holds at $t-1$.
Given $\mathcal{F}_{t-1}$, the distribution $p_t$, the masses $M_t$ and the seed $a_t$ are fixed, and the race of~\cref{eq:exponential_race} is an exponential race over the i.i.d. $\operatorname{Exp}(1)$ clocks $R_{a_t,u}$ with weights $M_t(u)$.
By~\cref{lem:exponential_race}, $\Pr[u_t = u \mid \mathcal{F}_{t-1}] = M_t(u)$, and since the token is then sampled inside $u_t$ with private randomness,
\begin{equation*}
	\Pr[\omega_t = v \mid \mathcal{F}_{t-1}] = M_t(\mathcal{G}(v)) \cdot \frac{p_t(v)}{M_t(\mathcal{G}(v))} = p_t(v).
\end{equation*}
Because $p_t$ is a function of $\omega_{<t}$, taking the expectation given $\omega_{<t}$ proves the claim at step $t$.
For the invariant, by~\cref{lem:exponential_race}, conditional on $\mathcal{F}_{t-1}$ and $(\tau_t,u_t)$, the updated losing clocks $R_{a_t,u} - \tau_t M_t(u)$ are i.i.d. $\operatorname{Exp}(1)$.
The winning clock is replaced by a fresh $\operatorname{Exp}(1)$, and all other clocks (other seeds, or units with $M_t(u) = 0$) are unchanged and independent of the race.
The token $\omega_t$ only depends on independent private randomness, so the invariant holds at $t$.
\end{proof}

\subsection{Validity of the Detectors}
\label{app:proofs:detectors}

We derive the null distribution of the detectors of~\cref{app:components:detection} whose validity is not immediate.
Throughout, the text $\omega$ is fixed and the randomness is over the key only, so that an exact null distribution implies~\cref{eq:soundness_over_key}.

\begin{proposition}
\label{prop:predictive_gamma}
Assume that, for every fixed geometry $B$ and every score $Z$ drawn according to its geometry, $d(B,Z) \sim \mathcal{U}([0,1])$.
Then, for every text $\omega$, the \texttt{Predictive Gamma} statistic of~\cref{eq:predictive_gamma} satisfies $S(\omega) \sim \operatorname{Gamma}(N,1)$.
\end{proposition}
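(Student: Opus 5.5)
The plan is to show that the per-step quantities $d_1,\dots,d_N$ are i.i.d.\ $\mathcal{U}([0,1])$ under the null. Once this holds, the variables $-\log d_i$ are i.i.d.\ $\operatorname{Exp}(1)$, and their sum is $\operatorname{Gamma}(N,1)$. Fix the text $\omega$. The retained set $D$, its enumeration $(a_i,u_i)_{i=1}^N$, and hence $N$ are then deterministic functions of $\omega$ and do not depend on the key. The only randomness is the key, which through the ideal-PRNG assumption of~\cref{app:proofs:prelim} makes $Z_1,\dots,Z_N$ mutually independent and each distributed according to its geometry. Independence holds because the pairs in $D$ are distinct after de-duplication, and the no-collision assumption on the hash guarantees that distinct pairs receive independent scores.

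The core step is a predictable-sequence argument. Let $\mathcal{F}_{i-1}=\sigma(Z_1,\dots,Z_{i-1})$. I will check that the weights $w_{i,j}$, and therefore the estimate $B_i$, are $\mathcal{F}_{i-1}$-measurable. This is an induction on the update rule: $w_{1,j}$ is uniform and $B_1$ is a fixed public element. Moreover, $w_{i+1,j}$ uses only $w_{i,j}$ and $d(A_j,Z_i)$, and $B_{i+1}$ is a deterministic function of $w_{i+1,\cdot}$. For the eigenvector case, I fix a deterministic tie-breaking rule for $\operatorname{eig}_{\max}$ so that this function is well defined. Since $Z_i$ is independent of $\mathcal{F}_{i-1}$, conditioning on $\mathcal{F}_{i-1}$ freezes $B_i$ at some value $b$ while leaving $Z_i$ with its geometric law. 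The hypothesis then gives $d(b,Z_i)\sim\mathcal{U}([0,1])$ for every fixed $b$. Formally, I apply the freezing lemma (the substitution property of conditional expectation for independent variables) to get $\Pr[d_i\le x\mid\mathcal{F}_{i-1}]=x$ almost surely.

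From the conditional statement I then derive joint independence. Because $d_1,\dots,d_{i-1}$ are $\mathcal{F}_{i-1}$-measurable, the tower property gives, inductively,
\[
\Pr[d_1\le x_1,\dots,d_N\le x_N]=\prod_{i=1}^N x_i .
\]
So the $d_i$ are i.i.d.\ uniform, and $S(\omega)=\sum_i -\log d_i\sim\operatorname{Gamma}(N,1)$.

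I expect the main obstacle to be measurability and degeneracy, not the probabilistic idea itself. First, the update $w\propto w\,d(A_j,Z_i)^{-1/2}$ could divide by zero when $d(A_j,Z_i)=0$. This event has probability zero by the uniformity hypothesis applied with $B=A_j$, so it can be ignored almost surely. Second, $\operatorname{eig}_{\max}$ must be a measurable selection, which I will handle with the fixed tie-breaking convention above. I will also note explicitly that the hypothesis ``$d(B,Z)$ uniform for every fixed $B$'' is exactly what each geometry's transform ($F$, the sign map, and so on) was built to satisfy, so the proposition only needs it as an assumption.
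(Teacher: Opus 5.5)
Your proposal is correct and follows essentially the same route as the paper. The paper also fixes $\omega$, so that $D$ and $N$ are deterministic, and notes that $B_i$ depends only on the public dictionary and $Z_1,\dots,Z_{i-1}$. It then concludes that $d_i$ is uniform conditionally on the past, hence independent of $d_1,\dots,d_{i-1}$, and sums the resulting i.i.d.\ $\operatorname{Exp}(1)$ variables. Your additional care (the tower-property factorization of the joint CDF, the measurable tie-breaking for $\operatorname{eig}_{\max}$, and discarding the null events $d(A_j,Z_i)=0$) only makes explicit details that the paper's short proof leaves implicit.
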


\begin{proof}
The retained pairs $D$ only depend on $\omega$, so $N$ is fixed and, by the ideal PRNG assumption, $Z_1,\dots,Z_N$ are i.i.d.
The estimate $B_i$ is computed from $A_0,\dots,A_{K-1}$ and $Z_1,\dots,Z_{i-1}$ only, and $Z_i$ is independent of $(Z_1,\dots,Z_{i-1})$.
Hence, conditional on $Z_1,\dots,Z_{i-1}$, $B_i$ is fixed and $d_i = d(B_i,Z_i) \sim \mathcal{U}([0,1])$.
Since $d_1,\dots,d_{i-1}$ are functions of $Z_1,\dots,Z_{i-1}$, $d_i$ is independent of $(d_1,\dots,d_{i-1})$, and by induction $d_1,\dots,d_N$ are i.i.d. uniform.
Thus, the $-\log d_i$ are i.i.d. $\operatorname{Exp}(1)$ and $S(\omega) \sim \operatorname{Gamma}(N,1)$.
\end{proof}

\begin{proposition}
\label{prop:position_evidence}
Assume that $\hat{p}_t$ only depends on $\omega_{<t}$.
Then, for every text $\omega$, the variables $(R_t)_{t \in D_{\mathrm{pos}}}$ of the \texttt{Position evidence} detector are i.i.d. uniform on $[0,1]$.
In particular, $S(\omega) \sim \operatorname{Gamma}(\lvert D_{\mathrm{pos}}\rvert,1)$, and the weighted statistic $\sum_{t \in D_{\mathrm{pos}}} -\eta_t\log R_t$ has the distribution of $\sum_{t \in D_{\mathrm{pos}}} \eta_t E_t$ for i.i.d. $E_t \sim \operatorname{Exp}(1)$.
\end{proposition}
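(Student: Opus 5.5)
The plan is to fix the text $\omega$, so that all randomness comes from the key. By assumption $\hat{p}_t$ depends only on $\omega_{<t}$, so for each $t \in D_{\mathrm{pos}}$ the competitor set $C_t$, the ratios $\alpha_{t,v}$ and $\lambda_t$ are all deterministic. The only random inputs to $R_t$ are then the scores $U_{a_t,\omega_t}$ and $(U_{a_t,v})_{v \in C_t}$.

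These are scores of distinct (seed, unit) pairs, since $C_t$ excludes $\omega_t$. By the ideal PRNG assumption of \cref{app:proofs:prelim} they are therefore i.i.d. uniform. Moreover, $D_{\mathrm{pos}}$ keeps only the first retained position for each context, and the hash has no collisions. So the pairs used at distinct $t \in D_{\mathrm{pos}}$ have distinct seeds and hence disjoint index sets. The families $\{U_{a_t,\cdot}\}$ for different $t$ are therefore independent, and so are the $R_t$. It remains to show that each $R_t$ is marginally uniform.

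Fix $t$ and write $u = U_{a_t,\omega_t}$ and $\lambda = \lambda_t$. I first verify the conditional law of the race indicator. Given $u$, the events $\{U_{a_t,v} < u^{\alpha_{t,v}}\}$ are independent with probabilities $u^{\alpha_{t,v}}$, so $\Pr[W_t = 1 \mid U = u] = \prod_{v} u^{\alpha_{t,v}} = u^{\lambda}$. In the case $C_t = \emptyset$ we have $\lambda = 0$ and $W_t = 1$, which is consistent with this formula.

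The key step is to recognise $R_t$ as the survival function of the pair $(W_t, U)$ under a total order. In this order, $W = 1$ is more extreme than $W = 0$, and within each class a larger $U$ is more extreme. Let $(W', U')$ be an independent copy. Then, using the conditional law above:
\begin{itemize}
\item $\Pr[W' = 1, U' > u] = \int_u^1 s^\lambda\,\mathrm{d}s = (1-u^{1+\lambda})/(1+\lambda)$, which is the $W_t = 1$ branch.
\item $\Pr[W' = 1] + \Pr[W' = 0, U' > u] = 1/(1+\lambda) + \int_u^1 (1-s^\lambda)\,\mathrm{d}s = 1 - u + u^{1+\lambda}/(1+\lambda)$, which is the $W_t = 0$ branch.
\end{itemize}
Hence $R_t = \Pr[(W',U') \succeq (W_t,U_t) \mid (W_t,U_t)]$.

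Because $U$ has a continuous law, this ordered statistic has no atoms. Its survival function evaluated at the statistic itself is then uniform on $[0,1]$, by the probability integral transform. I expect this identification to be the main step needing care. In particular, one must check that the two branches join continuously: at $u = 0$ the $W = 0$ branch equals $1$, and at $u = 1$ the $W = 0$ branch equals the $u = 0$ value of the $W = 1$ branch. One must also check that $R_t$ is strictly monotone in the order, so that no mass concentrates. A direct alternative is to compute $\Pr[R_t \le x]$ by splitting on $W_t$.

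Finally, the $-\log R_t$ are i.i.d. $\operatorname{Exp}(1)$. Summing them gives $S(\omega) \sim \operatorname{Gamma}(\lvert D_{\mathrm{pos}}\rvert, 1)$. The weights $\eta_t$ depend only on $\hat{p}_t$ and are therefore deterministic, so the weighted statistic has the law of $\sum_t \eta_t E_t$ with $E_t$ i.i.d. $\operatorname{Exp}(1)$.
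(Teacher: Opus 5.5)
Your proposal is correct and follows the paper's proof essentially step by step. It fixes the competitors and weights because $\hat p_t$ depends only on $\omega_{<t}$, gets independence from the distinct seeds in $D_{\mathrm{pos}}$, computes $\Pr[W=1\mid U=u]=u^{\lambda}$, and reads $R_t$ as the survival function of the lexicographic order on $(W,U)$, which the paper encodes as $Y=W+U$, before applying the probability integral transform.
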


\begin{proof}
Since $\hat{p}_t$ only depends on $\omega_{<t}$, the competitors $C_t$, the exponents $\alpha_{t,v}$, $\lambda_t$ and the weights $\eta_t$ are fixed.
The seeds $a_t$ for $t \in D_{\mathrm{pos}}$ are distinct and $\omega_t \notin C_t$, so all scores $U_{a_t,\omega_t}$ and $U_{a_t,v}$ for $v \in C_t$ are distinct (seed, token) pairs, and are therefore independent and uniform.
In particular, each $R_t$ is a function of its own disjoint set of scores, so the $R_t$ are independent, and it remains to show that each $R_t$ is uniform.

Fix $t$ and write $U \mathrel{:=} U_{a_t,\omega_t}$, $\lambda \mathrel{:=} \lambda_t$ and $W \mathrel{:=} W_t$.
Given $U = u$, the competitors are independent, so $\Pr[W = 1 \mid U = u] = \prod_{v \in C_t}\Pr[U_{a_t,v} < u^{\alpha_{t,v}}] = u^{\lambda}$.
Let $Y \mathrel{:=} W + U$, which ranks every winning outcome above every losing one, and then by the value of $U$.
$Y$ has a continuous distribution, and its survival function is, for $W = 1$ and for $W = 0$ respectively,
\begin{align*}
	\Pr[Y' \ge 1 + u] &= \int_u^1 x^\lambda\,\mathrm{d}x = \frac{1-u^{1+\lambda}}{1+\lambda}, \\
	\Pr[Y' \ge u] &= \frac{1}{1+\lambda} + \int_u^1 (1-x^\lambda)\,\mathrm{d}x = 1 - u + \frac{u^{1+\lambda}}{1+\lambda},
\end{align*}
where $Y'$ is an independent copy of $Y$.
Hence $R_t = \Pr[Y' \ge Y \mid Y]$, and by the probability integral transform, $R_t \sim \mathcal{U}([0,1])$.
\end{proof}

\section{Prompts}
\label{app:prompt}

In this section, we provide the prompt given to the research agent (\cref{app:prompt:agent}) and the prompts used for the back-translation and paraphrasing attacks during evaluation (\cref{app:prompt:attacks}).

\subsection{Agent Prompt}
\label{app:prompt:agent}

The following prompt is given to the autonomous research agent.

\begin{promptbox}[Watermark design prompt]
Your goal is to design a new distortion-free watermarking scheme in the \texttt{lm-wm-tools} library that significantly outperforms prior work across all key dimensions: detectability, robustness, and diversity. For diversity and PPL, the goal is to be as close as possible to the corresponding unwatermarked values (the leaderboard ranks based on the distance to the unwatermarked values).

You should focus on new experimental ideas and reason mathematically about watermarking to make genuine breakthroughs rather than tweaking hyperparameters.
Also, a constraint of the challenge is to focus only on distortion-free schemes.

A key challenge of watermark development is to make sure that:

\begin{itemize}[leftmargin=1.6em, itemsep=0.25em]
\item The scheme is distortion-free.
\item The scheme is robust to edits to the generated text.
\end{itemize}

Each of your submissions should be accompanied by:

\begin{itemize}[leftmargin=1.6em, itemsep=0.25em]
\item A proof that your scheme is distortion-free and well-calibrated.
\item A commit message.
\end{itemize}

Before an official submission, use the \texttt{lm-wm-tools verify} and \texttt{lm-wm-tools evaluate} commands to evaluate your proposed schemes.
We provide public configuration but you are also welcome to use your owns.
You have a limited number of official submissions (use \texttt{wm-harness limit} to see how many remain), so only evaluate schemes that you are confident (i) are sound and (ii) outperform prior work or submissions.
Soundness, in particular, is a significant constraint: your scheme should satisfy the following properties as much as possible (reported results are for baselines verification).

Let $K$ be the watermark key, $\Sigma$ the vocabulary, $\mathcal{L}$ the LLM probability distribution, $\mathcal{L}_K$ the watermarked LLM probability distribution, and $p_K(\omega)$ the detector p-value for text $\omega$ under key $K$.

\textbf{Distortion-free}

\[
\forall \omega \in \Sigma^*, \mathbb{E}_{K}[\mathcal{L}_{K}(\omega)] = \mathcal{L}(\omega)
\]

\begin{center}
\setlength{\tabcolsep}{5pt}
\begin{tabular}{lrrrrr}
\toprule
Scheme & Context 1 & Context 2 & Context 3 & Context 4 & Context 5 \\
\midrule
AAR & 1 & 0.4194 & 0.5642 & 1 & — \\
SynthID & 0.5092 & 0.6584 & 1 & 0.4254 & 1 \\
TextSeal & 1 & 1 & 1 & 1 & — \\
\bottomrule
\end{tabular}
\end{center}

\textbf{Soundness over the key}

\[
\forall \omega \in \Sigma^*, \forall \alpha \in [0, 1], \Pr_{K}[p_{K}(\omega) \le \alpha] \le \alpha
\]

\begin{center}
\setlength{\tabcolsep}{5pt}
\begin{tabular}{lrrrrr}
\toprule
Scheme & Context 1 & Context 2 & Context 3 & Context 4 & Context 5 \\
\midrule
AAR & 0.009677 & 0.02317 & 0.1124 & 0.2463 & — \\
SynthID & 0.1516 & 0.4806 & 0.5053 & 1 & 0.9043 \\
TextSeal & 0.3678 & 1 & 1 & 1 & — \\
\bottomrule
\end{tabular}
\end{center}

\textbf{Soundness for a deployed key}

\[
\forall K, \Pr_{\Omega \sim \mathcal{H}_0}[p_K(\Omega) \le \alpha] \le \alpha
\]

\begin{center}
\setlength{\tabcolsep}{5pt}
\begin{tabular}{lrrrrr}
\toprule
Scheme & Context 1 & Context 2 & Context 3 & Context 4 & Context 5 \\
\midrule
AAR & 2.981e-82 & 4.811e-14 & 3.813e-05 & 0.07999 & — \\
SynthID & 8.056e-122 & 9.575e-41 & 4.136e-09 & 6.433e-05 & 0.2573 \\
TextSeal & 5.265e-67 & 0.001826 & 1 & 1 & — \\
\bottomrule
\end{tabular}
\end{center}

You are operating autonomously.

\medskip
\textbf{Official submissions}\

When you have a submission you like, submit it for private grading through the black-box \texttt{wm-harness} command:

\begin{lstlisting}[language={},basicstyle=\ttfamily\footnotesize,numbers=none,mathescape=false,escapechar={},breaklines=true,columns=fullflexible,frame=single,rulecolor=\color{black!14},backgroundcolor=\color{black!2},xleftmargin=0pt,framexleftmargin=0pt]
wm-harness evaluate configs/MyScheme.yaml
wm-harness leaderboard
\end{lstlisting}

Run \texttt{wm-harness limit} before submitting to see the run's evaluation budget: \texttt{limit}, \texttt{used}, and \texttt{remaining} (\texttt{null} means unlimited). Each new official evaluation job consumes one call, even if verification or grading fails. Duplicate requests do not consume another call; removing a queued job does not refund it. Stop submitting when no calls remain.

The \texttt{wm-harness leaderboard} command shows you all prior evaluation results, and you should consider it trustworthy.

\texttt{wm-harness evaluate} runs official verification first, then evaluates the same committed snapshot only if every check passes. One job covers both stages. Results include \texttt{verification.all\_passed} and \texttt{verification.checks}; rejected submissions have no evaluation metrics, and the CLI exits with code 2. By default, the command waits for both stages; use \texttt{-{}-no-wait} to return the queued job immediately.

Before calling \texttt{wm-harness evaluate}:

\begin{enumerate}[leftmargin=1.6em, itemsep=0.25em]
\item Keep the working tree clean.
\item Commit the exact scheme, configuration, documentation, and required assets.
\item Submit the committed configuration path. The API grades an immutable archive of that commit and rejects dirty or uncommitted work.
\end{enumerate}

Use \texttt{wm-harness status JOB\_ID} to inspect an asynchronous job.
Status includes \texttt{progress.stage}, a readable \texttt{progress.message}, \texttt{progress.updated\_at} (last stage or completed grading pass), and \texttt{progress.heartbeat\_at} (worker heartbeat every five seconds while running). Long model startup or grading passes can leave the stage unchanged.

We automatically queue submissions, so it is better to wait until the submission is done: you can use \texttt{wm-harness wait JOB\_ID} to wait indefinitely for one job, or add \texttt{-{}-timeout <seconds>} to bound the wait (run this command in the foreground).
It is okay to wait for long periods (e.g., a few hours) to save tokens; repeatedly waiting for only one minute is inefficient.
You can also run \texttt{wm-harness list} to list all queued API queries along with their IDs.

Use \texttt{wm-harness remove JOB\_ID} to remove a job that is still queued. Running or completed jobs cannot be removed. Removed jobs disappear from the job list and can be submitted again.

Be aware that official evaluation can take a few minutes to a few hours, depending on the scheme implementation.

To help you, \texttt{lm-wm-tools verify} and \texttt{lm-wm-tools evaluate} are public subsets of the verification and evaluation stages run by \texttt{wm-harness evaluate}. You can use them locally to estimate whether your submission is worth evaluating on the private test.

API credentials for public paraphrasing evaluation are installed at \texttt{/home/sandbox/secret.sh}. Before running \texttt{lm-wm-tools evaluate}, load them in the same shell with \texttt{set -a; source /home/sandbox/secret.sh; set +a} so the CLI inherits the credentials.

\medskip
\textbf{Modifying \texttt{lm-wm-tools}}\

If you want to implement a scheme that does not fit well with the \texttt{lm-wm-tools} framework, you can modify the library more extensively.
However, for the harness to work, your changes must remain compatible with the \texttt{lm-wm-tools verify} and \texttt{lm-wm-tools evaluate} commands.
You should also make your watermark efficient and can change the library's inner workings to achieve the required efficiency.

\medskip
\textbf{Do not finish while official jobs are pending}\

When you are done, do not produce your final answer while any official evaluation job is queued or running.
Instead, run in the foreground:

\begin{lstlisting}[language={},basicstyle=\ttfamily\footnotesize,numbers=none,mathescape=false,escapechar={},breaklines=true,columns=fullflexible,frame=single,rulecolor=\color{black!14},backgroundcolor=\color{black!2},xleftmargin=0pt,framexleftmargin=0pt]
wm-harness wait
\end{lstlisting}

It waits until the official queue is empty and then returns the results of every submitted job.
Do not background it, replace it with a custom monitoring script, or end your turn while it is running.
Inspect all returned results, act on them, improve rejected or underperforming candidates, submit them with \texttt{wm-harness evaluate}, and run \texttt{wm-harness wait} again if that creates more queued work.
Only finish after the official queue is empty and you have checked the returned results.

\medskip
\textbf{Notes}\

The private configuration, datasets, seeds, raw reports, and grader filesystem are intentionally unavailable. Do not attempt to infer, access, or reproduce them: this would be considered cheating.
\end{promptbox}

\subsection{Attack Prompts}
\label{app:prompt:attacks}

Here, we detail the prompts we used for the back-translation and paraphrasing attacks.
In the example below, \texttt{<text>} is replaced with the actual text to translate or paraphrase. 
For back-translation, the prompt is applied twice, once for translating to French and the other time to translate back into the original language.

\begin{promptbox}[Back-translation prompt]
Translate the following text to \texttt{\{language\}}. Your reply should only contain the translated text.

\texttt{<text>}
\end{promptbox}

\begin{promptbox}[Paraphrasing prompt]
Please rewrite the following text and return only the rewritten text: \texttt{<text>}
\end{promptbox}

\section{Full Experimental Results}
\label{app:full_results}

In this section, we report the full numerical results behind the figures of the paper.
In~\cref{app:full_results:trajectories}, we give the results of every scheme submitted during the autoresearch trajectories of~\cref{sec:eval:trajectories}, and in~\cref{app:full_results:components}, every configuration evaluated in the component ablation of~\cref{app:components}.

\subsection{Autoresearch Trajectories}
\label{app:full_results:trajectories}

\Cref{tab:full-trajectories} reports the results of the three baselines and of the $52$ submissions of the four research agents from~\cref{fig:research-trajectories}, in submission order.

\begin{table}[t]
\centering
\caption{Full results of every submission of~\cref{fig:research-trajectories}, in submission order. Each TPR cell reports the guaranteed / empirical / no-correction settings of~\cref{app:soundness_key} (quality does not depend on the setting). Parentheses mark values without correction for schemes that then fail our soundness given a key test. Within each group, bold marks the best (unrounded) value of each column and setting.}
\label{tab:full-trajectories}
\scriptsize
\setlength{\tabcolsep}{4pt}
\resizebox{\linewidth}{!}{%
\begin{tabular}{@{}rlrrrrrrrr@{}}
\toprule
 & & & \multicolumn{4}{c}{Robustness TPR@1  $\uparrow$} & \multicolumn{3}{c}{Quality deviation (\%) $\downarrow$} \\
\cmidrule(lr){4-7}\cmidrule(l){8-10}
\# & Scheme & Clean TPR@1 $\uparrow$ & Deletion & Substitution & Back-trans. & Paraphrase & $\Delta$PPL & $\Delta$SB-2 & $\Delta$SB-3 \\
\midrule
\multicolumn{10}{@{}l}{\textbf{Baselines}} \\
 & TextSeal & 100 / \textbf{100} / \textbf{100} & \textbf{14} / \textbf{34} / \textbf{34} & \textbf{28} / \textbf{54} / \textbf{54} & \textbf{89} / \textbf{97} / \textbf{97} & \textbf{38} / \textbf{58} / \textbf{58} & 4.3 & 2.3 & 7.3 \\
 & AAR & \textbf{100} / 100 / (100) & 4 / 19 / (19) & 8 / 30 / (30) & 76 / 92 / (92) & 19 / 41 / (41) & 3.9 & 3.9 & 12 \\
 & SynthID & \textbf{100} / \textbf{100} / \textbf{(100)} & 9 / 30 / (30) & 20 / 44 / (44) & 88 / 95 / (95) & 35 / 56 / (56) & \textbf{2.5} & \textbf{2.0} & \textbf{7.1} \\
\addlinespace[2pt]
\multicolumn{10}{@{}l}{\textbf{\astra (xhigh)}} \\
1 & CascadeCopula & 100 / 100 / (100) & 60 / 79 / (80) & 60 / 78 / (79) & 95 / 99 / (99) & 58 / 76 / (76) & 2.3 & 3.9 & 10 \\
2 & LexicalCopula & 100 / 100 / (100) & 57 / 78 / (78) & 69 / 83 / (84) & 94 / 98 / (98) & 63 / 78 / (79) & 1.0 & 3.4 & 9.4 \\
3 & PortfolioRace & \textbf{100} / 100 / 100 & 51 / 67 / 67 & 70 / 80 / 80 & 94 / 98 / 98 & 59 / 71 / 71 & 3.1 & 2.1 & 4.7 \\
4 & TailRace & 99 / 100 / (100) & 44 / 66 / (67) & 55 / 73 / (73) & 87 / 94 / (94) & 50 / 70 / (70) & 4.8 & 3.1 & 8.3 \\
5 & DirectionalRenewal & \textbf{100} / 100 / 100 & 67 / 83 / 83 & 73 / 86 / 86 & 96 / \textbf{99} / \textbf{99} & 64 / 78 / 78 & 4.0 & 4.0 & 9.5 \\
6 & HaarCap & \textbf{100} / \textbf{100} / \textbf{100} & 64 / 80 / 80 & 69 / 84 / 84 & 94 / 98 / 98 & 62 / 76 / 76 & 1.5 & 1.6 & 5.9 \\
7 & HaarGamma & 100 / 100 / 100 & 67 / 80 / 80 & 74 / 84 / 84 & 96 / 98 / 98 & 66 / 76 / 76 & 1.5 & 2.3 & 6.7 \\
8 & HaarPredictive & 100 / \textbf{100} / \textbf{(100)} & 71 / 85 / (85) & 77 / 88 / (88) & \textbf{96} / 98 / (98) & 68 / 80 / (80) & 1.8 & 2.6 & 7.0 \\
9 & HaarGraph & 100 / 100 / 100 & 72 / 86 / 86 & 77 / 88 / 88 & 96 / 98 / 98 & 69 / 82 / 82 & 3.2 & 2.0 & 5.5 \\
10 & HaarLexical & 99 / 99 / (99) & 83 / 92 / (92) & 76 / 86 / (86) & 92 / 97 / (97) & 66 / 79 / (79) & 2.1 & 2.0 & 5.4 \\
11 & HaarLexicalFirst & 99 / 100 / (100) & \textbf{84} / 91 / (91) & \textbf{81} / \textbf{89} / \textbf{(89)} & 94 / 98 / (98) & 68 / 82 / (82) & 2.4 & 2.2 & 5.7 \\
12 & ProjectiveFirst & 99 / 100 / 100 & 81 / 91 / 91 & 75 / 86 / 86 & 93 / 97 / 97 & 64 / 78 / 78 & 1.7 & 1.9 & 4.2 \\
13 & AxialFirst & 99 / 100 / (100) & 83 / \textbf{92} / \textbf{(92)} & 78 / 88 / (88) & 95 / 98 / (98) & \textbf{70} / \textbf{83} / \textbf{(83)} & 3.1 & 2.3 & 5.7 \\
14 & RotationFirst & 99 / 100 / 100 & 81 / 90 / 90 & 77 / 86 / 86 & 93 / 97 / 97 & 68 / 80 / 80 & \textbf{1.0} & \textbf{1.6} & \textbf{4.0} \\
\addlinespace[2pt]
\multicolumn{10}{@{}l}{\textbf{\astra (low)}} \\
1 & FreshRace & 99 / 100 / (100) & 13 / 34 / (35) & 14 / 37 / (37) & 74 / 90 / (90) & 28 / 50 / (51) & 7.2 & 1.9 & 7.1 \\
2 & FreshRace-selective & 100 / 100 / (100) & 26 / 52 / (53) & 28 / 54 / (54) & 89 / 96 / (96) & 44 / 65 / (65) & \textbf{3.9} & 3.8 & 12 \\
3 & FreshRace-deployment & 100 / \textbf{100} / \textbf{(100)} & 39 / 61 / (62) & 43 / 69 / (69) & 92 / 97 / (97) & 53 / 72 / (72) & 4.9 & 2.9 & 11 \\
4 & FreshRace-channels & \textbf{100} / \textbf{100} / \textbf{100} & 38 / 60 / 60 & 46 / 65 / 65 & 94 / 98 / 98 & 55 / 71 / 71 & 8.8 & 2.2 & 8.0 \\
5 & FreshRace-adaptive-channels & 100 / 100 / (100) & 42 / 65 / (65) & 44 / 66 / (67) & 91 / 96 / (96) & 53 / 70 / (70) & 9.8 & \textbf{1.7} & \textbf{5.3} \\
6 & FreshRace-split & \textbf{100} / \textbf{100} / \textbf{(100)} & 58 / 75 / (75) & 60 / 78 / (78) & \textbf{96} / \textbf{98} / \textbf{(98)} & \textbf{62} / \textbf{76} / \textbf{(76)} & 8.0 & 3.3 & 9.5 \\
7 & FreshRace-antithetic-reservoir & \textbf{100} / \textbf{100} / \textbf{100} & 49 / 67 / 67 & 50 / 68 / 68 & 93 / 97 / 97 & 53 / 68 / 68 & 5.4 & 1.9 & 5.3 \\
8 & FreshRace-balanced & \textbf{100} / \textbf{100} / \textbf{100} & 40 / 60 / 60 & 50 / 67 / 67 & 94 / 97 / 97 & 55 / 70 / 70 & 7.4 & 2.8 & 9.0 \\
9 & FreshRace-lexical & \textbf{100} / \textbf{100} / \textbf{100} & 51 / 71 / 71 & 64 / \textbf{81} / \textbf{81} & 95 / 98 / 98 & 59 / 75 / 75 & 7.3 & 2.4 & 7.4 \\
10 & FreshRace-lexical & 100 / 100 / 100 & 51 / 69 / 69 & \textbf{66} / 81 / 81 & 95 / 98 / 98 & 59 / 74 / 74 & 5.2 & 2.6 & 7.7 \\
11 & RenewalRace & 95 / 98 / (98) & \textbf{80} / \textbf{90} / \textbf{(91)} & 52 / 71 / (74) & 82 / 92 / (92) & 52 / 69 / (72) & 4.7 & 3.2 & 5.8 \\
\addlinespace[2pt]
\multicolumn{10}{@{}l}{\textbf{\opus}} \\
1 & TallyMark & 100 / 100 / (100) & 37 / 63 / (64) & 57 / 77 / (78) & 86 / 94 / (95) & 44 / 63 / (64) & 4.1 & 3.1 & 8.3 \\
2 & TallyMark & 100 / 100 / (100) & 32 / 57 / (57) & 34 / 60 / (60) & 87 / 95 / (95) & 42 / 64 / (64) & 3.1 & 1.7 & 7.4 \\
3 & TallyMark & 100 / \textbf{100} / \textbf{(100)} & 33 / 61 / (62) & 60 / 80 / (81) & 88 / 95 / (95) & 47 / 67 / (67) & 5.2 & 3.6 & 10 \\
4 & TallyMark & 100 / 100 / (100) & 36 / 64 / (64) & 57 / 80 / (80) & 88 / 97 / (97) & 49 / 69 / (70) & \textbf{2.8} & 3.4 & 9.2 \\
5 & ChoirMark & 100 / \textbf{100} / \textbf{(100)} & 57 / 75 / (75) & 78 / 87 / (87) & 96 / 99 / (99) & 62 / 75 / (75) & 5.8 & 14 & 34 \\
6 & CanonMark & 100 / 100 / 100 & 43 / 59 / 59 & 66 / 78 / 78 & 93 / 96 / 96 & 51 / 64 / 64 & 4.2 & 2.6 & 4.3 \\
7 & LadderMark & 100 / 100 / 100 & 51 / 66 / 66 & 67 / 78 / 78 & 95 / 98 / 98 & 56 / 68 / 68 & 3.8 & 2.7 & 4.6 \\
8 & PreludeMark & \textbf{100} / \textbf{100} / \textbf{(100)} & \textbf{73} / \textbf{88} / \textbf{(88)} & 79 / 90 / (91) & 98 / 99 / (99) & 70 / 84 / (84) & 9.0 & 5.8 & 18 \\
9 & ConsortMark & 100 / \textbf{100} / \textbf{(100)} & 65 / 79 / (79) & 70 / 82 / (82) & 97 / 99 / (99) & 62 / 75 / (75) & 5.7 & 1.6 & 5.1 \\
10 & SestetMark & 100 / \textbf{100} / \textbf{(100)} & 66 / 82 / (82) & 78 / 86 / (86) & 97 / 99 / (99) & 67 / 78 / (78) & 9.0 & 1.3 & 5.1 \\
11 & MotetMark & 100 / 100 / (100) & 66 / 82 / (82) & 77 / 87 / (87) & 97 / 99 / (99) & 62 / 78 / (78) & 7.2 & 1.6 & 5.8 \\
12 & CodaMark & 100 / 100 / (100) & 66 / 81 / (81) & 77 / 86 / (86) & 97 / 99 / (99) & 65 / 79 / (79) & 8.0 & 1.5 & 5.4 \\
13 & AccentMark & \textbf{100} / \textbf{100} / \textbf{100} & 69 / 82 / 82 & 83 / 90 / 90 & \textbf{99} / \textbf{100} / \textbf{100} & 74 / 84 / 84 & 8.0 & 1.5 & 6.0 \\
14 & RowMark & \textbf{100} / \textbf{100} / \textbf{100} & 67 / 84 / 84 & 82 / 91 / 91 & \textbf{99} / 100 / 100 & 74 / \textbf{86} / \textbf{86} & 6.3 & 1.6 & 5.8 \\
15 & UnisonMark & 100 / \textbf{100} / \textbf{100} & 68 / 82 / 82 & 84 / 93 / 93 & 99 / 100 / 100 & 74 / 85 / 85 & 6.4 & 1.3 & 4.9 \\
16 & ConcordMark & \textbf{100} / \textbf{100} / \textbf{100} & 65 / 79 / 79 & \textbf{89} / \textbf{94} / \textbf{94} & 99 / 100 / 100 & \textbf{75} / 84 / 84 & 5.5 & \textbf{1.2} & \textbf{4.3} \\
\addlinespace[2pt]
\multicolumn{10}{@{}l}{\textbf{\gemini (high)}} \\
1 & OmniSeal & \textbf{100} / \textbf{100} / \textbf{100} & \textbf{18} / \textbf{41} / \textbf{41} & 34 / 58 / 58 & 91 / 97 / 97 & \textbf{45} / 62 / 62 & 6.8 & 5.2 & 16 \\
2 & OmniSeal & \textbf{100} / \textbf{100} / \textbf{100} & 16 / 40 / 40 & 34 / 57 / 57 & \textbf{93} / \textbf{98} / \textbf{98} & 44 / 64 / 64 & 7.4 & 5.0 & 15 \\
3 & OmniSeal & \textbf{100} / \textbf{100} / \textbf{100} & 16 / 40 / 40 & 32 / 58 / 58 & 92 / 98 / 98 & 44 / 63 / 63 & 5.6 & 4.9 & 15 \\
4 & OmniSeal & 100 / \textbf{100} / \textbf{100} & 16 / 40 / 40 & \textbf{35} / \textbf{61} / \textbf{61} & 92 / 97 / 97 & 42 / \textbf{64} / \textbf{64} & 6.9 & 5.3 & 16 \\
5 & OmniSeal-G0 & 100 / \textbf{100} / \textbf{100} & 14 / 37 / 37 & 32 / 58 / 58 & 90 / 96 / 96 & 41 / 63 / 63 & 7.6 & 4.9 & 15 \\
6 & OmniSeal-K40 & 100 / \textbf{100} / \textbf{100} & 15 / 37 / 37 & 31 / 58 / 58 & 91 / 97 / 97 & 38 / 59 / 59 & 11 & 5.7 & 17 \\
7 & OmniSeal-K45 & \textbf{100} / \textbf{100} / \textbf{100} & 16 / 40 / 40 & 34 / 58 / 58 & 92 / 98 / 98 & 42 / 62 / 62 & 9.1 & 5.3 & 16 \\
8 & OmniSeal-K35 & 100 / 100 / 100 & 13 / 36 / 36 & 29 / 56 / 56 & 92 / 98 / 98 & 38 / 60 / 60 & 14 & 6.0 & 17 \\
9 & OmniSeal-W05 & 100 / \textbf{100} / \textbf{100} & 17 / 39 / 39 & 31 / 57 / 57 & 92 / 98 / 98 & 41 / 61 / 61 & \textbf{5.1} & 5.1 & 16 \\
10 & OmniSeal-C4 & \textbf{100} / \textbf{100} / \textbf{100} & 5 / 16 / 16 & 14 / 34 / 34 & 85 / 94 / 94 & 25 / 45 / 45 & 6.1 & \textbf{3.7} & \textbf{11} \\
11 & OmniSeal-K30 & 100 / \textbf{100} / \textbf{100} & 12 / 32 / 32 & 26 / 53 / 53 & 92 / 98 / 98 & 38 / 60 / 60 & 18 & 7.0 & 19 \\
\bottomrule
\end{tabular}%
}
\end{table}

\begin{table}[t]
\centering
\caption{Watermark unit and context seeding~(\cref{app:components:unit,app:components:context}). Each cell reports identity grouping / \texttt{Lexical group}.}
\label{tab:full-components-context}
\scriptsize
\setlength{\tabcolsep}{4pt}
\resizebox{\linewidth}{!}{%
\begin{tabular}{@{}lrrrrrrrr@{}}
\toprule
 & & \multicolumn{4}{c}{Robustness TPR@1  $\uparrow$} & \multicolumn{3}{c}{Quality deviation (\%) $\downarrow$} \\
\cmidrule(lr){3-6}\cmidrule(l){7-9}
Variant & Clean TPR@1 $\uparrow$ & Deletion & Substitution & Back-trans. & Paraphrase & $\Delta$PPL & $\Delta$SB-2 & $\Delta$SB-3 \\
\midrule
Core (fixed $k=2$, \texttt{Gumbel race}, \texttt{Gamma}) & 100 / 100 & 56 / 54 & 67 / 76 & 97 / 98 & 64 / 68 & 0.1 / 1.1 & 4.5 / 4.6 & 16 / 16 \\
\addlinespace[2pt]
\multicolumn{9}{@{}l}{\textbf{Fixed-length}} \\
$k=0$ & 100 / 99 & 92 / 88 & 70 / 61 & 93 / 90 & 67 / 60 & 30 / 32 & 12 / 13 & 29 / 31 \\
$k=1$ & 99 / 99 & 57 / 51 & 51 / 55 & 89 / 89 & 50 / 49 & 2.7 / 2.5 & 5.5 / 4.1 & 14 / 11 \\
$k=3$ & 100 / 100 & 25 / 25 & 42 / 60 & 96 / 97 & 52 / 54 & 0.9 / 0.4 & 3.1 / 2.1 & 11 / 8.1 \\
\addlinespace[2pt]
\multicolumn{9}{@{}l}{\textbf{Unordered}} \\
Pair $\{\omega_{t-1}, v\}$ & 100 / 100 & 91 / 90 & 91 / 94 & 99 / 99 & 82 / 85 & 0.2 / 0.0 & 13 / 12 & 34 / 32 \\
$k=2$ & 100 / 100 & 58 / 56 & 68 / 79 & 97 / 98 & 62 / 67 & 0.6 / 1.0 & 4.1 / 3.7 & 15 / 14 \\
$k=3$ & 100 / 100 & 22 / 21 & 43 / 56 & 97 / 97 & 50 / 51 & 0.0 / 0.9 & 2.5 / 2.2 & 9.5 / 8.0 \\
\addlinespace[2pt]
\multicolumn{9}{@{}l}{\textbf{Adaptive and global length}} \\
\texttt{Adaptive length} & 100 / 100 & 73 / 70 & 77 / 79 & 98 / 98 & 71 / 71 & 2.3 / 1.9 & 5.7 / 5.4 & 17 / 16 \\
\quad + \texttt{Excess} detector & 100 / -- & 72 / -- & 78 / -- & 98 / -- & 72 / -- & 1.0 / -- & 5.6 / -- & 17 / -- \\
\texttt{Global + local} & 100 / 100 & 88 / 85 & 76 / 77 & 96 / 96 & 70 / 69 & 16 / 18 & 10 / 9.2 & 25 / 23 \\
\texttt{Global first use} & 100 / 100 & 90 / 85 & 76 / 80 & 97 / 98 & 72 / 72 & 15 / 16 & 13 / 11 & 34 / 28 \\
\addlinespace[2pt]
\multicolumn{9}{@{}l}{\textbf{Anchored}} \\
\texttt{Anchored} ($S_{\max}=4$) & 96 / 95 & 38 / 35 & 21 / 33 & 71 / 72 & 30 / 28 & 3.0 / 1.0 & 5.6 / 3.9 & 16 / 12 \\
\addlinespace[2pt]
\multicolumn{9}{@{}l}{\textbf{Suffix cascade and counting}} \\
\texttt{Suffix cascade} & 100 / 100 & 78 / 76 & 85 / 86 & 99 / 99 & 76 / 78 & 2.3 / 0.5 & 9.5 / 7.6 & 26 / 22 \\
\texttt{Tally} & 100 / 100 & 75 / 71 & 89 / 93 & 99 / 99 & 78 / 77 & 1.2 / 2.2 & 25 / 13 & 61 / 34 \\
\texttt{Ladder} $K=3$ & 100 / 100 & 72 / 72 & 80 / 90 & 98 / 99 & 75 / 73 & 4.6 / 1.5 & 25 / 14 & 61 / 36 \\
\texttt{Ladder} $K=3$ + floor & 100 / 100 & 76 / 76 & 81 / 91 & 99 / 99 & 73 / 75 & 2.8 / 1.1 & 25 / 13 & 61 / 35 \\
\texttt{Ladder} $K=4$ & 100 / 100 & 77 / 73 & 84 / 87 & 99 / 99 & 77 / 76 & 0.7 / 0.5 & 25 / 14 & 62 / 35 \\
\texttt{Ladder} $K=4$ + floor & 100 / 100 & 77 / 75 & 80 / 86 & 99 / 99 & 74 / 78 & 2.4 / 0.1 & 25 / 13 & 62 / 36 \\
\addlinespace[2pt]
\multicolumn{9}{@{}l}{\textbf{On \texttt{Signed sphere} + \texttt{ResidualRace}}} \\
Fixed-length $k=2$ & 100 / 100 & 33 / 35 & 47 / 63 & 96 / 95 & 49 / 53 & 0.7 / 1.0 & 0.4 / 0.4 & 0.4 / 0.3 \\
Fixed-length $k=1$ & 100 / 100 & 75 / 75 & 74 / 81 & 98 / 97 & 72 / 71 & 0.5 / 0.1 & 0.3 / 0.7 & 2.1 / 0.3 \\
Unordered pair & 100 / 100 & 72 / 76 & 72 / 82 & 97 / 97 & 67 / 72 & 1.2 / 1.0 & 0.1 / 0.1 & 1.7 / 0.9 \\
\texttt{Global + local} ($k=1$) & 100 / 100 & 86 / 84 & 74 / 76 & 96 / 94 & 68 / 65 & 0.5 / 0.7 & 0.6 / 0.8 & 2.5 / 2.1 \\
\texttt{Global first use} ($k=1$) & 100 / 100 & 89 / 85 & 78 / 81 & 96 / 96 & 73 / 70 & 0.7 / 1.8 & 0.1 / 0.3 & 1.1 / 1.4 \\
\bottomrule
\end{tabular}%
}
\end{table}

\begin{table}[t]
\centering
\caption{Watermark score~(\cref{app:components:score}). Each cell reports identity grouping / \texttt{Lexical group}.}
\label{tab:full-components-score}
\scriptsize
\setlength{\tabcolsep}{4pt}
\resizebox{\linewidth}{!}{%
\begin{tabular}{@{}lrrrrrrrr@{}}
\toprule
 & & \multicolumn{4}{c}{Robustness TPR@1  $\uparrow$} & \multicolumn{3}{c}{Quality deviation (\%) $\downarrow$} \\
\cmidrule(lr){3-6}\cmidrule(l){7-9}
Variant & Clean TPR@1 $\uparrow$ & Deletion & Substitution & Back-trans. & Paraphrase & $\Delta$PPL & $\Delta$SB-2 & $\Delta$SB-3 \\
\midrule
Core (fixed $k=2$, \texttt{Gumbel race}, \texttt{Gamma}) & 100 / 100 & 56 / 54 & 67 / 76 & 97 / 98 & 64 / 68 & 0.1 / 1.1 & 4.5 / 4.6 & 16 / 16 \\
\addlinespace[2pt]
\multicolumn{9}{@{}l}{\textbf{\texttt{Prelude}}} \\
On the core & 100 / 100 & 54 / 52 & 67 / 74 & 96 / 97 & 62 / 63 & 3.2 / 2.6 & 2.4 / 2.8 & 11 / 11 \\
On \texttt{Ladder} $K=4$ + floor & 100 / 100 & 74 / 70 & 78 / 84 & 99 / 98 & 74 / 73 & 3.2 / 1.0 & 6.2 / 5.3 & 18 / 15 \\
\addlinespace[2pt]
\multicolumn{9}{@{}l}{\textbf{\texttt{Gaussian copula}}} \\
$\rho=0.7$ & 99 / 100 & 23 / 20 & 29 / 39 & 83 / 84 & 34 / 37 & 0.3 / 0.4 & 0.3 / 0.6 & 1.7 / 3.5 \\
$\rho=0.8$ & 100 / 100 & 34 / 33 & 42 / 56 & 89 / 91 & 45 / 50 & 0.9 / 1.2 & 0.5 / 0.4 & 4.6 / 4.1 \\
$\rho=0.85$ & 100 / 100 & 36 / 38 & 50 / 63 & 93 / 95 & 49 / 55 & 0.1 / 3.0 & 1.4 / 0.9 & 6.5 / 5.5 \\
$\rho=0.9$ & 100 / 100 & 44 / 44 & 57 / 67 & 94 / 95 & 54 / 59 & 1.7 / 0.1 & 1.0 / 2.0 & 7.1 / 8.3 \\
$\rho=0.95$ & 100 / 100 & 52 / 52 & 63 / 71 & 97 / 97 & 62 / 62 & 0.6 / 1.4 & 2.2 / 2.5 & 9.7 / 10 \\
\addlinespace[2pt]
\multicolumn{9}{@{}l}{\textbf{\texttt{Gaussian copula} + common-pair filter}} \\
$\rho=0.7$ & 99 / 100 & 20 / 19 & 28 / 38 & 80 / 84 & 34 / 35 & 1.6 / 0.3 & 0.1 / 0.1 & 2.2 / 2.5 \\
$\rho=0.8$ & 100 / 100 & 32 / 35 & 39 / 52 & 90 / 92 & 43 / 48 & 1.1 / 0.7 & 0.3 / 0.9 & 3.5 / 4.6 \\
$\rho=0.85$ & 100 / 100 & 38 / 36 & 47 / 60 & 92 / 92 & 49 / 53 & 1.3 / 1.4 & 0.6 / 1.5 & 4.7 / 6.7 \\
$\rho=0.9$ & 100 / 100 & 44 / 43 & 50 / 64 & 94 / 95 & 53 / 55 & 0.3 / 0.8 & 0.8 / 1.5 & 5.9 / 7.0 \\
$\rho=0.95$ & 100 / 100 & 48 / 50 & 59 / 72 & 96 / 98 & 61 / 64 & 0.1 / 1.0 & 1.2 / 2.3 & 7.3 / 9.4 \\
$\rho=1$ & 100 / 100 & 57 / 56 & 64 / 75 & 97 / 98 & 63 / 68 & 2.4 / 1.6 & 3.3 / 4.0 & 13 / 14 \\
\addlinespace[2pt]
\multicolumn{9}{@{}l}{\textbf{\texttt{Private sampling}}} \\
$r=0.1$ & 100 / 100 & 47 / 45 & 58 / 67 & 95 / 97 & 57 / 62 & 0.9 / 1.3 & 2.2 / 2.6 & 9.7 / 10 \\
$r=0.2$ & 100 / 100 & 42 / 38 & 52 / 62 & 93 / 93 & 49 / 57 & 2.5 / 1.6 & 1.1 / 1.5 & 6.6 / 7.1 \\
$r=0.3$ & 99 / 99 & 30 / 30 & 39 / 52 & 87 / 89 & 44 / 48 & 4.0 / 3.5 & 0.6 / 0.9 & 4.6 / 5.1 \\
\addlinespace[2pt]
\multicolumn{9}{@{}l}{\textbf{\texttt{Tail indicator}}} \\
\texttt{Tail indicator} & 99 / 99 & 27 / 26 & 42 / 48 & 84 / 86 & 39 / 42 & 2.2 / 1.3 & 0.3 / 0.7 & 4.0 / 4.6 \\
\addlinespace[2pt]
\multicolumn{9}{@{}l}{\textbf{Independent \texttt{Channels}}} \\
$L=4$ & 100 / 100 & 51 / 47 & 60 / 71 & 96 / 97 & 60 / 61 & 2.0 / 1.4 & 2.3 / 1.9 & 8.1 / 6.9 \\
$L=16$ & 100 / 100 & 39 / 38 & 56 / 63 & 96 / 94 & 48 / 52 & 2.3 / 0.2 & 0.3 / 0.0 & 2.1 / 1.0 \\
$L=32$ & 100 / 100 & 36 / 34 & 48 / 62 & 94 / 95 & 46 / 50 & 0.0 / 1.1 & 0.5 / 0.3 & 0.4 / 0.2 \\
$L=64$ & 100 / 100 & 32 / 30 & 46 / 57 & 92 / 94 & 42 / 49 & 0.3 / 1.5 & 0.6 / 0.5 & 0.8 / 0.7 \\
$L=128$ & 100 / 100 & 29 / 28 & 41 / 57 & 91 / 92 & 42 / 48 & 0.8 / 0.4 & 0.4 / 0.6 & 0.7 / 1.3 \\
\addlinespace[2pt]
\multicolumn{9}{@{}l}{\textbf{Correlated \texttt{Channels}}} \\
\texttt{Latin} $L=4$ & 100 / 100 & 50 / 48 & 64 / 72 & 96 / 97 & 56 / 61 & 2.0 / 2.9 & 1.6 / 1.5 & 6.5 / 5.8 \\
\texttt{Antithetic} $L=4$ & 100 / 100 & 46 / 43 & 58 / 70 & 97 / 95 & 60 / 59 & 0.3 / 0.4 & 1.6 / 0.6 & 6.6 / 4.3 \\
\texttt{Latin} $L=16$ & 100 / 100 & 40 / 40 & 51 / 64 & 95 / 96 & 50 / 52 & 0.4 / 0.6 & 0.2 / 0.1 & 0.5 / 1.2 \\
\texttt{Stratified} $L=16$ & 100 / 100 & 40 / 37 & 52 / 64 & 94 / 94 & 50 / 48 & 1.0 / 1.7 & 0.1 / 0.3 & 1.0 / 0.0 \\
\addlinespace[2pt]
\multicolumn{9}{@{}l}{\textbf{\texttt{Gaussian direction} + \texttt{Gumbel race}}} \\
$d=1$ & 100 / 100 & 36 / 37 & 50 / 59 & 95 / 95 & 52 / 53 & 1.4 / 0.5 & 2.8 / 2.4 & 11 / 9.0 \\
$d=2$ & 100 / 100 & 28 / 28 & 41 / 50 & 92 / 93 & 46 / 47 & 0.2 / 0.5 & 0.5 / 0.6 & 3.5 / 3.1 \\
$d=4$ & 100 / 100 & 21 / 18 & 31 / 44 & 89 / 90 & 37 / 38 & 2.3 / 1.4 & 0.8 / 0.5 & 0.6 / 0.3 \\
$d=8$ & 100 / 100 & 15 / 15 & 24 / 37 & 83 / 86 & 31 / 33 & 1.8 / 1.0 & 0.1 / 0.7 & 0.0 / 0.7 \\
\addlinespace[2pt]
\multicolumn{9}{@{}l}{\textbf{\texttt{Gaussian direction} + \texttt{ResidualRace}}} \\
$d=1$ & 100 / 100 & 44 / 43 & 58 / 72 & 97 / 98 & 55 / 62 & 1.5 / 2.1 & 5.4 / 3.6 & 18 / 13 \\
$d=2$ & 100 / 100 & 34 / 33 & 48 / 65 & 96 / 96 & 51 / 55 & 0.9 / 1.1 & 1.1 / 1.1 & 5.1 / 5.1 \\
$d=4$ & 100 / 100 & 29 / 27 & 39 / 53 & 94 / 95 & 45 / 50 & 0.2 / 0.8 & 0.2 / 0.3 & 1.3 / 0.7 \\
$d=8$ & 100 / 100 & 20 / 20 & 32 / 48 & 91 / 92 & 38 / 41 & 1.5 / 2.7 & 0.9 / 0.6 & 0.9 / 0.3 \\
\addlinespace[2pt]
\multicolumn{9}{@{}l}{\textbf{\texttt{Spherical} + \texttt{Gumbel race}}} \\
\texttt{Signed sphere} & 100 / 100 & 29 / 28 & 41 / 57 & 90 / 94 & 43 / 46 & 2.2 / 1.2 & 0.9 / 1.0 & 1.6 / 1.9 \\
\texttt{Unoriented axis} & 100 / 100 & 30 / 27 & 41 / 56 & 91 / 92 & 45 / 45 & 2.4 / 0.5 & 0.2 / 0.5 & 0.2 / 0.7 \\
\texttt{Complex ray} & 100 / 100 & 24 / 23 & 34 / 50 & 88 / 91 & 39 / 42 & 0.0 / 1.6 & 1.5 / 0.5 & 2.7 / 0.8 \\
\texttt{Rotation} & 100 / 100 & 25 / 24 & 39 / 51 & 89 / 91 & 39 / 45 & 1.1 / 2.1 & 0.9 / 1.1 & 1.5 / 2.3 \\
\addlinespace[2pt]
\multicolumn{9}{@{}l}{\textbf{\texttt{Spherical} + \texttt{ResidualRace}}} \\
\texttt{Signed sphere} & 100 / 100 & 33 / 35 & 47 / 63 & 96 / 95 & 49 / 53 & 0.7 / 1.0 & 0.4 / 0.4 & 0.4 / 0.3 \\
\texttt{Unoriented axis} & 100 / 100 & 34 / 33 & 48 / 62 & 94 / 95 & 46 / 53 & 3.3 / 0.1 & 0.6 / 0.7 & 1.1 / 1.1 \\
\texttt{Complex ray} & 100 / 100 & 29 / 28 & 44 / 56 & 93 / 94 & 45 / 48 & 0.4 / 0.0 & 1.1 / 1.0 & 2.1 / 1.7 \\
\texttt{Rotation} & 100 / 100 & 33 / 32 & 47 / 63 & 94 / 95 & 47 / 52 & 1.0 / 1.2 & 0.4 / 0.9 & 0.5 / 1.7 \\
\bottomrule
\end{tabular}%
}
\end{table}

\begin{table}[p]
\centering
\caption{Logits transformation~(\cref{app:components:transform}). Each cell reports identity grouping / \texttt{Lexical group}.}
\label{tab:full-components-transform}
\scriptsize
\setlength{\tabcolsep}{4pt}
\resizebox{\linewidth}{!}{%
\begin{tabular}{@{}lrrrrrrrr@{}}
\toprule
 & & \multicolumn{4}{c}{Robustness TPR@1  $\uparrow$} & \multicolumn{3}{c}{Quality deviation (\%) $\downarrow$} \\
\cmidrule(lr){3-6}\cmidrule(l){7-9}
Variant & Clean TPR@1 $\uparrow$ & Deletion & Substitution & Back-trans. & Paraphrase & $\Delta$PPL & $\Delta$SB-2 & $\Delta$SB-3 \\
\midrule
Core (fixed $k=2$, \texttt{Gumbel race}, \texttt{Gamma}) & 100 / 100 & 56 / 54 & 67 / 76 & 97 / 98 & 64 / 68 & 0.1 / 1.1 & 4.5 / 4.6 & 16 / 16 \\
\addlinespace[2pt]
\multicolumn{9}{@{}l}{\textbf{Races with private randomness}} \\
\texttt{BandRace} & 98 / 98 & 16 / 16 & 20 / 29 & 73 / 75 & 29 / 29 & 2.0 / 0.7 & 0.3 / 0.1 & 1.7 / 2.5 \\
\texttt{PoolRace} ($K=16$) & 100 / 100 & 33 / 27 & 44 / 50 & 93 / 92 & 46 / 52 & 3.2 / 0.2 & 0.8 / 1.0 & 5.3 / 5.6 \\
\texttt{MomentBridge} ($K=8$) & 88 / 88 & 3 / 4 & 6 / 7 & 37 / 36 & 8 / 11 & 1.9 / 1.2 & 0.5 / 0.8 & 0.2 / 0.7 \\
\addlinespace[2pt]
\multicolumn{9}{@{}l}{\textbf{Quantile}} \\
\texttt{QuantileRace} & 93 / 91 & 4 / 5 & 7 / 12 & 44 / 46 & 12 / 13 & 1.4 / 0.3 & 2.6 / 1.6 & 10 / 7.5 \\
\quad + \texttt{Wheel} & 85 / 86 & 0 / 0 & 1 / 2 & 19 / 21 & 3 / 4 & 2.7 / 0.8 & 0.5 / 0.7 & 1.3 / 1.4 \\
\addlinespace[2pt]
\multicolumn{9}{@{}l}{\textbf{Transport ($L=K$ independent channels)}} \\
\texttt{TransportRace} $K=32$ & 99 / 99 & 14 / 11 & 26 / 35 & 83 / 85 & 31 / 32 & 0.3 / 0.5 & 0.3 / 0.6 & 0.7 / 1.4 \\
\texttt{TransportRace} $K=128$ & 100 / 100 & 23 / 20 & 32 / 48 & 87 / 88 & 38 / 40 & 1.8 / 0.9 & 0.8 / 1.2 & 2.1 / 2.5 \\
\texttt{QuotaTransport} $K=32$ & 99 / 100 & 19 / 18 & 30 / 40 & 87 / 87 & 37 / 35 & 2.2 / 0.2 & 1.1 / 1.0 & 2.4 / 1.9 \\
\texttt{QuotaTransport} $K=128$ & 100 / 100 & 22 / 20 & 34 / 48 & 88 / 90 & 41 / 39 & 0.3 / 0.3 & 0.6 / 1.4 & 1.4 / 2.8 \\
\addlinespace[2pt]
\multicolumn{9}{@{}l}{\textbf{Residual clocks}} \\
\texttt{ResidualRace} & 100 / 100 & 63 / 63 & 72 / 82 & 98 / 99 & 70 / 74 & 1.5 / 0.9 & 9.0 / 6.3 & 27 / 21 \\
\addlinespace[2pt]
\multicolumn{9}{@{}l}{\textbf{Skeleton}} \\
\texttt{Last important} + \texttt{Content gate} & 99 / 98 & 54 / 47 & 26 / 30 & 84 / 80 & 46 / 44 & 0.5 / 0.6 & 1.0 / 0.1 & 3.0 / 1.0 \\
\texttt{Interleaved} + \texttt{Interleaved gate} & 99 / 98 & 49 / 45 & 42 / 46 & 88 / 85 & 48 / 46 & 0.3 / 1.3 & 2.3 / 1.4 & 6.0 / 4.1 \\
\bottomrule
\end{tabular}%
}
\end{table}

\begin{table}[p]
\centering
\caption{Detection~(\cref{app:components:detection}), on fixed generated texts: only the detector changes, so the quality is that of the generation. Each cell reports identity grouping / \texttt{Lexical group}.}
\label{tab:full-components-detection}
\scriptsize
\setlength{\tabcolsep}{4pt}
\resizebox{\linewidth}{!}{%
\begin{tabular}{@{}lrrrrrrrr@{}}
\toprule
 & & \multicolumn{4}{c}{Robustness TPR@1  $\uparrow$} & \multicolumn{3}{c}{Quality deviation (\%) $\downarrow$} \\
\cmidrule(lr){3-6}\cmidrule(l){7-9}
Variant & Clean TPR@1 $\uparrow$ & Deletion & Substitution & Back-trans. & Paraphrase & $\Delta$PPL & $\Delta$SB-2 & $\Delta$SB-3 \\
\midrule
\multicolumn{9}{@{}l}{\textbf{Core texts, no forward pass}} \\
\texttt{Gamma} & 100 / 100 & 56 / 54 & 67 / 76 & 97 / 98 & 64 / 68 & 0.1 / 1.1 & 4.5 / 4.6 & 16 / 16 \\
\texttt{Gaussian} & 100 / 100 & 40 / 40 & 51 / 64 & 95 / 96 & 54 / 57 & 0.1 / 1.1 & 4.5 / 4.6 & 16 / 16 \\
\texttt{Clustering} & 100 / 100 & 57 / 55 & 67 / 77 & 97 / 98 & 64 / 68 & 0.1 / 1.1 & 4.5 / 4.6 & 16 / 16 \\
\texttt{Excess} & 100 / 100 & 57 / 55 & 68 / 77 & 97 / 98 & 64 / 68 & 0.1 / 1.1 & 4.5 / 4.6 & 16 / 16 \\
\addlinespace[2pt]
\multicolumn{9}{@{}l}{\textbf{Core texts, forward pass}} \\
\texttt{Weighted Gamma} & 100 / 100 & 59 / 56 & 73 / 81 & 99 / 100 & 73 / 79 & 0.1 / 1.1 & 4.5 / 4.6 & 16 / 16 \\
\texttt{Position evidence} & 100 / 100 & 57 / 56 & 70 / 80 & 98 / 99 & 68 / 74 & 0.1 / 1.1 & 4.5 / 4.6 & 16 / 16 \\
\quad + weighting & 100 / 100 & 60 / 60 & 76 / 85 & 99 / 100 & 76 / 82 & 0.1 / 1.1 & 4.5 / 4.6 & 16 / 16 \\
\addlinespace[2pt]
\multicolumn{9}{@{}l}{\textbf{Independent \texttt{Channels}}} \\
$L=4$, Bonferroni & 100 / 100 & 51 / 47 & 60 / 71 & 96 / 97 & 60 / 61 & 2.0 / 1.4 & 2.3 / 1.9 & 8.1 / 6.9 \\
$L=4$, \v{S}id\'ak & 100 / 100 & 51 / 47 & 60 / 71 & 96 / 97 & 60 / 61 & 2.0 / 1.4 & 2.3 / 1.9 & 8.1 / 6.9 \\
$L=16$, Bonferroni & 100 / 100 & 39 / 38 & 56 / 63 & 96 / 94 & 48 / 52 & 2.3 / 0.2 & 0.3 / 0.0 & 2.1 / 1.0 \\
$L=16$, \v{S}id\'ak & 100 / 100 & 39 / 38 & 56 / 63 & 96 / 94 & 48 / 52 & 2.3 / 0.2 & 0.3 / 0.0 & 2.1 / 1.0 \\
$L=32$, Bonferroni & 100 / 100 & 36 / 34 & 48 / 62 & 94 / 95 & 46 / 50 & 0.0 / 1.1 & 0.5 / 0.3 & 0.4 / 0.2 \\
$L=32$, \v{S}id\'ak & 100 / 100 & 36 / 34 & 48 / 62 & 94 / 95 & 46 / 50 & 0.0 / 1.1 & 0.5 / 0.3 & 0.4 / 0.2 \\
$L=64$, Bonferroni & 100 / 100 & 32 / 30 & 46 / 57 & 92 / 94 & 42 / 49 & 0.3 / 1.5 & 0.6 / 0.5 & 0.8 / 0.7 \\
$L=64$, \v{S}id\'ak & 100 / 100 & 32 / 30 & 46 / 57 & 92 / 94 & 42 / 49 & 0.3 / 1.5 & 0.6 / 0.5 & 0.8 / 0.7 \\
$L=128$, Bonferroni & 100 / 100 & 29 / 28 & 41 / 57 & 91 / 92 & 42 / 48 & 0.8 / 0.4 & 0.4 / 0.6 & 0.7 / 1.3 \\
$L=128$, \v{S}id\'ak & 100 / 100 & 29 / 28 & 41 / 57 & 91 / 92 & 42 / 48 & 0.8 / 0.4 & 0.4 / 0.6 & 0.7 / 1.3 \\
\addlinespace[2pt]
\multicolumn{9}{@{}l}{\textbf{\texttt{Signed sphere} + \texttt{ResidualRace} texts}} \\
\texttt{Predictive Gamma} & 100 / 100 & 33 / 35 & 47 / 63 & 96 / 95 & 49 / 53 & 0.7 / 1.0 & 0.4 / 0.4 & 0.4 / 0.3 \\
\texttt{Resultant length} & 100 / 100 & 19 / 19 & 27 / 43 & 91 / 92 & 37 / 40 & 0.7 / 1.0 & 0.4 / 0.4 & 0.4 / 0.3 \\
\texttt{Spherical caps} & 100 / 100 & 22 / 24 & 31 / 50 & 89 / 91 & 37 / 40 & 0.7 / 1.0 & 0.4 / 0.4 & 0.4 / 0.3 \\
Length + caps & 100 / 100 & 24 / 26 & 35 / 53 & 92 / 93 & 40 / 44 & 0.7 / 1.0 & 0.4 / 0.4 & 0.4 / 0.3 \\
\texttt{Anchor distance} & 100 / 100 & 32 / 32 & 43 / 59 & 94 / 94 & 44 / 49 & 0.7 / 1.0 & 0.4 / 0.4 & 0.4 / 0.3 \\
\addlinespace[2pt]
\multicolumn{9}{@{}l}{\textbf{\texttt{Gaussian direction} + \texttt{Gumbel race} texts}} \\
\texttt{Gaussian norm} $d=1$ & 100 / 100 & 36 / 37 & 50 / 59 & 95 / 95 & 52 / 53 & 1.4 / 0.5 & 2.8 / 2.4 & 11 / 9.0 \\
\texttt{Residual energy} $d=1$ & 100 / 100 & 50 / 51 & 65 / 70 & 97 / 97 & 61 / 63 & 1.4 / 0.5 & 2.8 / 2.4 & 11 / 9.0 \\
\texttt{Gaussian norm} $d=2$ & 100 / 100 & 28 / 28 & 41 / 50 & 92 / 93 & 46 / 47 & 0.2 / 0.5 & 0.5 / 0.6 & 3.5 / 3.1 \\
\texttt{Residual energy} $d=2$ & 100 / 100 & 36 / 36 & 49 / 58 & 93 / 94 & 51 / 52 & 0.2 / 0.5 & 0.5 / 0.6 & 3.5 / 3.1 \\
\texttt{Circle search} $d=2$ & 100 / 100 & 43 / 41 & 55 / 65 & 96 / 96 & 56 / 57 & 0.2 / 0.5 & 0.5 / 0.6 & 3.5 / 3.1 \\
\texttt{Gaussian norm} $d=4$ & 100 / 100 & 21 / 18 & 31 / 44 & 89 / 90 & 37 / 38 & 2.3 / 1.4 & 0.8 / 0.5 & 0.6 / 0.3 \\
\texttt{Residual energy} $d=4$ & 100 / 100 & 24 / 22 & 35 / 47 & 89 / 90 & 40 / 40 & 2.3 / 1.4 & 0.8 / 0.5 & 0.6 / 0.3 \\
\texttt{Gaussian norm} $d=8$ & 100 / 100 & 15 / 15 & 24 / 37 & 83 / 86 & 31 / 33 & 1.8 / 1.0 & 0.1 / 0.7 & 0.0 / 0.7 \\
\texttt{Residual energy} $d=8$ & 100 / 100 & 16 / 14 & 24 / 36 & 82 / 85 & 31 / 31 & 1.8 / 1.0 & 0.1 / 0.7 & 0.0 / 0.7 \\
\bottomrule
\end{tabular}%
}
\end{table}

\paragraph{Setup}
We use the private evaluation of~\cref{sec:eval}: for each metric, we report the worst result over the $5$ watermark keys, each evaluated on $1000$ replies.
We evaluate every scheme under the three p-value corrections of~\cref{app:soundness_key}: \emph{guaranteed} (p-values multiplied by $1/\beta = 20$), \emph{empirical} (smallest multiplier $\kappa \ge 1$ that passes our soundness tests,~\cref{fig:soundness_key_multipliers}), and \emph{no correction} ($\kappa = 1$).
As in~\cref{app:soundness_key}, we first remove the outer multiplier chosen by the agent (if any) and keep the rest of the detector unchanged.
For each watermark key, all settings share the same generated replies and the same attacked texts: only the multiplier applied to the p-values changes.
Hence, the quality columns do not depend on the setting.

\paragraph{Reading the table}
Each TPR cell reports the guaranteed / empirical / no-correction results, and the first column gives the submission index (\ie the x-axis of~\cref{fig:research-trajectories}).
Parentheses mark the results without correction of the schemes that then fail our soundness given a key test (\ie whose empirical multiplier is above $1$).
Note that~\cref{fig:research-trajectories} instead uses the multiplier chosen by each agent: $20$ for \astra (except for the first two submissions of the low variant), between $1$ and $6$ for \opus, and none for \gemini and the baselines.
Moreover, as we evaluate the settings on newly generated replies (with the same watermark keys), the results can differ by a few points from those of~\cref{fig:research-trajectories}.

\subsection{Component Ablation}
\label{app:full_results:components}

\Cref{tab:full-components-context,tab:full-components-score,tab:full-components-transform,tab:full-components-detection} report all the configurations of the component ablation of~\cref{app:components}, following the same structure: watermark unit and context seeding, watermark score, logits transformation, and detection.

\paragraph{Setup}
We use the experimental setup of~\cref{app:components:prelim}: each configuration swaps one or a few components of the simple baseline scheme (identity grouping, \texttt{Fixed-length} context seeding with $k=2$, $\tilde{U} = U$, the \texttt{Gumbel race}, and the \texttt{Gamma} detector).
All configurations are evaluated on the private evaluation of~\cref{sec:eval} with a single watermark key, and their p-values are multiplied by $1/\beta = 20$ so that they are sound given a key~(\cref{sec:method:definition_validity}).
Unlike the bar plots of~\cref{app:components}, which show differences, we report absolute values with the same columns as~\cref{tab:selected-schemes}: the TPR@1\%FPR on clean text and under each attack, and the relative distance (in \%) to the unwatermarked model in perplexity and bigram and trigram Self-BLEU.

\paragraph{Reading the tables}
Each cell reports the result with the identity grouping (left) and with the \texttt{Lexical group}~(\cref{app:components:unit}) (right), so that comparing the two sides of a cell gives the effect of the watermark unit on that configuration.
The first row of each table is the baseline scheme, which every other configuration modifies.
In~\cref{tab:full-components-detection}, only the detector changes: each block rescores the exact same generated texts, so the quality columns are those of the corresponding generation.
A dash indicates a configuration that we only evaluated with the identity grouping.

\fi

\end{document}